\newcolumntype{+}{!{\vrule width 2pt}}
\newlength\savedwidth
\renewcommand{\@biblabel}[1]{\quad#1.}
\newcommand{\pr}{\mathbb{P}}
\renewcommand{\Pr}{\mathbb{P}}
\newcommand{\E}{\mathbb{E}}
\newcommand{\var}{\textnormal{Var}}
\newcommand{\cov}{\textnormal{Cov}}
\newcommand{\N}{\mathbb{N}}
\newcommand{\Vol}{\ensuremath{\text{Vol}}}
\newtheorem{theorem}{Theorem}[section]
\newtheorem{lemma}[theorem]{Lemma}
\newtheorem{proposition}[theorem]{Proposition}
\newtheorem{definition}[theorem]{Definition}
\theoremstyle{definition}
\begin{document}
\vspace*{0.2in}

\begin{flushleft}
{\Large
\textbf\newline{Assortativity in geometric and scale-free networks} 
}
\newline
\\
Marc Kaufmann\textsuperscript{1}\Yinyang,
Ulysse Schaller\textsuperscript{1}\Yinyang,
Thomas Bläsius\textsuperscript{2}\ddag,
Johannes Lengler\textsuperscript{2,*}\ddag
\\
\bigskip
\textbf{1} Institut für Theoretische Informatik, ETH Zürich, Zürich, Switzerland
\\
\textbf{2} Institut für Theoretische Informatik, Karlsruhe Institute of Technology, Karlsruhe, Germany
\\
\bigskip

\Yinyang These authors contributed equally to this work.

\ddag Shared senior authorship.

* johannes.lengler@inf.ethz.ch

\end{flushleft}

\section*{Abstract}
The assortative behavior of a network is the tendency of similar (or dissimilar) nodes to connect to each other. This tendency can have an influence on various properties of the network, such as its robustness or the dynamics of spreading processes. In this paper, we study degree assortativity both in real-world networks and in several generative models for networks with heavy-tailed degree distribution based on latent spaces. In particular, we study Chung-Lu Graphs and Geometric Inhomogeneous Random Graphs (GIRGs). 

Previous research on assortativity has primarily focused on measuring the degree assortativity in real-world networks using the Pearson assortativity coefficient, despite reservations against this coefficient. We rigorously confirm these reservations by mathematically proving that the Pearson assortativity coefficient does not measure assortativity in any network with sufficiently heavy-tailed degree distributions, which is typical for real-world networks. Moreover, we find that other single-valued assortativity coefficients also do not sufficiently capture the wiring preferences of nodes, which often vary greatly by node degree. We therefore take a more fine-grained approach, analyzing a wide range of conditional and joint weight and degree distributions of connected nodes, both numerically in real-world networks and mathematically in the generative graph models. We provide several methods of visualizing the results.

We show that the generative models are assortativity-neutral, while many real-world networks are not. Therefore, we also propose an extension of the GIRG model which retains the manifold desirable properties induced by the degree distribution and the latent space, but also exhibits tunable assortativity. We analyze the resulting model mathematically, and give a fine-grained quantification of its assortativity.

\section*{Author summary}

The degree of a node in a network, that is, the number of other nodes it is connected to, is a simple measure of ``importance'' within the network. Whether nodes connect predominantly to similarly important ``peers'', or whether there exist hierarchies where less important nodes connect to much more important nodes, has far-reaching consequences for processes that occur within networks as well as for the underlying network structure. This property is often called ``assortativity'' and typically measured by computing a single numerical value for a network. A lot of information is lost in this process and, to make matters worse, the most widespread way of computing this value has severe statistical flaws. We provide new evidence of these flaws and instead propose a local approach to measuring assortativity, which studies how the degree distribution of a node's neighbors changes with this node's degree. We further propose a ``tunable'' model, which allows to adjust the wiring preferences of nodes based on the degrees of potential neighbors, while at the same time capturing many established structural properties of real networks.
We evaluate our new assortativity measure both on real-world networks and theoretical network models including our new tunable models. 

\section{Introduction}\label{sec:intro}

The study of network topologies has been a focal point of network science since its inception. A central feature of social and other networks is the \emph{assortativity} or \emph{homophily} of its nodes.\footnote{In keeping with the common naming conventions in graph theory, in the more technical sections we often use the terms \textit{node} and \textit{vertex} interchangeably.} This is the tendency of nodes to form connections with other nodes that have similar properties. For example, a member of a social network will have many connections with people who share profession, hobbies, or kinship with her. 

While this feature is interesting in its own right and has been the subject of intensive studies~\cite{newman2002assortative, newman2003mixing,dagostino2012robustness,zhou2012robustness,duh2019assortativity,bialonski2013assortative, newman2003structure,chang2007assortativity}, it can also be utilized in order to construct artificial networks whose structure mirrors the structure of a large class of real-world networks. The idea is to use a latent space in which the nodes are located. The purpose of this latent space is to model hidden features of the nodes, like interests or profession for social networks, but also geographical distance. Distances in the latent space correspond to similarity of the nodes. Each node also randomly draws an (approximate) degree, which we will refer to as \emph{weight}. This is drawn from a heavy-tailed distribution as they are observed in social networks \cite{barabasi1999emergence}, typically from a power law. Then, nodes form connections, typically referred to as \textit{edges}, based on their distances and their weights. Sometimes, the models have a \emph{temperature} parameter which also allows them to form \emph{weak ties} \cite{granovetter1973strength}, i.e., nodes can randomly form edges even if not mandated by their distance and weights. This paradigm for generating artificial complex networks has been introduced and re-discovered by many different research communities, under names like \emph{Geographical Threshold Models}~\cite{masuda2005geographical}, \emph{Hyperbolic Random Graphs}~\cite{krioukov2010hyperbolic}, \emph{Scale-Free Percolation}~\cite{deijfen2013scale}, and \emph{Geometric Inhomogeneous Random Graphs} (GIRGs)~\cite{bringmann2019geometric, bringmann2024average} and its variations like MCD-GIRGs\cite{lengler2017existence}. 

The resulting complex network models have been found to be very expressive. They do not only faithfully reproduce the structural features that are explicitly built in, like sparsity and a power-law degree distribution. They also show a large number of other emerging structural properties that have also been observed in real-world networks. Those include the component structure consisting of one giant connected component and many very small connected components~\cite{bringmann2024average}, ultra-small graph distances~\cite{bringmann2024average} (the \emph{small world property}~\cite{bringmann2024average, watts1998collective}), high clustering coefficients~\cite{bringmann2024average}, closeness and
betweenness centrality~\cite{dayan2024expressivity}, small vertex and edge separators\cite{lengler2017existence, kaufmann2024sublinear}, small entropy per edge~\cite{bringmann2019geometric}, high compressibility~\cite{bringmann2019geometric}, and a very rich structure of overlapping communities of all sizes~\cite{bringmann2019geometric}.

The network models could also be used to explain why some algorithms perform much better on real-world networks than in worst-case instances, including algorithms for computing graph distances, diameters, vertex covers, maximal cliques, and Louvain's algorithm for community detection~\cite{bläsius2024external}. We highlight in particular the surprising fact that bidirectional breadth-first search can find the graph distance (hop distance) between two nodes in time that is not only sublinear in the graph size, but that variations of the algorithm need to explore fewer nodes than the largest degree on a shortest connecting path~\cite{cerf2024balanced}. Another remarkable feature of these network models is that greedy routing schemes that rely only on local information work extremely well and manage to send packages along (almost) shortest paths between the source and the target node~\cite{boguna2010sustaining,bringmann2017greedy,blasius2018hyperbolic}. This mirrors the outcome of Stanley Milgram's small-world experiment, which became famous under the name \textit{six degrees of separation}
~\cite{milgram1967small}. The same complex network models have also been argued to provide a better fit for epidemiological spread of diseases than other models\cite{komjathy2024polynomial, komjathy2023four}, and have been used to evaluate potential intervention strategies~\cite{jorritsma2020not,koch2016bootstrap}. 
A particularly welcome feature is that the number of weak ties can be explicitly controlled in the models and influences many of the aforementioned results~\cite{jorritsma2020not,bläsius2024external,komjathy2024polynomial,komjathy2023four,dayan2024expressivity}, which reflects the central role of weak ties in many empirical studies~\cite{granovetter1973strength}.

All those similarities with real-world networks emerge without specifying which properties of the nodes are modeled by the various dimensions of the latent space. In fact, such a specification is often difficult, or plainly impossible. While it is sometimes possible to identify a few dimensions in social networks which affect edge formation, it is hardly possible to identify \emph{all} such dimensions, as would be necessary for completely fitting a model to a real network\footnote{In some cases reconstruction of the underlying topology has been possible, for example for the autonomous systems of the internet network~\cite{boguna2010sustaining}}
, but in general this is a very challenging task, and it has been estimated that the dimensionality of the unknown latent space is between 5 and 10 in many cases~\cite{friedrich2024real}. Even if some important dimensions are abstractly known, often the according data is not available for the individual nodes. Such information is often even absent for ``hard'' data like geographic location, and it is very rare for ``soft'' information like interests and hobbies. Fortunately, many structural insights about complex networks and processes on them can be obtained without such a tight fit of individual nodes~\cite{komjathy2023four, komjathy2024polynomial, koch2016bootstrap, bläsius2024external,cerf2024balanced}.

While the use of a latent geometric space to model assortative tendencies of the nodes has generally been very successful, the aforementioned network models explicitly exclude a specific type of assortativity, namely \emph{degree assortativity}. Node degree, defined as the number of incident edges, is a prominent feature. Degree assortativity measures whether nodes of low degree tend to have a disproportionately large fraction of their connections to other low-degree nodes, and conversely for high-degree nodes. In fact, most empirical studies on assortativity in network science focus on degree assortativity. This has the simple reason that degrees can be directly inferred from the network structure and are thus always available, while other properties of nodes are usually not available. For the rest of the paper, we will focus on degree assortativity.

Contrarily to most other node properties, positive assortativity (homophily) \emph{of degrees} is not universally found in complex networks. Some such networks also show disassortativity (heterophily) of degrees. A common rule of thumb is that social networks are often positively assortative, whereas biological and technological networks tend to show negative assortativity~\cite{newman2002assortative}.\footnote{Note that we can \emph{not} confirm that social networks are positively assortative with respect to the degrees. Rather, we find two opposing trends, a positive and a negative one, and either of the two can dominate. We discuss those findings in Section~\ref{sec:joint-distr-real}.} Describing the assortative tendencies of degrees in a network is a fundamental goal of network science in its own right, but they have also been shown to have far-reaching consequences on community structure, network robustness and a multitude of network dynamics:
\begin{itemize}
    \item Disassortative networks exhibit a higher epidemiological threshold, whereas assortative networks allow for longer periods of intervention before an epidemic spreads in the BTW sandpile model of distress propagation \cite{dagostino2012robustness}.
    \item Assortativity decreases network robustness \cite{zhou2012robustness}.
    \item Assortativity in multilayer networks - connecting hubs of one network with the hubs in another network - can enhance cooperation under adverse conditions \cite{duh2019assortativity}.
    \item Assortative degree correlations can strongly improve the sensitivity for weak stimuli in the brain \cite{schmeltzer2015degree}.
    \item Conversely, dynamical processes can also influence the assortativity of an underlying network \cite{bialonski2013assortative}.
    \item Degree-degree correlations may cause the failure of heuristic algorithms \cite{vazquez2003complexity}.
\end{itemize}

\subsection{Our Contributions: Informal Discussion}

This paper makes several key contributions of varying nature. They range from rigorous mathematical theorems to experimental results. Thus, their level of technicality varies substantially - and as a consequence they may appeal to different audiences. To make our paper accessible, we first discuss all our main results informally, avoiding technical details as far as possible and only providing intuition. Afterwards, we go through the results again. In this second pass, we are still brief and only present the key results, but we formulate them more rigorously and with more technical details, so that readers can opt to skip the technical parts if desired.

\paragraph{Negative result on Pearson assortativity coefficient.} As motivated above, we need good and robust ways to describe and understand assortativity of degrees. However, assortativity is a complex concept and is not easy to measure. The most common approach is to compute a single real-valued coefficient, whose sign then indicates whether the network is assortative (positive sign) or disassortative (negative sign). The most commonly used coefficient is the Pearson assortativity coefficient. However, it has been challenged whether this coefficient is a good measure for assortativity~\cite{litvak2013uncovering}. In this paper we give a negative answer in the strongest possible sense, which is our first main result. We show that for every network with a sufficiently heavy-tailed degree distribution, the Pearson assortativity coefficient is negative, regardless of the assortative behavior of the network. This confirms mathematically that this coefficient does not measure assortativity of degrees if the degree distribution is heavily skewed.

\paragraph{Other coefficients and a new lens on assortativity.} There are other coefficients than the Pearson assortativity coefficient which are better suited, in particular Spearman's and Kendall's rank correlation coefficients~\cite{litvak2013uncovering}. Another contribution of this paper is that we investigate those coefficients for a selection of networks from the KONECT database~\cite{kunegis2013konect}. However, we argue that, in many of these cases, reducing assortativity to a single number is too simplistic. Instead, we propose an alternative way to study assortativity. In a nutshell, we suggest to consider a random edge $e=\{u,v\}$, and to study how the degree distribution of $v$ changes when we condition on the degree of $u$. We explore several ways in which this information can be graphically illustrated. We will discuss all alternatives and the full technical details in Section~\ref{sec:experiments}, but already show an example here for illustration in Fig~\ref{fig:example-intro}. Further examples both for artificially generated networks and for real networks from the KONECT database can also be found in Section~\ref{sec:experiments}, as well as alternative ways for graphical illustration.

\begin{figure}[ht!]
  \centering
  \includegraphics{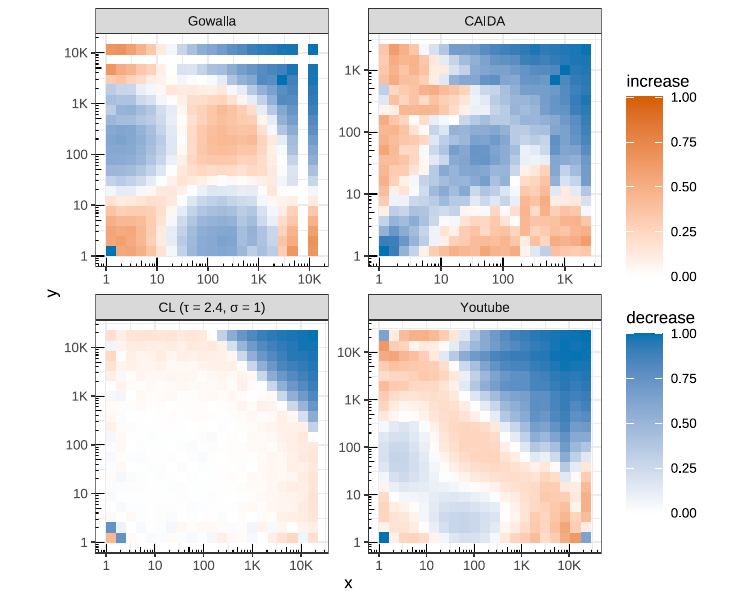}
  \caption{Four heatmaps displaying the effect on a random edge $e=\{u,v\}$ when conditioning on the degree of one endpoint. A red color at coordinate $(x,y)$ indicates that the information $\deg(u) = x$ makes it more likely that the other endpoint of $e$ has degree $\deg(v) = y$. A blue color shows that this becomes less likely. All networks follow a power-law degree distribution. Details on the networks are provided in Section~\ref{sec:experiments}. The \texttt{Gowalla} social network (top-left) shows mostly positive assortativity: the diagonal is red, so nodes of the same degree tend to connect to each other, while most regions elsewhere are blue. The \texttt{CAIDA} Autonomous Systems Network (top-right) shows strong negative assortativity.  Two networks (bottom row) with similar degree distributions. Both would be classified as assortativity-neutral by Spearman's and Kendall's assortativity coefficients because both coefficients are close to zero. However, while the artificially generated network (bottom-left) is indeed neutral (the blue triangle is inevitable in power-law networks, see Section~\ref{sec:intro-theoretical-results}), the \texttt{Youtube} Network (bottom-right) has both positive and negative assortative tendencies which cancel each other out. The heatmap is thus more informative than a single coefficient.}
  \label{fig:example-intro}
\end{figure}

\paragraph{Analyzing assortativity in generative models of complex networks.} As mentioned before, many complex network models with an underlying latent space do not include degrees in this latent space, but handle them separately.\footnote{Some models, like Hyperbolic Random Graphs, do include the weights into the latent hyperbolic space, but in a separable way. Our conclusion still applies to these models: they are mostly assortativity-neutral, and the spurious positive assortativity comes from the same source as for the GIRG model.} This allows to prescribe the degree distribution, for example as a power law. Thus, while the resulting networks are assortative with respect to the features modeled by the latent space, it has been unclear whether their degrees are also assortative. As our third main contribution, we answer this question for the GIRG model, one of the most prominent complex network models with an underlying geometry and an inhomogeneous degree distribution. We show that it is mostly, but not totally, \emph{assortativity-neutral}. Moreover, we can pinpoint exactly where the small (positive) assortativity comes from. It originates from random fluctuations in the number of nodes with similar properties. Consider that -- by chance -- there are more nodes than expected with some specific property, i.e., there are more nodes than expected in the same region $X$ of the latent space. Then the nodes in $X$ have higher degrees, but also their neighbors tend to be in $X$ and thus also have higher degree. This gives a small positive contribution to assortativity of degrees. The effect is relevant for nodes of very small degree, but is negligible for nodes of medium or large degree. If we remove this effect by removing the latent space, we show that the resulting \emph{Chung-Lu model}~\cite{chung2002average} is perfectly assortativity-neutral. Here we measure neutrality not just by a single real number, but in the strongest possible sense: we show that for a random edge $e=\{u,v\}$ the whole probability distribution of $\deg(v)$ is invariant under information about $\deg(u)$.\footnote{Except for a ceiling effect for extremely large degrees that we quantify precisely.} This can also be visually seen in Fig~\ref{fig:example-intro} (c), which displays such a Chung-Lu random graph. 

We provide a formal statement that pinpoints in which sense this neutrality is retained in GIRGs. More precisely, we show that the GIRG model is still perfectly assortativity-neutral with respect to \emph{weights}, but not to degrees. Weights are internal parameters of the model that are tightly, but not perfectly, coupled to the degrees. We can show that the only non-neutral contribution, a positive one, to assortativity of degrees comes from this slight mismatch between weights and degrees. If we isolate this effect by removing the degree inhomogeneity (since the effect is negligible for nodes of large degrees) and retaining the latent space, then we can exactly quantify the assortativity of the resulting \emph{Random Geometric Graph} model. Thus we can split the construction of latent space models like GIRGs into two steps, one of which is perfectly assortativity-neutral, while the other one gives a mild positive assortativity that we can quantify.

\paragraph{Tuning the assortativity of complex network models.} As our analysis shows, established models of complex networks such as GIRGs are mostly assortativity-neutral for degrees. This raises the question of how we can build models with a heavy-tailed degree distribution and a latent space (in order to maintain assortativity for \emph{other} properties than degrees), albeit with tunable assortativity of degrees. Our final contribution is to present exactly such a model. We show that we can slightly alter the connection probability of the GIRG model in such a way that assortativity of degrees can be tuned to be either positive or negative, while keeping the basic properties of the model intact. These properties include the tight coupling between weights and degrees, which maintains the prescribed degree distribution, but also a high clustering coefficient, small-world properties, and so on. As for the GIRG model, we can split the degree assortativity into two parts: a minor part coming from mismatches of weights and degrees of low-degree nodes, as before; and a major part that can be described via assortativity of the weights. We precisely quantify this latter part and show that the (positive or negative) assortativity of the model can indeed be tuned by a parameter.

\subsection{Our Contributions: Technical Summary}

In this section we elaborate on our contributions from a technical perspective. In the following, assortativity always refers to the degrees of the network's nodes. We denote a neighborhood relation between two nodes $u$ and $v$ by $u\sim v$.

\subsubsection{Networks and Network Models} 
We consider two types of networks, real-world networks and artificially generated networks. The real-world networks are taken from the KONECT database~\cite{kunegis2013konect}, where we restrict to undirected simple networks whose degrees follow a power law according to~\cite{voitalov2019scale}.

We focus on two generative models for complex networks. The first one is called \emph{Geometric Inhomogeneous Random Graphs (GIRGs)}. In this model, $n$ nodes are placed randomly in the latent space $[0,1]^d$ for some fixed dimension $d$.\footnote{Often, the torus topology is used, i.e., each dimension wraps around, since this removes boundary effects and makes the model more symmetric.} Weights are drawn from a power-law distribution with law $\Pr(W = w) = \Theta(w^{-\tau})$ for $w\ge 1$, where the power-law exponent $\tau$ is usually in the range $\tau\in (2,3)$. Afterwards, each pair of nodes $u,v$ connects independently with probability $\Theta(\min\{1,\tfrac{w_uw_v}{n\|x_u-x_v\|^d}\}^\alpha)$, where $w_u,w_v$ are the weights and $x_u,x_v$ the positions of $u$ and $v$ respectively, and the parameter $\alpha > 1$, sometimes called the inverse temperature of the model, modulates the proportion of weak ties in the network. The $\Theta(\cdot)$ allows for constant factor deviations. This formula turns out to give exactly the right scaling, as we explain in detail in Section~\ref{sec:basic-properties}.

The other complex network model we look at is the \emph{Chung-Lu random graph model}. It is similar to the GIRG model, except there is no underlying latent space. So, every node draws a weight as for GIRGs, but then any pair of nodes $u,v$ connects with probability $\Theta(\min\{1,\tfrac{w_uw_v}{n}\})$, with no geometric information. The two models are closely related, as the marginal connection probability is the same when randomizing over the geometric positions in GIRGs. This means that for two nodes $u,v$ with fixed weights but \emph{random} positions, the marginal connection probability in a GIRG is $\Pr[u\sim v \mid w_u,w_v] = \Theta(\min\{1,\tfrac{w_uw_v}{n}\})$, just as for Chung-Lu graphs. As a consequence, many properties transfer from Chung-Lu graphs to GIRGs. In particular, in both models the expected degree of a node coincides with its weight up to a constant proportionality factor. This leads to a degree distribution that follows the same power-law as the weights, ensuring that the neighborhood distribution\footnote{The neighborhood distribution is obtained by selecting a uniformly random edge $e$, choosing a random endpoint $v$ of $e$, and then returning the degree of $v$.} and the small-world property are conserved and remain equivalent for both models. However, due to the latent space, GIRGs also have strong clustering and communities, which Chung-Lu graphs do not have.

We also introduce a variant of GIRGs for which the assortativity coefficient can be tuned to be either positive or negative. We call the resulting model Tunable GIRG or TGIRG. The construction steps are the same as for GIRGs, except that we use a different connection probability: we connect each pair of nodes $u,v$ with probability $\Theta(\min\{1,\tfrac{(w_u \wedge w_v)^\sigma (w_u \vee w_v)}{n\|x_u-x_v\|^d}\}^\alpha)$, where $\sigma\in[0,\tau-1)$ is a parameter that determines how assortative or disassortative the model is. Here, $\sigma=1$ yields the GIRG model and thus (almost) neutral assortativity, while smaller values of $\sigma$ give disassortative networks and larger values give assortative networks.

Finally, we also give a brief analysis of assortativity in \emph{Random Geometric Graphs} (RGG). Those graphs are similar to GIRGs, though the weights do not follow a power law, but are all set to one. Moreover, as we only use those graphs to demonstrate a point for GIRGs, we restrict ourselves to the case of $\alpha = \infty$ (zero temperature), and to the case that the hidden constant factor in the $\Theta(\cdot)$-term is one. See the discussion of theoretical results on GIRGs below for why we discuss those graphs.

\subsubsection{Experimental Results} 
We systematically explore degree assortativity of the real networks and of artificially generated Chung-Lu and GIRG networks. We find, in accordance with our theoretical result, that the Pearson assortativity coefficient is not a good measure for assortativity and provides almost no information. Spearman's and Kendall's correlation coefficients are generally in good agreement with each other. However, we find cases in which those coefficients are close to zero even though there are strong positive and negative assortative effects in the network which happen to cancel out.

We argue that single real coefficients lose too much information, and that a more detailed approach is useful. We offer three visualizations of assortativity. We depict an example of each in Fig~\ref{fig:caida-intro}, but defer their detailed discussion, the systematic exploration of artificial and real networks, and the interpretation of the findings, to Section~\ref{sec:experiments}. For additional examples of networks where intricate wiring patterns are concealed by coefficient values close to zero, we refer the reader to Fig~\ref{fig:wordnet} and Fig~\ref{fig:youtube} in the Supporting Information. The three visualizations are:
\begin{itemize}
    \item A comparison of line plots showing the conditional probability $\Pr[\deg(v) \ge x \mid v \sim u]$ on a log-log scale, where $u$ is a random node of prescribed degree $d$. By inspecting how the curves change for different values of $d$, one can infer what the typical neighbors of nodes of different degrees look like. Of particular interest is the ordering of the curves and whether such curves cross each other for different $d$. For artificial networks, our theoretical results show that each such curve can be decomposed into two sub-parts, both of which are linear. The slopes of the two linear parts and the transition point between them are central and should be used as baselines for the curves for real networks.

    To obtain a proper baseline for such curves it is important to keep in mind the difference between sampling a random node and sampling a random edge. In the former case, one obtains the degree distribution of the network. In the latter case, one obtains the \emph{neighborhood distribution}\footnote{Equivalently, the neighborhood distribution may be obtained by choosing a uniformly random node $u$ of degree $\ge 1$, go to a uniformly random neighbor $v$ of $u$, and return the degree of $v$.}, which is shifted towards high-degree nodes. For heavy-tailed degree distributions, those two distributions are very different from each other. It is crucial to use the latter distribution as baseline, not the former one. This is closely related to the \emph{friendship paradox} that a random neighbor of a random node has larger average degree than a random node. 
    \item Heatmaps of joint degree frequencies, i.e., heatmaps where the coordinate $(x,y)$ displays the proportion of edges $e=\{u,v\}$ with $\deg(u) = x$ and $\deg(v)=y$. Comparing different rows and columns of those heatmaps reveals preferences of nodes of different degrees.
    \item The heatmaps get easier to interpret if one uses normalization. Thus, as the third option we display the normalized values $\frac{\Pr[\deg(v)=y \mid \deg(u)=x]}{\Pr[\deg(v)=y]}$ for a random edge $e=\{u,v\}$. The exact definition is slightly more complicated and involves a case distinction for whether the fraction is larger or smaller than one. We defer the details to Section~\ref{sec:experiments}.
\end{itemize}


\begin{figure}[ht]
  \centering
   \includegraphics[scale=0.6]{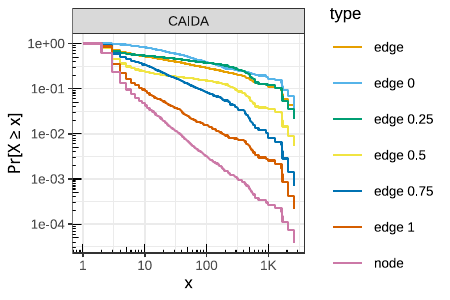}%
  \hfill%
  \includegraphics[scale=0.6]{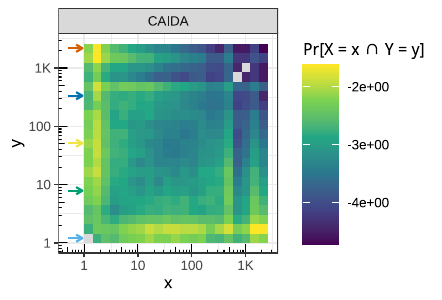}%
  \hfill%
  \includegraphics[scale=0.6]{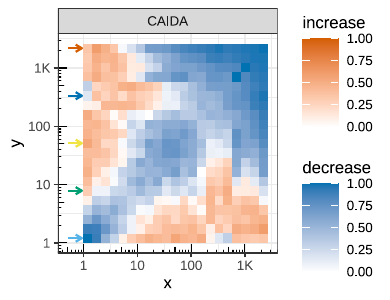}
  \caption{Conditional and Joint Degree Distributions of the \texttt{CAIDA} Autonomous Systems Network.}
  \label{fig:caida-intro}
\end{figure}

A detailed discussion of our findings for social and technical networks can be found in Section~\ref{sec:experiments}. The most important points are:
\begin{itemize}
    \item All social networks that we consider show an increased likelihood for edges between pairs of low-degree nodes. This is very similar to the effect of adding a latent geometric space to the artificial models, see below. This indicates that such a geometric space may be beneficial for modeling social networks.
    \item Beyond those pairs of low degrees, there are very different assortativity patterns in social networks. We find at least two types of different patterns. Those can contribute positively or negatively to an assortativity coefficient. If they contribute negatively, then we have opposing contributions to the coefficient, and either contribution may dominate. In general, the Spearman and Kendall assortativity coefficients of social networks range from strongly negative to strongly positive.
    \item The assortativity patterns for technical networks of autonomous systems strongly differ from those of social networks. They are usually negatively assortative. Other than social networks, those networks do not generally show an increased likelihood in connections between pairs of low-degree nodes.
    \item We can match the different assortativity patterns in real networks to different parameter regimes in the tunable models of artificial networks that we propose (see below). 
\end{itemize}

\subsubsection{Theoretical Results}\label{sec:intro-theoretical-results}

\paragraph{Negative result for the Pearson assortativity coefficient.} Our first theoretical result is showing that the Pearson assortativity coefficient is not a good measure for assortativity. This has already been argued before~\cite{voitalov2019scale, litvak2013uncovering,hofstad2014degree}, but we give a strong theoretical result:

\begin{restatable}{theorem}{Negativepearson}\label{thm:neg_pearson}
  Let $2<\tau<7/3$ and let $\mathcal{G} = (\mathcal{V}, \mathcal{E})$ be a graph on $n$ vertices with $\Delta := \max_{v\in\mathcal{V}}\deg(v) = \Theta(n^{1/(\tau-1)})$ such that for a vertex $v$ chosen uniformly at random in $\mathcal{V}$ and for all $0<k\le\Delta$ we have $\pr(\deg(v)\ge k) = \Theta(k^{1-\tau})$.  Then its Pearson assortativity coefficient $r(\mathcal{G})$ is negative, more precisely $r(\mathcal{G})=\Theta(-n^{-(\tau-2)/(\tau-1)})$.
\end{restatable}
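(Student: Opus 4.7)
The plan is to work directly with the closed form
\[
r(\mathcal{G}) \;=\; \frac{4M\,T - B}{2M\,S_{3} - B},
\]
where $M = |\mathcal{E}|$, $T := \sum_{uv\in\mathcal{E}}\deg(u)\deg(v)$, $S_{k} := \sum_{v\in\mathcal{V}}\deg(v)^{k}$ and $B := S_{2}^{2}$, and to estimate each piece asymptotically. First I would extract the degree moments from the hypothesis. Converting the tail bound $\Pr(\deg(v)\ge k) = \Theta(k^{1-\tau})$ into per-level counts by a dyadic decomposition (the number of vertices with degree in $[2^{i}, 2^{i+1})$ is $\Theta(n\cdot 2^{i(1-\tau)})$) and summing, one obtains $S_{j} = \Theta(n\,\Delta^{j-\tau+1})$ whenever $j > \tau - 1$, and $S_{j} = \Theta(n)$ when $j < \tau-1$. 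Combined with $\Delta = \Theta(n^{1/(\tau-1)})$ and $\tau > 2$, this gives $2M = S_{1} = \Theta(n)$, $S_{2} = \Theta(n^{2/(\tau-1)})$ and $S_{3} = \Theta(n^{3/(\tau-1)})$, hence $B = \Theta(n^{4/(\tau-1)})$ and $2MS_{3} = \Theta(n^{(\tau+2)/(\tau-1)})$. Since $\tau > 2$ forces $(\tau+2)/(\tau-1) > 4/(\tau-1)$, the denominator of $r$ is of order $\Theta(n^{(\tau+2)/(\tau-1)})$.

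The central step is an upper bound on $T$. Writing $2T = \sum_{v}\deg(v)\,P_{2}(v)$ with $P_{2}(v) := \sum_{u\sim v}\deg(u)$, I would use the very crude simple-graph inequality $P_{2}(v) \le \sum_{w\in\mathcal{V}}\deg(w) = 2M$. This yields $2T \le 2M\cdot S_{1} = 4M^{2}$, and therefore $4MT \le 8M^{3} = O(n^{3})$. The hypothesis $\tau < 7/3$ is precisely equivalent to $4/(\tau-1) > 3$, so $4MT = O(n^{3}) = o(n^{4/(\tau-1)}) = o(B)$. Consequently the numerator satisfies
\[
4MT - B \;=\; -B(1 - o(1)) \;=\; -\Theta(n^{4/(\tau-1)}),
\]
and is in particular strictly negative for all sufficiently large $n$, which already establishes $r(\mathcal{G}) < 0$.

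Dividing numerator by denominator, I would then conclude
\[
r(\mathcal{G}) \;=\; -\Theta\bigl(n^{\,4/(\tau-1)\,-\,(\tau+2)/(\tau-1)}\bigr) \;=\; -\Theta\bigl(n^{-(\tau-2)/(\tau-1)}\bigr),
\]
which is the claimed bound. The only step requiring some bookkeeping is the moment computation, since the hypothesis is a tail bound rather than a point-mass density, so the $\Theta$-constants have to be tracked through Abel summation and the correct convergence regime ($j < \tau-1$ vs.\ $j > \tau-1$) identified for each exponent. The conceptually interesting point, however, is that the essentially loss-free bound $P_{2}(v) \le 2M$---which forgets all structural information beyond the handshake lemma---is already strong enough to dominate $B$ exactly in the window $2 < \tau < 7/3$, and this is what singles out $7/3$ as the natural threshold in the statement.
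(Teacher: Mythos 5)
Your proposal is correct and its overall architecture is the same as the paper's: show that the positive part of the covariance is $O(n)$ per edge while the square of the mean remaining degree is $\omega(n)$ exactly when $\tau<7/3$, then compute the variance from the third moment. (Your closed form $r=\frac{4MT-B}{2MS_3-B}$ is the paper's definition~\eqref{eq:pearson-assortativity-def} multiplied through by $4M^2$, and using full degrees instead of remaining degrees is harmless since Pearson correlation is shift-invariant.) The one genuinely different ingredient is how you obtain the upper bound on $T=\sum_{uv\in\mathcal{E}}\deg(u)\deg(v)$: you use the handshake inequality $\sum_{u\sim v}\deg(u)\le 2M$, giving $T\le 2M^2$ in one line, whereas the paper bounds the joint tail $\pr(X_0\ge k, X_1\ge l)=O(nk^{1-\tau}l^{1-\tau})$ by counting pairs of high-degree vertices and then performs a double Abel summation. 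Both bounds discard all structural information and both yield $\E[X_0X_1]=O(n)$, so nothing is lost or gained asymptotically; your route is shorter, while the paper's joint-tail estimate is more informative about \emph{where} the mass sits. One small imprecision to fix in the write-up: a two-sided tail bound $\pr(\deg(v)\ge k)=\Theta(k^{1-\tau})$ does \emph{not} imply that each dyadic level $[2^i,2^{i+1})$ contains $\Theta(n\,2^{i(1-\tau)})$ vertices (the difference of two $\Theta$-quantities of the same order can degenerate), so the per-level counts only give upper bounds on the moments; the matching lower bounds for $S_2$ and $S_3$ should be extracted directly from the tail via Abel summation, $S_j=\sum_{k\ge 1}\bigl(k^j-(k-1)^j\bigr)\,\#\{v:\deg(v)\ge k\}=\Theta\bigl(n\sum_{k=1}^{\Delta}k^{j-\tau}\bigr)$, exactly as you indicate at the end. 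With that adjustment the argument is complete.
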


\noindent This result shows that the Pearson assortativity coefficient does not measure assortativity for power-law networks with a sufficiently heavy tail. It always gives the same result, regardless of how the nodes in the network are wired.

The reason for this result is visible in Fig~\ref{fig:example-intro}. Note that in all four subplots there is a strongly blue triangle in the upper right corner. This is not an artifact of the networks, but is inevitable in all networks which follow a power-law degree distribution. Such networks simply have too few vertices of large weights, so that the required number of neighbors of large weight for neutrality in this region is impossible to reach. All assortativity coefficients compute a weighted average of the different regions depicted in Fig~\ref{fig:example-intro}, but the Pearson coefficient puts particular weight on this upper right region, which forces it to be negative if the power-law degree exponent $\tau$ is smaller than $7/3$.

\paragraph{Weight assortativity in Chung-Lu graphs and GIRGs.}
We investigate the assortativity of weights in Chung-Lu graphs and GIRGs. Note that the weights are closely coupled to the degrees, but they are not identical, and we discuss the difference further below. We show that both Chung-Lu graphs and GIRGs are assortativity-neutral for weights, except for a cut-off for extremely large weights. More precisely, consider a vertex $u$ with weight $w_u \le n$, and consider a random neighbor $v$ of $u$. Then the conditional weight distribution of $v$ is given by
\begin{align}
    \Pr[w_v \ge w\mid w_u, v\sim u] =
    \begin{cases}
    \Theta(w^{1-\tau}) & \text{if $w_u \le  \frac{n}{w}$,} \\
    \Theta(\tfrac{w^{-\tau} n}{w_u}) & \text{if  $\frac{n}{w} < w_u \le n$}.
    \end{cases}
    \label{eq:conditional-CL-intro}
\end{align}
The full formulation of the statement can be found in Proposition~\ref{prop:shifted-conditional-weight-distribution-tuned} (by setting $\sigma=1$ in the statement). 
Note that the second case is very exceptional. Most nodes have small (constant) weights, and the maximal weight in the graph is $\Theta(n^{1/(\tau-1)}) = o(n)$. So for most nodes $u$, the second case is void. Only if the weight $w_u$ is very large, the second case starts occurring for large values for $w$. For example, if $u$ has an exceptionally large weight of $w_u = \Theta(\sqrt{n})$, then the tail distribution of $w_v$ changes at $w = \Theta(\sqrt{n})$, which again is a very small part of the tail. 

The key insight about~\eqref{eq:conditional-CL-intro} is that the formula in the first case is independent of $w_u$. In other words, the distribution of $w_v$ in the neighborhood of $u$ does not depend on the weight of $u$, except possibly for a cut-off at very large weights.\footnote{The reason for the cut-off is that above the cut-off the number of vertices with so high weight in the graph is simply too small to satisfy the formula in the first case, the same effect as discussed in the previous section.} This means that, up to the exception at the cut-off, the models are perfectly assortativity-neutral with respect to the weights.

\paragraph{Degree assortativity in GIRGs and RGGs.} 
We have established that \emph{weights} are assortativity-neutral in GIRGs. If the weight is super-constant, then it is so tightly coupled to the degree that the degrees are necessarily also assortativity-neutral. However, for constant weights the coupling is not so tight, and in this case there is a positive assortative contribution from random fluctuations of the vertex locations. To describe this effect, we strip the GIRG model of nodes of larger weights, which yields the RGG model. Moreover, in order to keep the exposition simple we only consider the case of zero temperature. In Proposition \ref{prop:rggassortativity}, we derive an exact formula for the assortativity in this resulting model. Essentially, for a vertex $u$ with given degree, the degree of a random neighbor $v$ of $u$ is the sum of two independent Poisson random variables. One of them is independent of $\deg(u)$, while the expectation of the other is proportional to $\deg(u)$, which yields a positive constant assortativity. 

We believe, without formulating a mathematical theorem, that assortativity in GIRGs is best regarded as the result of two consecutive processes: for a random vertex $u$ the \emph{weight} of a random neighbor $v$ is not coupled to $w_u$. However, the translation of weight into degree gives positive assortativity, modulated by exactly the same process as for RGG. This modulation is only relevant for constant weights and degrees. This is also strongly confirmed by experiments, which show positive assortativity for constant degrees, and neutral assortativity otherwise.

\subsubsection{TGIRG: A Complex Network Model With Tunable Assortativity}

As our final contribution, we propose a complex network model that builds on the GIRG model and inherits its wealth of structural properties, and at the same time can be tuned to have either positive or negative assortativity. The idea is to use a slightly different probability kernel. In the GIRG model, the connection probability increases with the product of the weights $w_u \cdot w_v$. This product can equivalently be written as $(w_u\wedge w_v)\cdot (w_u \vee w_v)$. The idea is now to scale up or down the smallest of those two factors with an exponent $\sigma \in [0,\tau-1)$.\footnote{This range is chosen so that the expected degree of a vertex is of the same order as its weight as in the original GIRG model, see Lemma~\ref{lem:exp_deg_sigma}.} 
This follows an idea of~\cite{jorritsma2023cluster}, and in fact this model has been studied before~\cite{luchtrath2022percolation,gracar2022recurrence}, but not with respect to assortativity. 

In this paper we show two things: firstly, we analyze basic properties of the resulting TGIRG model, including its marginal connection probability (Proposition~\ref{prop:joint-weight-density-tuned}). We also show that it shares properties of the GIRG and the Chung-Lu model, in particular that it has the same neighborhood distribution as both (Proposition~\ref{prop:shifted-weight-density-tuned}) and that it has a large clustering coefficient of $\Omega(1)$ as GIRGs (Proposition~\ref{prop:clustering-tgirg}). And secondly we show that assortativity of the resulting model is indeed tunable. More precisely, let us fix as before a vertex $u$ of fixed weight $w_u$, and let $v$ be a random neighbor of $u$. Then we show the following formula, where for simplicity we ignore cases that only hold for exceptionally large weights (the full statement can be found in Proposition~\ref{prop:shifted-conditional-weight-distribution-tuned}):
\begin{align}\label{eq:conditional-tuned-intro}
    \Pr[w_v \ge w \mid w_u, v\sim u]=
    \Theta(w^{1-\tau}(w\wedge w_u)^{\sigma-1}). 
\end{align}
In order to understand the formula, let us first consider $\sigma >1$ and use the formula $\Theta(w^{1-\tau})$ of the assortativity-neutral GIRG model as baseline. Compared to the baseline, the probability is boosted by a factor $(w\wedge w_u)^{\sigma-1}$. For $w\le w_u$, this factor simplifies to $w^{\sigma-1}$, and it is capped at $w_u^{\sigma -1}$ for $w> w_u$. The first thing to note is that this factor is increasing in $w$, so larger weights $w$ get boosted more. Moreover, for larger $w_u$ the boost is stronger in two ways: firstly there is a prolonged interval $[1,w_u]$ in which the boost factor is increasing; and secondly the maximum boost factor $w_u^{\sigma-1}$ is higher. Both effects imply that for larger weight $w_u$, the distribution puts more emphasis on the tail of the distribution. Thus the resulting model is assortative.

On the other hand, if $\sigma<1$ then the exponent $\sigma-1$ is negative. This reverses the effect and makes the network disassortative.\footnote{Note that this is assortativity of \emph{weights}. The same effect as for GIRGs will add a small positive assortative effect for \emph{degrees} that is only relevant for small degrees.} We confirm these findings by experiments. Fig~\ref{fig:embedding_tgirg} illustrates the effects of increasing $\sigma$ on a graph with the same vertex set and approximately the same average degree.

\subsection{Related Work}
In this subsection, we give an overview of related work, including known results on TGIRGs.

\paragraph{Origins of the line of research and early results.} The study of assortativity through the lens of network science was initiated by Newman and various collaborators in a series of papers~\cite{newman2001structure, newman2002assortative, newman2003mixing}. In these early works, they introduced the Pearson correlation coefficient as a measure for network assortativity and observed experimentally that social networks tend to be positively assortative, while biological and technological networks exhibit negative assortativity~\cite{newman2002assortative}. Newman also computed the (vanishing) limits of the Pearson correlation coefficient for Erdös-Rényi random graphs and preferential attachment graphs as well as for the grown graph model of Callaway. In the same work, he also proposed a positively assortative network model, for which he runs simulations according to which the phase transition for the existence of a giant component occurs earlier with increasing assortativity. He further observed numerically that contrarily to other networks, the attack strategy of removing a small number of high-degree nodes in order to disconnect the giant component is relatively inefficient in assortative networks because high-degree nodes cluster together~\cite{newman2002assortative}. Degree-degree correlations, as well as non-degree assortativity were studied in~\cite{newman2003mixing}, including mixing by ethnicity or age in heterosexual partnerships, indicating strong positive correlations. The paper also proposes Monte-Carlo algorithms for generating networks of a desired assortativity level. Simulation results again point to an earlier emergence of the giant component with increasing assortativity. At the same time, the size of the giant component is smaller for assortative graphs than for neutral and disassortative instances. This is credited to the fact that the well-connected core of high-degree nodes already leads to a giant component at lower densities, but these lower densities also make it less likely that the giant component extends to regions in the graph outside this core. (Non-degree) assortativity was subsequently investigated as a plausible cause of emerging communities in networks~\cite{Newman2003community}, accompanied by an algorithm based on the betweenness centrality measure, which was later applied successfully also to non-human social networks~\cite{lusseau2005identifying}. Newman and Park demonstrated that the reverse implication connecting communities and assortativity sometimes also holds: community structure can lead to degree-degree correlations~\cite{newman2003why}. These early research articles already illustrate the difficulty of finding models which are both tunable and mathematically tractable.

\paragraph{Assortativity tuning.} Numerous approaches have been tried in order to generate networks with varying levels of assortativity. The most general procedures are the ones introduced by Newman~\cite{newman2003mixing} and Boguñá and Pastor-Satorras~\cite{boguna2003class} which devise two different schemes to construct general correlated networks with prescribed correlations. Simpler approaches focus on imposing merely the intuitive requirement that ``nodes with similar degrees connect preferably'', instead of ``hard-coding'' the desired correlations~\cite{XulviBrunet2004ReshufflingSN}. Another classical procedure relies on an idea from the so-called configuration model\cite{molloy1995critical}, which allows to generate networks with a desired degree sequence by assigning to each node the number of ``stubs'' that correspond to its desired degree and then randomly connecting stubs. This idea can be extended to a rewiring procedure where pairs of edges $uv$ and $rs$ are selected randomly replaced by $ur$ and $vs$ if and only if the assortativity of the network is increased (or decreased) through this rewiring \cite{newman2003mixing,XulviBrunet2004ReshufflingSN,maslov2002specificity}. Finally, there exist various modifications of the preferential attachment model, which was originally introduced by Barabási and Albert~\cite{barabasi1999emergence}. In this model of a \textit{growing network}, nodes are added one-by-one, connecting to existing nodes with a probability that is proportional to their respective degrees. I.e., the ``rich get richer'' and newly arriving (and therefore low-degree) nodes connect preferably to nodes of high degree. By modifying the attachment rule, one can directly change the wiring preferences of nodes~\cite{newman2002assortative,vazquez2003growing}. The iterative generation of such networks introduces a lot of dependencies which make the rigorous mathematical analysis of networks difficult. Although we focus here on degree-assortativity, we mention that for attribute-assortativity, besides the latent space approach that underlies GIRGs, there exist several mechanisms which allow to adjust the assortativity levels of a network, such as the stochastic block model and its variants\cite{holland1983stochastic,karrer2011stochastic,airoldi2008mixed}.

\paragraph{Structural and process implications of (dis)assortativity.} A key motivation for studying assortativity of networks is the belief that the level of assortativity can have far-reaching structural consequences for the network in question as well as the evolution of dynamical processes on it. Structurally, degree assortativity is said to lead to tightly-knit cores of influential nodes, often referred to as a ``rich-club''\cite{colizza2006detecting}. We remark here that such a well-connected core consisting of high-degree nodes can also be produced by other mechanisms, as exhibited by GIRGs~\cite{bringmann2024average}. Another assertion echoed in many papers is that network robustness is greatly affected by the (dis)assortativity of a network~\cite{holme2002attack,schneider2011mitigation}. Dynamical processes on networks have also been studied intensively through the lens of assortativity\cite{newman2002spread}. This has led to the observation that degree correlations impact the epidemic threshold~\cite{Bogu2002AbsenceOE} and the development of models for spreading which take assortativity into account\cite{noldus2015assortativity}. Synchronization phenomena in complex networks have also been investigated~\cite{gomez2007paths,kelly2011on}.

\paragraph{Challenges of research on assortativity.} Since the proposed models are often tailored to explain a single structural aspect of real-world networks or study a specific dynamical process, this has resulted in a wealth of otherwise sparsely studied models which are difficult to compare. From a practitioner's perspective, any insight relating degree-degree correlations to structural properties such as network robustness, which is crucial in critical infrastructures, also requires that the examined model captures more than just one aspect of real-world networks in order to be valid. As outlined in the introduction, GIRGs jointly exhibit many such desirable properties stemming from the latent geometric space. As we discuss in the next subsection, a lot is already known about our proposed extension of GIRGs as they fit well into existing frameworks. 
For the description of further properties of TGIRGs, we refer the reader to Section~\ref{sec:basic-properties}; most notably they have a large local clustering coefficient of $\Theta(1)$.

\subsubsection{Related Work On TGIRGs}
A variety of properties of TGIRGs have been shown for different ranges of $\sigma$. Some of these properties have been derived using a different parametrization of the model, known as the weight-dependent random connection models with the interpolation kernel (see for example \cite{luchtrath2022percolation}). Our choice of parametrization has the advantage that weights are intuitive proxies for expected degrees and that $\tau$ directly controls the decay of their power-law distribution, whereas the alternative parametrization only provide this indirectly. There also exist parametrizations which assign different exponents $\sigma_1$ and $\sigma_2$ to the smaller and larger weight in the connection probability. These can however always be reparametrized (as long as $\sigma_1>0$) to give $\sigma_1=1$ by changing the power-law exponent $\tau$ accordingly~\cite{jorritsma2023cluster}. 

We first give the conversion formulas between our parametrization (Definition \ref{def:simple-agirg}) and the alternative parametrization, which uses marks in the interval $[0,1]$ instead of power-law distributed weights. Apart from the dimension $d$ of the underlying space (for which the mark parametrization does not differ from ours), three parameters are used, $\gamma_M, \delta_M, \alpha_M$, where the subscript $_M$ serves to distinguish the parameters of the mark parametrization from the parameters of our parametrization (in particular since both parametrization use the symbol $\alpha$, but for different purposes). The conversion formulas between the two parametrizations are as follows: 

\begin{align*}
    \tau = 1+\tfrac{1}{\gamma_M} &\Longleftrightarrow \gamma_M = \tfrac{1}{\tau-1}, \\
    \alpha &= \delta_M, \\
    \sigma = \tfrac{\alpha_M}{\gamma_M} &\Longleftrightarrow \alpha_M = \tfrac{\sigma}{\tau-1}.
\end{align*}
The existence and uniqueness of a giant component is well understood \cite[Proposition 2.4 and Corollary 2.14]{luchtrath2022percolation}:
\begin{itemize}
    \item If $\sigma > \tau - 2$ or $\tau<2+\frac{1}{\alpha}$, whp\footnote{We say an event occurs \textit{with high probability (whp)} if it occurs with probability tending to $1$ as $n$ tends to  $\infty$.} the graph contains a unique giant component.
    \item If $d=1$, $\alpha>2$, $\tau>2+\frac{1}{\alpha-1}$, and $\sigma<\tau-2$, whp the graph does not contain a giant component.
    \item For other parameter regions, the presence of a giant component depends on the constants hidden in the $\Theta$-notation of the connection probability in \eqref{eq:agirg-connection}.
\end{itemize}
Additionally, the cluster size decay of TGIRGs is studied in \cite{jorritsma2023cluster, jorritsma2024large}. Regarding the average distance in the giant component, unlike with standard GIRGs for which this is always doubly logarithmic in the number of nodes, various scaling regimes can occur, depending on the model parameters. When $\sigma\in[0,\tau-2]$, TGIRGs satisfy Assumptions 1.1 and 1.2 in \cite{gracar2022chemical}, and hence Theorems 1.1 and 1.2 therein apply, which yields that the average distance is whp $\frac{4\pm o(1)}{\log 1/(\alpha(\tau-2)))}\log\log n$ if $\tau<2+\tfrac{1}{\alpha}$ and $\omega(\log\log n)$ if $\tau>2+\tfrac{1}{\alpha}$. When $\sigma \in [1,\tau-1)$, the proof in \cite{bringmann2024average} can be adapted (the proof is for GIRGs, i.e., for $\sigma=1$, and for larger $\sigma$ the connection probability between two given vertices is larger than for GIRGs), yielding a doubly logarithmic average distance in this case as well. On the other hand, if $\alpha>2$, $\tau>2+\frac{1}{\alpha-1}$, and $\sigma<\tau-2$, then the average distance is linear in the number of nodes \cite{luchtrath2024all}. In some parameter regimes, TGIRGs also exhibit an average distance that is neither linear nor ultra-small. For example, the connection probability in TGIRGs is lower bounded by the connection probability in long-range percolation, where the average distance is at most polylogarithmic when $\alpha<2$ \cite{biskup2004scaling}, yielding at most polylogarithmic distances for any TGIRG with $\alpha<2$. This holds for any choice of $\tau$, $\sigma$ and $d$, including choices where the average distance is $\omega(\log \log n)$.

\begin{figure}
    \centering
    \includegraphics[width=0.3\linewidth]{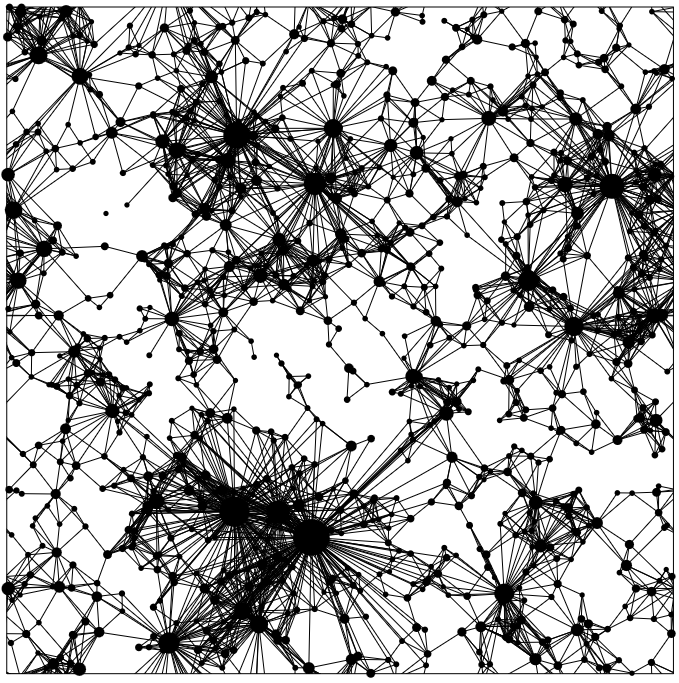}
    \includegraphics[width=0.3\linewidth]{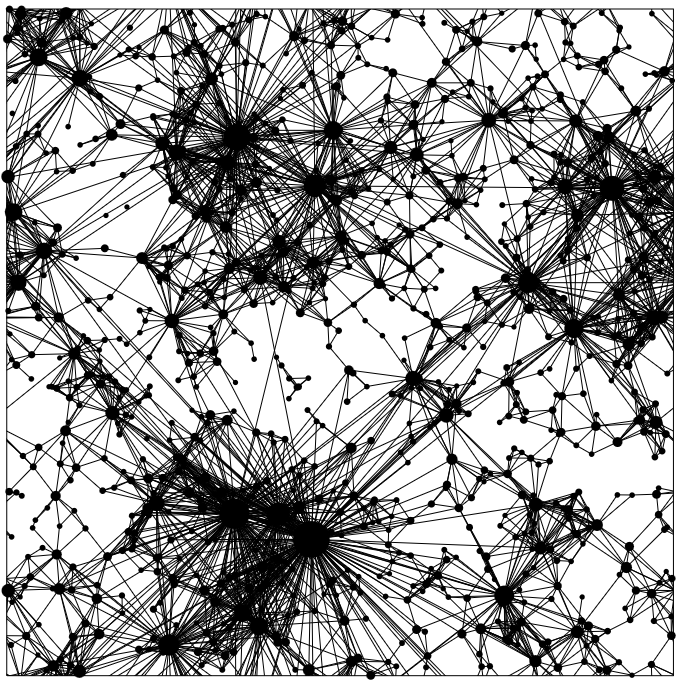}
    \includegraphics[width=0.3\linewidth]{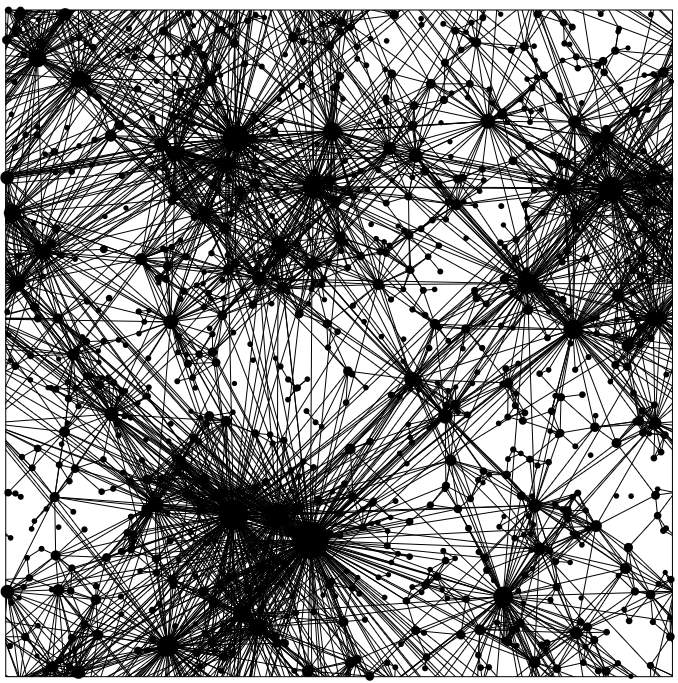}
    \caption{TGIRGs with $n=1000$, $\tau=2.8$, $\alpha=\infty$, and average degree 10 on the same vertex set for varying $\sigma$: 0.2 (\textit{left}), 1 (\textit{center}), 1.8 (\textit{right}).}
    \label{fig:embedding_tgirg}
\end{figure}

\paragraph*{Organization of the paper.} The remainder of the paper is organized as follows. In Section~\ref{sec:model}, we formally introduce the graph models we analyze and state some of their key properties in Section~\ref{sec:basic-properties}. In Section~\ref{sec:conditionaldistributions}, we derive a comprehensive set of conditional and joint weight distributions for vertices of Chung-Lu graphs and GIRGs. In Section~\ref{sec:positive}, we prove that degrees of neighboring vertices in Random Geometric Graphs are positively assortative and discuss degree-degree correlations in Chung-Lu graphs and GIRGs. Section~\ref{sec:experiments} contains experimental evaluations for a variety of real-world networks and for artificially generated networks from the network models. In Section~\ref{sec:pearson} we prove Theorem~\ref{thm:neg_pearson}, which shows that the Pearson correlation coefficient is negative for every networks whose degree distribution has a sufficiently heavy tail. We also discuss the use of alternative coefficients and report those coefficients for real-world networks.


\section{Graph Models}\label{sec:model}

We denote by $[n]$ the set $[n]:=\{1, \ldots, n\}$. We will often use the notations
\begin{align*}
    x \wedge y := \min\{x,y\} \qquad \text{and} \qquad x \vee y := \max\{x,y\}
\end{align*}
to denote minima and maxima. We consider simple undirected graphs with vertex set $\mathcal{V}=[n]$ and edge set denoted by $\mathcal{E}$. The set of neighbors of a vertex $v\in\mathcal{V}$ is denoted by $\Gamma(v)$. We use the short-hand notation uar for (events occurring) uniformly at random and iid when random variables are independent and identically distributed. In most graphs models we study, the degrees are distributed according to a power law, which is defined as follows.

\begin{definition}\label{def:power-law}
    Let $\tau>1$. A discrete random variable $X$ with values in $\N$ is said to follow a \emph{power-law with exponent} $\tau$ if $\pr(X = x) = \Theta(x^{-\tau})$ for $x\in \N$. A continuous random variable $X$ with values in $[1,\infty)$ is said to follow a \emph{power-law with exponent} $\tau$ if it has a density function $f_X$ satisfying $f_X(x) = \Theta(x^{-\tau})$.
\end{definition}

Our focus will be on scale-free graph models, i.e.\ graphs that have a power-law degree distribution with exponent $\tau\in(2,3)$. However, for the sake of generality, we give the definitions for any choice of $\tau>2$.
We start by defining Chung-Lu graphs, which are standard graphs yielding such a power-law degree distribution.

\begin{definition}[Chung-Lu graph~\cite{chung2002average}]\label{def:simple-chung-lu}
    Let $\tau>2$ and let $\mathcal{D}$ be a power-law distribution on $[1,\infty)$ with exponent $\tau$. A \emph{Chung-Lu graph} is obtained by the following two-step procedure:
    \begin{enumerate}[label=(\arabic*), leftmargin=1cm]
        \item Every vertex $v\in\mathcal{V}$ draws iid\ a \emph{weight} $w_v \sim \mathcal{D}$.

        \item For every two distinct vertices $u,v \in\mathcal{V}$, add an edge between $u$ and $v$ in $\mathcal{E}$ independently with probability
        \begin{align}\label{eq:chung-lu-connection}
        \Theta\Big(\min\Big\{\frac{w_uw_v}{n}, 1\Big\}\Big).
        \end{align}
    \end{enumerate}
\end{definition}

Classically, the $\Theta(\cdot)$ simply hides a factor 1, but this more general definition of the model also captures similar random graphs, like the Norros-Reittu model \cite{norros2006conditionally}, while important properties stay asymptotically invariant~\cite{janson2010asymptotic}.

Very frequently, real-world networks have an (implicit) underlying geometry. The Random Geometric Graph (RGG) model is the simplest model that includes geometry.
In RGGs, each vertex is assigned coordinates in an underlying ground space. Pairs of vertices are then connected independently of other pairs if their distance is below a global threshold\cite{penrose2003random}. We will take the $d$-dimensional unit hypercube $[0,1]^d$ equipped with the torus topology as the ground space. In particular, we define the distance between two points $x=(x_1, \ldots, x_d),x'=(x'_1, \ldots, x'_d)$ in $[0,1]^d$ as
\begin{align}\label{eq:torus-distance}
    \|x-x'\| := \max_{1 \le i \le d} \min\{|x_i-x_i'|, 1-|x_i-x_i'|\}.
\end{align}

We remark here that, while in the subsequent we fix the norm for our models to be the max-norm, all our results, except for the expected volume of the intersection of balls of influence in Proposition \ref{prop:rggassortativity}, also hold for any other choice of norm.
\begin{definition}[RGG]
\label{def:rgg}
   Fix a node set $\mathcal{V}$ of order $|\mathcal{V}|=n$ and a function $r = r(n) = \Theta(n^{-\frac{1}{d}})$. A threshold Random Geometric Graph is obtained by the following two-step procedure:
    \begin{enumerate}
        \item  Every node $v\in\mathcal{V}$ draws independently and uar\ a position $x_v$ in the hypercube $[0,1]^d$.
        \item  Connect each pair of distinct vertices $u,v \in\mathcal{V}$ by an edge iff $\|x_u-x_v\|\le r$
    \end{enumerate}
\end{definition}
Note that the condition on $r$ yields a sparse graph, that is, an expected number of edges which is linear in the number of vertices. We refer to the geometric region of points with distance at most $r$ from a node as its \emph{ball of influence} or \emph{box of influence}.

Geometric Inhomogeneous Random Graphs (GIRGs) combine the degree inhomogeneity of Chung-Lu graphs with the geometric component of RGGs. The vertices are assigned both a weight and a position in a given ground space. We will again take the $d$-dimensional unit hypercube $[0,1]^d$ equipped with the torus topology as ground space. 

\begin{definition}[GIRG~\cite{bringmann2019geometric}]\label{def:simple-girg}
    Let $\tau>2$, $\alpha>1$ and $d\in\mathbb{N}$ and let $\mathcal{D}$ be a power-law distribution on $[1,\infty)$ with exponent $\tau$. A \emph{Geometric Inhomogeneous Random Graph (GIRG)} is obtained by the following three-step procedure:
    \begin{enumerate}[label=(\arabic*), leftmargin=1cm]
        \item Every vertex $v\in\mathcal{V}$ draws iid\ a \emph{weight} $w_v \sim \mathcal{D}$.

        \item Every vertex $v\in\mathcal{V}$ draws independently and uar\ a position $x_v$ in the hypercube $[0,1]^d$.

        \item For every two distinct vertices $u,v \in\mathcal{V}$, add an edge between $u$ and $v$ in $\mathcal{E}$ independently with probability
        \begin{align}\label{eq:girg-connection}
            p_{uv} = \Theta\Big(\min\Big\{\frac{w_uw_v}{n \|x_u-x_v\|^d}, 1\Big\}^{\alpha}\Big).
        \end{align}
    \end{enumerate}
\end{definition}
We also allow $\alpha = \infty$ and in this case require that 
\begin{equation}\label{eq:puv2}
     p_{uv} = \begin{cases} \Theta(1), & \text{if } \|x_u - x_v\| \le O\big(\big(\tfrac{w_u w_v} n\big)^{1/d}\big), \\ 0, & \text{if } \|x_u - x_v\| \ge \Omega\big(\big(\tfrac{w_u w_v} n\big)^{1/d}\big), \end{cases} 
 \end{equation}
where the constants hidden by $O$ and $\Omega$ do not have to match, i.e., there can be an interval $[c_1 (\tfrac{w_u w_v}n)^{1/d}, c_2 (\tfrac{w_u w_v}n)^{1/d}]$ for $\|x_u - x_v\|$ where the behaviour of $p_{uv}$ is arbitrary.

We now introduce extensions of the Chung-Lu and the GIRG models that allow to modulate the assortativity of the graph. This is achieved by varying the influence of the smaller weight in the connection probability via a parameter $0 \le \sigma \le \tau-1$. We note here that the choice $\sigma = 1$ recovers the original models. We begin with the extension of the Chung-Lu model, which we call Tunable Chung-Lu graphs. 

\begin{definition}[Tunable Chung-Lu graph]\label{def:simple-assortative-chung-lu}
    Let $\tau>2$, $\sigma\in[0,\tau-1)$ and let $\mathcal{D}$ be a power-law distribution on $[1,\infty)$ with exponent $\tau$. A \emph{Tunable Chung-Lu graph} is obtained by the following two-step procedure:
    \begin{enumerate}[label=(\arabic*), leftmargin=1cm]
        \item Every vertex $v\in\mathcal{V}$ draws iid\ a \emph{weight} $w_v \sim \mathcal{D}$.

        \item For every two distinct vertices $u,v \in\mathcal{V}$, add an edge between $u$ and $v$ in $\mathcal{E}$ independently with probability
        \begin{align}\label{eq:assortative-chung-lu-connection}
            \Theta\Big(\min\Big\{\frac{(w_u \wedge w_v)^{\sigma}(w_u \vee w_v)}{n}, 1\Big\}\Big).
        \end{align}
    \end{enumerate}
\end{definition}

Next we define the analogous extension of GIRGs, which we call TGIRGs. This model has been studied in \cite{gracar2022finiteness,jorritsma2023cluster,jorritsma2024large,hofstad2023scaling} and by Lüchtrath in his PhD Thesis~\cite{luchtrath2022percolation}, but never in the context of assortativity. 

\begin{definition}[TGIRG]\label{def:simple-agirg}
    Let $\tau>2$, $\alpha>1$, $\sigma\in[0,\tau-1)$ and $d\in\N$ and let $\mathcal{D}$ be a power-law distribution on $[1,\infty)$ with exponent $\tau$. A \emph{Tunable Geometric Inhomogeneous Random Graph (TGIRG)} is obtained by the following three-step procedure:
    \begin{enumerate}[label=(\arabic*), leftmargin=1cm]
        \item Every vertex $v\in\mathcal{V}$ draws iid\ a \emph{weight} $w_v \sim \mathcal{D}$.

        \item Every vertex $v\in\mathcal{V}$ draws independently and uar\ a position $x_v$ in the hypercube $[0,1]^d$.

        \item For every two distinct vertices $u,v \in\mathcal{V}$, add an edge between $u$ and $v$ in $\mathcal{E}$ independently with probability
        \begin{align}\label{eq:agirg-connection}
             \Theta\Big(\min\Big\{\frac{(w_u \wedge w_v)^{\sigma}(w_u \vee w_v)}{n \|x_u-x_v\|^d}, 1\Big\}^{\alpha}\Big).
        \end{align}
    \end{enumerate}
\end{definition}

Analogously to GIRGs, also here we allow $\alpha = \infty$, requiring in  this case that 
\begin{equation}\label{eq:puv2_sigma}
     p_{uv} = \begin{cases} \Theta(1), & \text{if } \|x_u - x_v\| \le O\big(\big(\tfrac{(w_u \wedge w_v)^\sigma (w_u \vee w_v)} n\big)^{1/d}\big), \\ 0, & \text{if } \|x_u - x_v\| \ge \Omega\big(\big(\tfrac{(w_u \wedge w_v)^\sigma (w_u \vee w_v)} n\big)^{1/d}\big). \end{cases} 
 \end{equation}

We remark that for $\sigma=0$ we recover the (soft) Boolean model \cite{hall1985continuum, yukich2006ultra, gracar2022chemical}, which is basically a Random Geometric Graph where each vertex has an associated random radius drawn from a power-law distribution, while for $\sigma=\tau-2$ we recover the age-dependent random connection model introduced in \cite{gracar2022recurrence} as an approximation to the spatial preferential attachment model \cite{jacob2015spatial}. 
We also note that (T)GIRGs sometimes use a slightly different parametrization, replacing the long-range parameter $\alpha\in(1,\infty]$ by its inverse $T:=\tfrac{1}{\alpha}\in[0,1)$, called the \emph{temperature} of the model.

\section{General Properties Of The Graph Models}\label{sec:basic-properties}

In this section, we describe some general and important properties of Chung-Lu graphs and GIRGs. They hold in fact for a general class of graph models described in~\cite{bringmann2024average}. 
They also generalize to Tunable Chung-Lu graphs and TGIRGs (remember that we get the original Chung-Lu graph model and GIRG model by setting $\sigma=1$ in the definition of the corresponding tunable model), hence for the sake of conciseness we state each result directly for the more general tunable models.

The first result gives the marginal probability that an edge between two vertices with given weights is present in TGIRGs.

\begin{lemma}\label{lem:tgirg-marginal}
Let $\mathcal{G}=(\mathcal{V},\mathcal{E})$ be a TGIRG and $v \in\mathcal{V}$ be a vertex with fixed position $x_v \in [0,1]^d$. Then all edges $uv$ for $u \ne v$ are independently present with probability
\begin{align}\label{eq:tgirg-marginal-edge-prob-position}
    \Pr(uv\in\mathcal{E} \mid x_u, w_u, w_v) = \Theta\Big(\min\Big\{\frac{(w_u\wedge w_v)^\sigma (w_u \vee w_v)}{n}, 1\Big\}\Big). 
\end{align}
In particular, we can also remove the conditioning on $x_u$ (by integrating over $[0,1]^d$) and get
\begin{align}\label{eq:tgirg-marginal-edge-prob}
    \Pr(uv\in\mathcal{E} \mid w_u, w_v) = \Theta\Big(\min\Big\{\frac{(w_u\wedge w_v)^\sigma (w_u \vee w_v)}{n}, 1\Big\}\Big). 
\end{align}
\end{lemma}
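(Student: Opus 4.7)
The plan is to derive \eqref{eq:tgirg-marginal-edge-prob-position} by integrating the position-dependent connection probability \eqref{eq:agirg-connection} over $x_u \in [0,1]^d$ while keeping $x_v$, $w_u$, and $w_v$ fixed. Let $W := \tfrac{(w_u\wedge w_v)^\sigma(w_u\vee w_v)}{n}$ so that $p_{uv} = \Theta(\min\{W/\|x_u-x_v\|^d, 1\}^\alpha)$. Since the torus with the max-norm is translation-invariant, the ball of radius $r$ around any point has volume $(2r)^d$ for $r \le 1/2$, so $\Pr(uv\in\mathcal{E}\mid x_v, w_u, w_v)$ equals a one-dimensional integral of the form $\int_0^{1/2}\Theta(r^{d-1})\cdot\Theta(\min\{W/r^d,1\}^\alpha)\,dr$ after a change of variables to $r = \|x_u-x_v\|$. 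The result will not depend on $x_v$, which automatically yields \eqref{eq:tgirg-marginal-edge-prob} by a further integration over $x_v$.

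The main step is to split the integral at the threshold $r^\ast := W^{1/d}$. When $W \ge (1/2)^d$ we have $r^\ast \ge 1/2$, the min equals $1$ throughout, and the integral is $\Theta(1) = \Theta(\min\{W,1\})$. Otherwise, the inner region $r \in [0,r^\ast]$ contributes $\Theta((r^\ast)^d) = \Theta(W)$. For the outer region $r \in [r^\ast, 1/2]$ the integrand is $\Theta(W^\alpha r^{d-1-d\alpha})$; since $\alpha > 1$ the exponent $d-1-d\alpha < -1$, so the integral is dominated by its lower endpoint and evaluates to $\Theta(W^\alpha \cdot (r^\ast)^{d-d\alpha}) = \Theta(W)$. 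The two contributions combine to $\Theta(W) = \Theta(\min\{W,1\})$ in this regime. For $\alpha = \infty$, the same split works even more easily using \eqref{eq:puv2_sigma}: the integrand is $\Theta(1)$ on $[0, c_1 W^{1/d}]$, lies in $[0,\Theta(1)]$ on the transition interval $[c_1 W^{1/d}, c_2 W^{1/d}]$, and is $0$ beyond; all three pieces are $O(W)$ and the first is $\Theta(W)$, giving the same bound.

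Finally, independence of the edges $\{uv\}$ for different $u\ne v$ (with $x_v$ fixed) follows because in the TGIRG construction the positions $x_u$ and the weights $w_u$ are drawn independently across vertices, and each event $uv\in\mathcal{E}$ is, conditionally on all of these, a Bernoulli trial with its own independent randomness. Marginalizing separately over each $x_u$ preserves this product structure, so \eqref{eq:tgirg-marginal-edge-prob-position} holds jointly over the $u$'s. I expect the only nontrivial point is to verify that the outer integral is $\Theta(W)$ rather than, say, $o(W)$ or a larger polylog-type factor; this is exactly where the strict inequality $\alpha > 1$ is used, making the exponent of $r$ strictly smaller than $-1$ and hence placing the integral's mass near the lower endpoint $r^\ast$, which is what matches the contribution from the inner region.
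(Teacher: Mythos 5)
Your proposal is correct and follows essentially the same route as the paper's proof: reduce to a radial integral over $r=\|x_u-x_v\|$, treat the case where the minimum is identically $1$, otherwise split at $r^\ast = W^{1/d}$ and use $\alpha>1$ to show the outer integral is dominated by its lower endpoint, with the $\alpha=\infty$ case handled via \eqref{eq:puv2_sigma}. Your explicit remarks on the independence structure and on the translation invariance making the result independent of $x_v$ are additions the paper leaves implicit, but the core computation is identical.
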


\begin{proof}
    For classical GIRGs, this corresponds to Lemma~4.2 and Theorem~7.3 in~\cite{bringmann2024average}. We give the proof for TGIRGs in the supporting information.
\end{proof}

The next lemma says that the expected degree of a vertex is of the same order as its weight, thus allowing us to treat a given weight sequence $(w_v)_{v\in\mathcal{V}}$ as a sequence of expected degrees.

\begin{lemma}\label{lem:exp_deg_sigma}
Let $\mathcal{G}=(\mathcal{V},\mathcal{E})$ be a Tunable Chung-Lu graph or a TGIRG and $v \in\mathcal{V}$ be a vertex with fixed weight $w_v \in [1,\infty)$. Then we have $\E[\deg(v) \mid w_v]=\Theta(\min\{w_v,n\})$, or equivalently $\pr(uv\in\mathcal{E} \mid w_v) =\Theta(\min\{\tfrac{w_v}{n},1\})$ for a vertex $u$ chosen uniformly at random in $\mathcal{V}$.
\end{lemma}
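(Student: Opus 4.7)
The plan is to reduce both the Tunable Chung-Lu and the TGIRG cases to a single weight integral. By the definition of the Tunable Chung-Lu model, and by Lemma~\ref{lem:tgirg-marginal} (after integrating out the position of $u$) for TGIRGs, the marginal probability of an edge between $v$ and a second vertex $u$ of weight $w_u$ is in both cases $\Theta(\min\{(w_u \wedge w_v)^\sigma(w_u \vee w_v)/n,\,1\})$. The two formulations in the lemma are equivalent via $\E[\deg(v)\mid w_v] = (n-1)\,\Pr(uv\in\mathcal{E}\mid w_v)$, so it suffices to show
\begin{align*}
    \Pr(uv \in \mathcal{E} \mid w_v) \;=\; \int_1^\infty \Theta\Big(\min\Big\{\tfrac{(w\wedge w_v)^\sigma(w\vee w_v)}{n},\,1\Big\}\Big)\,\Theta(w^{-\tau})\,dw \;=\; \Theta\Big(\min\Big\{\tfrac{w_v}{n},\,1\Big\}\Big),
\end{align*}
where the integration is against the power-law density of $w_u$.

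To evaluate the integral, I would split at $w = w_v$, since $w\wedge w_v$ and $w\vee w_v$ swap roles there, giving
\begin{align*}
    \int_1^{w_v}\min\Big\{\tfrac{w^\sigma w_v}{n},1\Big\}\,w^{-\tau}\,dw \;+\; \int_{w_v}^\infty\min\Big\{\tfrac{w_v^\sigma w}{n},1\Big\}\,w^{-\tau}\,dw.
\end{align*}
When $w_v$ is well below $n$, the dominant contribution comes from the unsaturated part of the first integral: it equals $\tfrac{w_v}{n}\int_1^{w_v}w^{\sigma-\tau}\,dw$, and since $\sigma < \tau - 1$ the integrand decays faster than $w^{-1}$, so the integral is $\Theta(1)$ and the whole piece yields $\Theta(w_v/n)$. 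Intuitively, the overwhelming majority of vertices $u$ have constant weight $w_u = O(1)$, each contributing $\Theta(w_v/n)$ to the probability of connecting with $v$.

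The main obstacle is the bookkeeping of the secondary contributions: the saturated regions (where the min equals $1$) and the unsaturated part of the second integral. Each yields a term of the form $w_v^a/n^b$ whose exponents depend on whether saturation occurs in the first piece (when $w_v^{\sigma+1} > n$) or in the second (when $w_v^{\sigma+1} \le n$). In every subcase the crucial inequality $\sigma < \tau - 1$ ensures these terms are $O(w_v/n)$ and hence absorbed into the main term; for example, the tail of the second piece contributes $\Theta(n^{1-\tau}w_v^{\sigma(\tau-1)})$, which is $O(w_v/n)$ precisely because $\sigma < \tau-1$ (one checks this by comparing exponents at the boundary $w_v = n^{1/(\sigma+1)}$, which reduces to $\sigma \le \tau - 1$). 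Multiplying the integral by $n-1$ then gives $\E[\deg(v)\mid w_v] = \Theta(w_v)$ for $w_v \le n$. For $w_v > n$, the saturation $\min\{\cdot,1\}=1$ already kicks in for all $u$ of bounded weight, so $\Pr(uv\in\mathcal{E}\mid w_v) = \Theta(1)$ and $\E[\deg(v)\mid w_v]=\Theta(n)$, matching $\Theta(\min\{w_v,n\})$ in this regime as well.
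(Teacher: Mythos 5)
Your proposal is correct and takes essentially the same route as the paper's proof: both integrate the marginal connection kernel $\Theta\big(\min\{(w\wedge w_v)^\sigma(w\vee w_v)/n,1\}\big)$ (obtained from Lemma~\ref{lem:tgirg-marginal} in the TGIRG case) against the power-law weight density, split the integral at $w=w_v$ and at the saturation thresholds with the same case distinction on whether $w_v^{\sigma+1}$ exceeds $n$, and use $\sigma<\tau-1$ together with $\tau>2$ to show that constant-weight vertices contribute the dominant $\Theta(\min\{w_v,n\}/n)$ while all remaining terms are of lower order. The only noteworthy difference is cosmetic: you treat the regime $w_v>n$ explicitly, which the paper leaves implicit.
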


\begin{proof}
    For classical GIRGs and Chung-Lu graphs, this corresponds to Lemma~4.3 in~\cite{bringmann2024average}. The general result was proved using a different parametrization in \cite{luchtrath2022percolation}. For convenience of the reader and in the spirit of self-containment, we provide a proof with our parametrization in the supporting information.
\end{proof}

By observing that $\E[w_v] = \Theta(1)$ for all vertices $v\in\mathcal{V}$, this then naturally yields the connection probability between two random vertices.

\begin{lemma}\label{lem:constant-average-degree}
    Let $\mathcal{G}=(\mathcal{V},\mathcal{E})$ be a Tunable Chung-Lu graph or a TGIRG and $u,v \in\mathcal{V}$ be vertices chosen uniformly at random. Then $\pr(uv\in\mathcal{E}) =\Theta(1/n)$.
\end{lemma}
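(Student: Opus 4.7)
The statement is essentially Lemma~\ref{lem:exp_deg_sigma} averaged over the weight of $v$, so the plan is to take expectations carefully. Fix $u,v \in \mathcal{V}$ uniformly at random (so in particular independently of the weights and positions). By the tower property, I would write
\begin{align*}
    \Pr(uv \in \mathcal{E}) = \E_{w_v}\bigl[\Pr(uv \in \mathcal{E} \mid w_v)\bigr],
\end{align*}
and then invoke Lemma~\ref{lem:exp_deg_sigma} to replace the inner probability by $\Theta(\min\{w_v/n, 1\})$.

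Next, I would split the expectation according to whether $w_v \le n$ or $w_v > n$. For the first part, the contribution is $\Theta(n^{-1}) \cdot \E[w_v \mathbb{1}_{\{w_v \le n\}}]$. Since $w_v$ follows a power law with exponent $\tau > 2$, the truncated expectation $\E[w_v \mathbb{1}_{\{w_v \le n\}}]$ is bounded above by $\E[w_v] = \int_1^\infty \Theta(w^{1-\tau}) \, dw = \Theta(1)$ (because $\tau > 2$), and is bounded below by $\E[w_v \mathbb{1}_{\{w_v \le 2\}}] = \Theta(1)$, so it equals $\Theta(1)$. For the second part, the contribution is at most $\Pr(w_v > n) = \Theta(n^{1-\tau}) = o(n^{-1})$, again because $\tau > 2$. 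Summing the two contributions yields $\Theta(1/n)$.

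\textbf{Potential obstacle.} The only place to be careful is the hidden constants inside the $\Theta(\cdot)$ coming from Lemma~\ref{lem:exp_deg_sigma}: one must check that the same constants work uniformly in $w_v$ so that the expectation can be pulled out. This is however immediate since the implicit constants in Lemma~\ref{lem:exp_deg_sigma} are absolute (they depend only on the model parameters $\tau$, $\sigma$, $\alpha$, $d$, and the hidden constant in the connection-probability kernel), and hence the same $\Theta$-bounds hold pointwise in $w_v$. Other than this book-keeping, the argument is a one-line integration and no further ingredients beyond Lemma~\ref{lem:exp_deg_sigma} and the power-law tail of $\mathcal{D}$ are needed.
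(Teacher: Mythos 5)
Your proposal is correct and follows essentially the same route as the paper, which derives the lemma in one line from Lemma~\ref{lem:exp_deg_sigma} together with the observation that $\E[w_v]=\Theta(1)$; your version merely spells out the truncation at $w_v\le n$ and the uniformity of the hidden constants, both of which are fine.
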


For vertices of sufficiently large weight, their degree is concentrated around its expectation, as stated precisely in the next lemma.

\begin{lemma}\label{lem:degree-concentration}
    Let $\mathcal{G}=(\mathcal{V},\mathcal{E})$ be a Tunable Chung-Lu graph or a TGIRG. Then the following properties hold with probability $1-n^{-\omega(1)}$:
    \begin{enumerate}[label=(\roman*)]
    \item $\deg(v) = O(w_v + \log^2 n)$ for all $v \in\mathcal{V}$.
    \item $\deg(v)= (1+o(1))\E[\deg(v)]= \Theta(w_v)$ for all $v \in\mathcal{V}$ with $w_v = \omega(\log^2 n)$.
    \end{enumerate}
\end{lemma}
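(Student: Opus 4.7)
The plan is to establish both statements by applying Chernoff bounds to $\deg(v)$ for each fixed vertex $v$ and then taking a union bound. The key observation is that, conditional on $w_v$ (and, in the TGIRG case, on $x_v$), the edge indicators $\{\mathbf{1}[uv \in \mathcal{E}]\}_{u \ne v}$ depend on disjoint pieces of randomness---the weight $w_u$, the position $x_u$, and the independent edge coin for each $u$---and are therefore mutually independent. By Lemma~\ref{lem:exp_deg_sigma}, their conditional expectations sum to $\mu := \E[\deg(v) \mid w_v, x_v] = \Theta(\min\{w_v, n\})$. Thus $\deg(v)$ is a sum of at most $n$ independent Bernoulli random variables with mean $\mu$, to which Chernoff bounds apply uniformly in the conditioning.

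I would then split into two regimes according to the size of $w_v$. First, if $w_v \le \log^2 n$ (so $\mu = O(\log^2 n)$), an upper-tail Chernoff bound gives, for any sufficiently large constant $K$,
\[
\Pr\!\big[\,\deg(v) \ge K \log^2 n \,\big|\, w_v, x_v\big] \le \exp(-\Omega(\log^2 n)) = n^{-\omega(1)},
\]
which already yields (i) in this regime. Second, if $w_v > \log^2 n$, I would set $\delta := \log n/\sqrt{\mu}$; this is $o(1)$ precisely when $\mu = \omega(\log^2 n)$, i.e., when $w_v = \omega(\log^2 n)$ (using that whp every vertex has weight $o(n)$, since the maximum of $n$ iid power-law samples with exponent $\tau > 2$ is $o(n)$ whp). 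The multiplicative Chernoff bound then gives
\[
\Pr\!\big[\,|\deg(v) - \mu| \ge \delta \mu \,\big|\, w_v, x_v\big] \le 2\exp(-\Omega(\delta^2 \mu)) = n^{-\omega(1)},
\]
proving $\deg(v) = (1+o(1))\mu = \Theta(w_v)$ and hence (ii); it also implies (i) in this regime. For $w_v$ in the intermediate band just above $\log^2 n$, applying Chernoff with a constant $\delta$ still yields $\deg(v) = O(\mu) = O(w_v)$, consistent with (i).

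A union bound over the $n$ vertices preserves the failure probability at $n \cdot n^{-\omega(1)} = n^{-\omega(1)}$. The only mild subtlety I foresee is that the Chernoff estimates must be genuinely uniform in the conditioning value of $w_v$, which is automatic since the multiplicative bound depends only on $\mu$ and $\delta$. I would additionally include a short preliminary remark ruling out the negligible-probability event that some $w_v$ exceeds $n$---which would otherwise spoil the identification $\mu = \Theta(w_v)$ required for (ii)---via a standard tail bound for the maximum of $n$ power-law samples with $\tau > 2$. Beyond that, no serious obstacle arises: once Lemma~\ref{lem:exp_deg_sigma} is in hand, the result is a routine concentration argument.
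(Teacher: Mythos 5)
Your proposal is correct and follows essentially the same route as the paper, which gives no independent argument here but defers to Lemma~4.4 of \cite{bringmann2024average}, whose proof is exactly this conditional-independence-plus-Chernoff-plus-union-bound computation using $\E[\deg(v)\mid w_v, x_v]=\Theta(\min\{w_v,n\})$. The only caveat worth noting is that the event that some $w_v$ exceeds $n$ has probability $\Theta(n^{2-\tau})$, which is \emph{not} $n^{-\omega(1)}$, so it cannot simply be excluded at the claimed error rate; however, this only threatens the cosmetic identification $\E[\deg(v)]=\Theta(w_v)$ in (ii) (statement (i) and the concentration $\deg(v)=(1+o(1))\E[\deg(v)]$ survive since $\deg(v)\le n-1$), and it is an artifact of the lemma statement shared with the paper rather than a defect of your argument.
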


\begin{proof}
    For classical GIRGs and Chung-Lu graphs, this corresponds to Lemma~4.4 in~\cite{bringmann2024average}. The same proof can be used for the tunable models.
\end{proof}

The above results imply in particular that Chung-Lu graphs and GIRGs have a power-law degree distribution with exponent $\tau$. Next we will show that, just like classical GIRGs, TGIRGs exhibit clustering. Before stating the result formally, we give the required definition of the (average) clustering coefficient of a graph.

\begin{definition}\label{def:clustering-coefficient}
In a graph $\mathcal{G}=(\mathcal{V},\mathcal{E})$ the \emph{local clustering coefficient} of a vertex $v \in \mathcal{V}$ is defined as
\begin{align*}
    \mathrm{cc}(v) := \mathrm{cc}_\mathcal{G}(v) := \begin{cases}
       \frac{|\{\{u,u'\}\subseteq \Gamma(v), uu'\in\mathcal{E}\}|}{\binom{\deg(v)}{2}} & \text{if $\deg(v)\ge 2$,} \\
       0 & \text{otherwise,}
    \end{cases}        
\end{align*}
and the \emph{average clustering coefficient}, or simply \emph{clustering coefficient}, of $\mathcal{G}$ is given by
\begin{align*}
    \mathrm{cc}(\mathcal{G}) := \frac{1}{|\mathcal{V}|}\sum_{v \in \mathcal{V}} \mathrm{cc}(v).
\end{align*}
\end{definition}

Proposition \ref{prop:clustering-tgirg} below then states that the average clustering coefficient of TGIRGs is constant. 

Note that the statement places no further restrictions on $\sigma$. In particular, for $\sigma=1$, we recover the classical GIRG model for which the analogous result was proven originally in~\cite{bringmann2019geometric}. We remark that in Tunable Chung-Lu graphs the situation is completely different and due to the lack of geometry their clustering coefficient tends to 0 with increasing $n$.

\begin{proposition}\label{prop:clustering-tgirg}
    Let $\mathcal{G}=(\mathcal{V},\mathcal{E})$ be a TGIRG. Then whp its clustering coefficient satisfies $\mathrm{cc}(\mathcal{G})=\Theta(1)$.
\end{proposition}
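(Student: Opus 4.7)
The plan is to mirror the classical GIRG clustering argument, exploiting the observation that the TGIRG kernel coincides with the original GIRG kernel up to absolute constants once we restrict to vertices of bounded weight: for $w_u, w_v \in [1, 2]$ we have $(w_u \wedge w_v)^\sigma (w_u \vee w_v) = \Theta(1)$ uniformly in $\sigma \in [0, \tau-1)$. The upper bound $\mathrm{cc}(\mathcal{G}) \le 1$ is trivial, so I would focus on the lower bound $\mathrm{cc}(\mathcal{G}) = \Omega(1)$ whp.

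First, I would fix a vertex $v$ with $w_v \in [1, 2]$. By the power-law assumption the number $N$ of such vertices satisfies $\E[N] = \Theta(n)$, so $N = \Theta(n)$ whp by Chernoff, and by Lemma~\ref{lem:exp_deg_sigma} any such vertex satisfies $\E[\deg(v) \mid w_v] = \Theta(1)$. The next step is to show $\E[\mathrm{cc}(v) \mid w_v \in [1, 2]] = \Omega(1)$ by exhibiting a single event of constant probability on which the local clustering coefficient is bounded below.

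Concretely, for a small constant $c_0 > 0$ to be chosen, let $B_v$ be the geometric box of radius $r_0 = (c_0 / n)^{1/d}$ around $x_v$ in the torus $[0,1]^d$. For any $u$ with $w_u \in [1, 2]$ and $x_u \in B_v$, the TGIRG kernel of \eqref{eq:agirg-connection} (or \eqref{eq:puv2_sigma} when $\alpha = \infty$) evaluates to $\Omega(1)$ provided $c_0$ is small enough, and the same holds for any pair $u, u' \in B_v$ with weights in $[1, 2]$ by the triangle inequality $\|x_u - x_{u'}\| \le 2 r_0$. I would then define $F_v$ as the event that $v$ has exactly two neighbors in the graph, both of weight in $[1, 2]$ and position in $B_v$, and that the edge between those two neighbors is also present. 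A routine computation using Lemma~\ref{lem:tgirg-marginal} (the marginal edge probability after integrating out $x_u$) yields that each specific vertex becomes such a ``good'' neighbor with probability $\Theta(1/n)$; summing over the $\Theta(n^2)$ pairs, and using that no other vertex is a neighbor of $v$ with probability $\Omega(1)$ because $\E[\deg(v)] = \Theta(1)$, gives $\Pr[F_v \mid w_v \in [1, 2]] = \Omega(1)$. On $F_v$ we have $\mathrm{cc}(v) = 1$, hence $\E[\mathrm{cc}(v) \mid w_v \in [1, 2]] = \Omega(1)$, and therefore $\E[\mathrm{cc}(\mathcal{G})] = \Omega(1)$.

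The final step is to upgrade convergence in expectation to a whp statement. The indicator $\mathbf{1}_{F_v}$ depends only on the weight and position of $v$ and on the data of vertices in a shrinking neighborhood of $x_v$, so $\sum_v \mathbf{1}_{F_v}$ is a sum of weakly dependent indicators. A McDiarmid-type bounded-differences inequality, or the second-moment argument employed for the analogous statement on classical GIRGs in \cite{bringmann2019geometric}, then yields $\sum_v \mathbf{1}_{F_v} = \Omega(n)$ whp and therefore $\mathrm{cc}(\mathcal{G}) \ge \tfrac{1}{n} \sum_v \mathbf{1}_{F_v} = \Omega(1)$ whp. The main technical obstacle to simply quoting the GIRG proof is that the weight kernel $(w_u \wedge w_v)^\sigma (w_u \vee w_v)$ behaves very differently from $w_u w_v$ when one of the weights is large; restricting to constant weights short-circuits the dependence on $\sigma$, and because constant-weight vertices already form a $\Theta(n)$ fraction, this suffices for a $\Theta(1)$ lower bound on the overall clustering coefficient.
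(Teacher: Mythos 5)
Your proposal is correct and follows essentially the same route as the paper's proof: restrict to vertices of weight in $[1,2]$ (where the kernel is $\Theta(1)$ uniformly in $\sigma$), find two neighbors inside a box of radius $\Theta(n^{-1/d})$ around $x_v$ that are themselves connected with constant probability, deduce $\E[\mathrm{cc}(\mathcal{G})]=\Omega(1)$, and conclude by a concentration inequality as in the GIRG case. The only difference is an implementation detail: you condition on $v$ having \emph{exactly} two (good) neighbors so that $\mathrm{cc}(v)=1$ on your event, whereas the paper allows at least two neighbors in the box and controls $\deg(v)$ by Markov's inequality, paying a factor $\binom{2E_2/\rho}{2}^{-1}$; both variants are valid.
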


\begin{proof}
Consider a vertex $v\in\mathcal{V}$ of weight $w_v\in[1,2]$, and define $U(v)$ as the subset of points which have geometric distance at most $n^{-1/d}$ from $v$. Given that all weights are $\ge 1$, the connection probability \eqref{eq:agirg-connection} for each pair of vertices in $U(v)$ is at least constant - and this holds in particular for all $\sigma \in [0, \tau-1)$. Therefore, the probability that $v$ has at least two neighbors $u,u'\in U(v)$ can be lower bounded by a constant $\rho$ that is independent of $w_v$. 

Define $E_2 = \Theta(1)$ to be the expected degree of a vertex of weight $2$, and note that $\E[\deg(v) \mid w_v] \le E_2$ since $w_v \le 2$. By Markov's inequality we deduce that $\pr(\deg(v)>2E_2/\rho) \le \rho/2$. A union bound then yields
\begin{align*}
    \pr(\deg(v) > 2E_2/\rho \text{ or } |\Gamma(v) \cap U(v)| \le 1 \mid w_v) \le 1-\rho+\rho/2 = 1-\rho/2.
\end{align*}
By taking the complement, we conclude that, with probability at least $\rho/2$, $v$ has at least two neighbors $u,u'\in U(v)$ and $\deg(v) \le 2E_2/\rho$. Therefore, remembering Definition \ref{def:clustering-coefficient} we get
\begin{align*}
    \E[\mathrm{cc}(\mathcal{G})] \ge \pr(w_v\in[1,2]) \cdot \rho/2 \cdot \pr(uu'\in \mathcal E \mid u,u'\in U(v)) \cdot \binom{2E_2/\rho}{2}^{-1}.
\end{align*}
Note that each term is of order $\Theta(1)$, and hence $\E[\mathrm{cc}(\mathcal{G})]=\Theta(1)$. Concentration can be shown using an Azuma-Hoeffding-type bound, see~\cite{bringmann2019geometric} for details.
\end{proof}

\section{Conditional Distributions Of Weights} \label{sec:conditionaldistributions}

In this section, we derive conditional and joint weight distributions for Tunable Chung-Lu graphs and TGIRGs. Note that all results directly apply to classical Chung-Lu graphs and GIRGs (often with a simplified statement) by taking $\sigma=1$. We begin with the weight distribution of a (vertex) endpoint of an edge drawn uniformly at random.

\begin{proposition} \label{prop:shifted-weight-density-tuned}
    Let $\tau>2$ $\sigma\in[0,\tau-1)$ and let $\mathcal{G} = (\mathcal{V}, \mathcal{E})$ be a Tunable Chung-Lu graph or a TGIRG whose weight distribution $\mathcal{D}$ is a power-law with exponent $\tau$. Denote by $f_{W_v}(w \mid uv\in\mathcal{E})$ the density of the weight of the endpoint of an edge $uv$ chosen uniformly at random in $\mathcal{E}$. Then 
\begin{align}\label{eq:shifted-weight-distribution-assortative}
    f_{W_v}(w \mid uv\in\mathcal{E})=
    \begin{cases}
    \Theta(w^{1-\tau}) & \text{if $w \le  n$,} \\
    \Theta(w^{-\tau}n) & \text{if  $w > n$.}
    \end{cases}
\end{align}
\end{proposition}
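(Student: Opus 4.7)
The plan is to interpret the desired density as the degree-size-biased version of the underlying power-law weight distribution. Sampling a uniformly random endpoint of a uniformly random edge is, conditionally on the realized random graph, equivalent to sampling a vertex $u\in\mathcal{V}$ with probability $\deg(u)/(2|\mathcal{E}|)$. Averaging over the graph, for any measurable $A\subseteq[1,\infty)$,
\begin{align*}
\Pr(W_v\in A\mid uv\in\mathcal{E}) = \E\!\left[\frac{1}{2|\mathcal{E}|}\sum_{u\in\mathcal{V}}\deg(u)\,\mathbf{1}(W_u\in A)\right].
\end{align*}

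The first step is to replace $|\mathcal{E}|$ by its expectation at the cost of a $1+o(1)$ factor. Since $|\mathcal{E}|$ is a sum of (conditionally on weights and positions) independent Bernoulli variables and $\E[|\mathcal{E}|]=\Theta(n)$ by Lemma~\ref{lem:constant-average-degree}, a standard Chernoff bound gives $|\mathcal{E}|=(1\pm o(1))\E[|\mathcal{E}|]$ whp; the $n^{-\omega(1)}$ bad event contributes a negligible amount to the expectation. Unfolding the expectation by conditioning on $W_u$ and using that the weights are i.i.d.\ then gives
\begin{align*}
\E\Big[\sum_{u\in\mathcal{V}}\deg(u)\,\mathbf{1}(W_u\in A)\Big] = n\int_A \E[\deg(u)\mid W_u=w]\,f_W(w)\,dw.
\end{align*}

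The second step plugs in Lemma~\ref{lem:exp_deg_sigma} to obtain $\E[\deg(u)\mid W_u=w]=\Theta(\min\{w,n\})$, together with the power-law density $f_W(w)=\Theta(w^{-\tau})$, and divides by the normalizer $2\E[|\mathcal{E}|]=\Theta(n)$. This yields
\begin{align*}
f_{W_v}(w\mid uv\in\mathcal{E}) = \Theta\bigl(\min\{w,n\}\cdot w^{-\tau}\bigr),
\end{align*}
which splits into $\Theta(w^{1-\tau})$ for $w\le n$ and $\Theta(n\, w^{-\tau})$ for $w>n$, matching~\eqref{eq:shifted-weight-distribution-assortative}. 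Note that the argument applies verbatim to both Tunable Chung-Lu graphs and TGIRGs, since only the marginal statement of Lemma~\ref{lem:exp_deg_sigma} is invoked; the $\sigma$-dependence is absorbed into the $\Theta$ constants.

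The main obstacle is the clean execution of the replacement $|\mathcal{E}|\rightsquigarrow\E[|\mathcal{E}|]$ uniformly in $w$. Vertices of weight $w>n$ are individually capable of contributing a large number of edges, so one must verify that their aggregate contribution remains a lower-order correction; this follows from the tail integral $\int_n^\infty w\cdot w^{-\tau}\,dw = O(n^{2-\tau}) = o(1)$, which uses the assumption $\tau>2$ crucially and is the same mechanism that gives $\E[|\mathcal{E}|]=\Theta(n)$ in the first place. Once this is in hand, the rest of the proof is a direct substitution.
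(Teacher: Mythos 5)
Your proposal is correct and its computational core is identical to the paper's: in both cases the answer is $\E[\deg(u)\mid W_u=w]\cdot f_W(w)$ divided by a $\Theta(n)$ edge count, i.e., Lemma~\ref{lem:exp_deg_sigma} over Lemma~\ref{lem:constant-average-degree}, giving $\Theta((w\wedge n)\,w^{-\tau})$. Where you differ is in how you justify the starting identity. The paper applies Bayes' formula directly on the annealed measure, writing $f_{W_v}(w\mid uv\in\mathcal{E})=\Pr(uv\in\mathcal{E}\mid W_v=w)f_{W_v}(w)/\Pr(uv\in\mathcal{E})$ for two random vertices conditioned on being adjacent; this is a two-line computation that sidesteps the random normalizer $|\mathcal{E}|$ entirely. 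You instead start from the quenched definition (a uniformly random edge of the realized graph), express it as degree-size-biased vertex sampling, and then argue that $|\mathcal{E}|$ may be replaced by its expectation. That route is more faithful to the literal statement ``edge chosen uniformly at random in $\mathcal{E}$'' and makes explicit a subtlety the paper glosses over, but it costs you an extra concentration step, and that step is stated too strongly: conditionally on the weights the edge indicators are independent and Chernoff applies, but the conditional expectation $\Theta(\tfrac1n\sum_{u<v}\min\{(w_u\wedge w_v)^\sigma(w_u\vee w_v),n\})$ is itself a heavy-tailed functional of the weights (tail exponent $\tau-1\in(1,2)$ for the relevant regime), so $|\mathcal{E}|=(1\pm o(1))\E[|\mathcal{E}|]$ does not hold whp. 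What does hold, and is all you need for a $\Theta$-statement, is $|\mathcal{E}|=\Theta(n)$ with probability $1-\eps$ (constants depending on $\eps$), together with the deterministic bound $\tfrac{1}{2|\mathcal{E}|}\sum_u\deg(u)\mathbf{1}(W_u\in A)\le 1$ to control the bad event; alternatively, you can simply adopt the paper's annealed reading of the statement and drop the concentration step altogether.
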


\begin{proof}
Let $f_{W_v}(w)$ denote the density function of the weight $w_v$ of a vertex $v\in\mathcal{V}$ chosen uniformly at random, so that $f_{W_v}(w)=\Theta(w^{-\tau})$ by assumption. Using Bayes' formula we have
\begin{align*}
    f_{W_v}(w \mid uv\in\mathcal{E}) = \frac{\pr(uv\in\mathcal{E} \mid W_v = w)f_{W_v}(w)}{\pr(uv\in\mathcal{E})}.    
\end{align*}
Note that $\pr(uv\in\mathcal{E} \mid W_v = w) = \Theta((w \wedge n)/n)$ by Lemma~\ref{lem:exp_deg_sigma} and $\pr(uv\in\mathcal{E}) = \Theta(1/n)$ by Lemma~\ref{lem:constant-average-degree}. Hence the above equation becomes
\begin{align*}
    f_{W_v}(w \mid uv\in\mathcal{E}) = \Theta\Big(\frac{(w \wedge n)/n \cdot w^{-\tau}}{1/n}\Big) = \Theta\big((w \wedge n)w^{-\tau}\big),    
\end{align*}
which concludes the proof.
\end{proof}

Note that the second case in Eq \eqref{eq:shifted-weight-distribution-assortative} basically never occurs, since whp\ the largest weight is of order $n^{1/(\tau-1)}=o(n)$. Proposition \ref{prop:shifted-weight-density-tuned} therefore tells us that the weight of the endpoint of a random edge follows a power-law with exponent $\tau-1$ (as opposed to the weight of random vertex, which also follows a power-law, but with exponent $\tau$).

The next proposition gives the joint weight density of two endpoints $u,v$ of a randomly chosen edge.

\begin{proposition}\label{prop:joint-weight-density-tuned}
    Let $\tau>2$ $\sigma\in[0,\tau-1)$ and let $\mathcal{G} = (\mathcal{V}, \mathcal{E})$ be a Tunable Chung-Lu graph or a TGIRG whose weight distribution $\mathcal{D}$ is a power-law with exponent $\tau$. Denote by $f_{(W_u,W_v)}(w_u, w_v \mid uv\in\mathcal{E})$ the joint density of the weights of the endpoints of an edge $uv$ chosen uniformly at random in $\mathcal{E}$. Then 
\begin{align*}
    f_{(W_u,W_v)}(w_u, w_v \mid uv\in\mathcal{E})=
    \begin{cases}
    (w_u\wedge w_v)^{\sigma-\tau} (w_u \vee w_v)^{1-\tau} & \mbox{if }  (w_u \wedge w_v)^{\sigma}(w_u \vee w_v) \le  n, \\
    n(w_uw_v)^{-\tau} & \mbox{if }  (w_u \wedge w_v)^{\sigma}(w_u \vee w_v) > n,
    \end{cases}
\end{align*}
i.e.\ $f_{(W_u,W_v)}(w_u, w_v \mid uv\in\mathcal{E}) =  \Theta\big(( (w_u \wedge w_v)^\sigma (w_u \vee w_v) \wedge n)\cdot(w_uw_v)^{-\tau}\big)$.
\end{proposition}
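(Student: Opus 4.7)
The plan is to apply Bayes' formula directly, in complete analogy to the proof of Proposition~\ref{prop:shifted-weight-density-tuned}. Specifically, I would write
\begin{align*}
f_{(W_u,W_v)}(w_u, w_v \mid uv\in\mathcal{E}) = \frac{\Pr(uv\in\mathcal{E} \mid W_u = w_u, W_v = w_v)\, f_{(W_u,W_v)}(w_u, w_v)}{\Pr(uv\in\mathcal{E})}
\end{align*}
and identify each factor on the right-hand side using results already established in Section~\ref{sec:basic-properties}.

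The three ingredients are as follows. Since the weights are drawn iid from a power law with exponent $\tau$, the joint prior factorizes as $f_{(W_u,W_v)}(w_u, w_v) = \Theta((w_u w_v)^{-\tau})$. Lemma~\ref{lem:constant-average-degree} gives the unconditional edge probability $\Pr(uv\in\mathcal{E}) = \Theta(1/n)$. Finally, the conditional edge probability given both weights equals
\begin{align*}
\Pr(uv\in\mathcal{E} \mid W_u, W_v) = \Theta\Bigl(\min\Bigl\{\tfrac{(w_u \wedge w_v)^\sigma (w_u \vee w_v)}{n},\, 1\Bigr\}\Bigr);
\end{align*}
for TGIRGs this is precisely Lemma~\ref{lem:tgirg-marginal} (which already integrates out the geometric positions of $u$ and $v$), while for Tunable Chung-Lu graphs it holds directly by Definition~\ref{def:simple-assortative-chung-lu}.

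Substituting and cancelling the factor of $n$ yields
\begin{align*}
f_{(W_u,W_v)}(w_u, w_v \mid uv\in\mathcal{E}) = \Theta\Bigl(\bigl((w_u \wedge w_v)^\sigma (w_u \vee w_v) \wedge n\bigr) \cdot (w_u w_v)^{-\tau}\Bigr),
\end{align*}
which is exactly the second (consolidated) form of the claim. Splitting into the two cases of the minimum and using $(w_u w_v)^{-\tau} = (w_u \wedge w_v)^{-\tau}(w_u \vee w_v)^{-\tau}$ in the first case then recovers the displayed piecewise expression. There is essentially no obstacle here: the entire proof is a one-line Bayesian computation, and all heavy lifting has already been done in Lemmas~\ref{lem:tgirg-marginal} and~\ref{lem:constant-average-degree}. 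The only minor point to watch is that the conditional edge probability must be understood with the geometric positions already marginalized out, which is precisely the content of the second display in Lemma~\ref{lem:tgirg-marginal}.
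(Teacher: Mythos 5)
Your proposal is correct and follows essentially the same route as the paper's own proof: a one-step application of Bayes' formula, with the prior $\Theta((w_uw_v)^{-\tau})$ from independence of the weights, the marginal edge probability from Lemma~\ref{lem:tgirg-marginal} (positions integrated out), and the normalization $\Theta(1/n)$ from Lemma~\ref{lem:constant-average-degree}. Your remark that for Tunable Chung-Lu graphs the conditional edge probability follows directly from the definition is a correct and slightly more explicit handling of a point the paper leaves implicit.
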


\begin{proof}
Let $f_{W_u,W_v}(w_u,w_v)$ denote the joint density function of the weights $W_u$ and $W_v$ of two vertices $u,v\in\mathcal{V}$ chosen uniformly at random, so that $f_{W_u,W_v}(w_u,w_v)=\Theta((w_uw_v)^{-\tau})$ since the weights are iid. Using Bayes' formula we have
\begin{align*}
    f_{W_v,W_v}(w_u, w_v \mid uv\in\mathcal{E}) = \frac{\pr(uv\in\mathcal{E} \mid W_u = w_u, W_v = w_v)f_{W_u,W_v}(w_u,w_v)}{\pr(uv\in\mathcal{E})}.  
\end{align*}
Now by Lemma~\ref{lem:tgirg-marginal} and Lemma~\ref{lem:constant-average-degree}, this becomes
\begin{align*}
    f_{W_u,W_v}(w_u, w_v \mid uv\in\mathcal{E}) = \Theta\Big(\frac{((w_u\wedge w_v)^\sigma (w_u \vee w_v) \wedge n)/n \cdot (w_uw_v)^{-\tau}}{1/n}\Big)  \\ =\begin{cases}
    (w_u\wedge w_v)^{\sigma-\tau} (w_u \vee w_v)^{1-\tau} & \mbox{if }  (w_u \wedge w_v)^{\sigma}(w_u \vee w_v) \le  n, \\
    n(w_uw_v)^{-\tau} & \mbox{if }  (w_u \wedge w_v)^{\sigma}(w_u \vee w_v) > n.
    \end{cases}
\end{align*}
\end{proof}

We remark here that the case distinction stems from the cutoff at 1 in the connection probability. There is a multigraph version of the model where we do not cap the expression at 1 but simply require that the expected number of edges connecting $u$ and $v$ is $\frac{(w_u \wedge w_v)^{\sigma}(w_u \vee w_v)}{n}$. For the vast majority of the vertices, this expectation will naturally be at most $1$, and for the remaining vertex pairs, one can for example randomly assign $\lfloor \frac{(w_u \wedge w_v)^{\sigma}(w_u \vee w_v)}{n} \rfloor$ or $\lceil\frac{(w_u \wedge w_v)^{\sigma}(w_u \vee w_v)}{n} \rceil$ edges, with the probabilities chosen to match the desired expectation. In this multigraph version, the first line in the display equation in Proposition \ref{prop:joint-weight-density-tuned} holds for all weights, not just for weights with $(w_u \wedge w_v)^{\sigma}(w_u \vee w_v) \le n$. In the simple graph version that is the focus of this paper, the first line of the display still applies for a large majority of vertex pairs. Note that for the classical Chung-Lu and GIRG models (obtained by choosing $\sigma=1$), this line simplifies to $f_{W_u,W_v}(w_u, w_v \mid uv\in\mathcal{E}) = w_u^{1-\tau} w_v^{1-\tau}$, and by Proposition \ref{prop:shifted-weight-density-tuned} this is the same as $f_{W_u}(w_u \mid uv\in\mathcal{E}) \cdot f_{W_v}(w_v \mid uv\in\mathcal{E})$. In other words, (under the cutoff) the two endpoint weights are independent, which already is a strong indication of the neutral assortativity of these classical models. The next proposition gives the weight density of a vertex $v$ as a function of the weight $w_u$ of its neighbor $u$, which gives more insight in the assortative behavior of Tunable Chung-Lu graphs and TGIRGs in general.

\begin{proposition}\label{prop:shifted-conditional-weight-distribution-tuned}
Let $\tau\in(2,3)$, $\sigma\in[0,\tau-1)$ and let $\mathcal{G} = (\mathcal{V}, \mathcal{E})$ be a Tunable Chung-Lu graph or a TGIRG whose weight distribution $\mathcal{D}$ is a power-law with exponent $\tau$. 
Consider an edge $uv \in \mathcal{E}$ connecting a vertex $u$ with weight $w_u$ to some vertex $v$. Then the conditional distribution of the weight $W_v$ of $v$ satisfies
\begin{align*}
    f_{W_v}(w \mid w_u, uv\in\mathcal{E})=
    \begin{cases}
    \Theta(w^{1-\tau}(w\wedge w_u)^{\sigma-1}) & \text{if $(w \wedge w_u)^{\sigma}(w \vee w_u) \le n$,} \\
    \Theta(\tfrac{w^{-\tau} n}{w_u}) & \text{if  $w_u \le n \le (w \wedge w_u)^{\sigma}(w \vee w_u)$,} \\
    \Theta(w^{-\tau}) & \text{if $w_u > n$.}
    \end{cases}
\end{align*}
\end{proposition}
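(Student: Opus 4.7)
The plan is to reduce the statement to a direct application of Bayes' formula, exactly mirroring the structure of the proofs of Proposition~\ref{prop:shifted-weight-density-tuned} and Proposition~\ref{prop:joint-weight-density-tuned}. Concretely, I will write
\[
    f_{W_v}(w \mid w_u, uv\in\mathcal{E}) \;=\; \frac{\Pr(uv\in\mathcal{E} \mid w_u, W_v = w)\, f_{W_v}(w)}{\Pr(uv\in\mathcal{E} \mid w_u)},
\]
and then plug in the three ingredients that have already been established: Lemma~\ref{lem:tgirg-marginal} gives the numerator factor $\Theta(\min\{(w \wedge w_u)^{\sigma}(w \vee w_u)/n,\, 1\})$; the prior $f_{W_v}(w) = \Theta(w^{-\tau})$ comes from the power-law assumption; and Lemma~\ref{lem:exp_deg_sigma} gives $\Pr(uv\in\mathcal{E} \mid w_u) = \Theta(\min\{w_u/n,\, 1\})$. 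Alternatively, the same identity can be obtained as the ratio of the joint density from Proposition~\ref{prop:joint-weight-density-tuned} over the marginal from Proposition~\ref{prop:shifted-weight-density-tuned}, which is a nice sanity check.

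With the formula assembled, everything reduces to a case analysis driven by the two minima. First I would note that because all weights are $\ge 1$ and $\sigma \ge 0$, we always have $(w \wedge w_u)^{\sigma}(w \vee w_u) \ge w_u$; so the regime $w_u > n$ automatically forces the inner minimum to saturate, while the regime $(w \wedge w_u)^{\sigma}(w \vee w_u) \le n$ automatically guarantees $w_u \le n$. This observation aligns the three cases of the proposition with the three compatible combinations of the two minima: (i) both minima active at their fractional values, (ii) only the denominator minimum active, and (iii) both minima saturated at $1$. Substituting into the Bayes formula gives, respectively, $\Theta\bigl((w\wedge w_u)^{\sigma}(w \vee w_u)\, w^{-\tau}/w_u\bigr)$, $\Theta(w^{-\tau} n / w_u)$, and $\Theta(w^{-\tau})$, which immediately matches the last two cases of the proposition.

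The only bookkeeping that remains is to simplify the first of the three expressions, and this is the step where I would be most careful, since it needs a further subcase distinction on whether $w \le w_u$ or $w > w_u$. When $w \le w_u$, the expression simplifies to $\Theta(w^{\sigma}\cdot w^{-\tau}) = \Theta(w^{1-\tau}\cdot w^{\sigma-1})$, and $(w \wedge w_u)^{\sigma-1} = w^{\sigma-1}$ by definition. When $w > w_u$, the same expression simplifies to $\Theta(w_u^{\sigma-1}\cdot w^{1-\tau})$, and now $(w \wedge w_u)^{\sigma-1} = w_u^{\sigma-1}$. Thus both subcases collapse to the unified closed form $\Theta(w^{1-\tau}(w \wedge w_u)^{\sigma-1})$, completing the first case of the proposition.

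There is no deep mathematical obstacle here; the whole argument is mechanical once Lemma~\ref{lem:tgirg-marginal} and Lemma~\ref{lem:exp_deg_sigma} are in hand. The main subtlety to get right is the alignment of the case boundaries between the two minima, and verifying that the two subcases $w \le w_u$ and $w > w_u$ combine into a single uniform formula in the $(w \wedge w_u)^{\sigma-1}$ factor. Because $\sigma$ can be either smaller or larger than $1$, it is worth emphasizing in the proof that the exponent $\sigma-1$ can have either sign, which is precisely what makes the model tunably assortative or disassortative in the weights, as already discussed around Eq~\eqref{eq:conditional-tuned-intro}.
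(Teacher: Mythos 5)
Your proposal is correct and follows essentially the same route as the paper: the paper's proof computes $f_{W_v}(w \mid w_u, uv\in\mathcal{E})$ as the ratio of the joint density from Proposition~\ref{prop:joint-weight-density-tuned} over the marginal from Proposition~\ref{prop:shifted-weight-density-tuned}, arriving at the same compact expression $\Theta\bigl(((w \wedge w_u)^\sigma (w \vee w_u)\wedge n)\, w^{-\tau}/(w_u \wedge n)\bigr)$ that your direct Bayes computation from Lemmas~\ref{lem:tgirg-marginal} and~\ref{lem:exp_deg_sigma} produces. Your explicit alignment of the case boundaries (via the observation $(w \wedge w_u)^{\sigma}(w \vee w_u) \ge w_u$) and the subcase check for $w \le w_u$ versus $w > w_u$ is a correct spelling-out of the bookkeeping the paper leaves implicit.
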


\begin{proof}
From the definition of the conditional density, and using Propositions \ref{prop:shifted-weight-density-tuned} and \ref{prop:joint-weight-density-tuned}, we get
    \begin{align*}
    f_{W_v}(w \mid w_u, uv\in\mathcal{E})= \frac{f_{W_v,W_v}(w, w_u \mid uv\in\mathcal{E})}{f_{W_u}(w_u \mid uv\in\mathcal{E})} = \Theta\Big(\frac{((w \wedge w_u)^\sigma (w \vee w_u)\wedge n) w^{-\tau}}{w_u \wedge n}\Big),
\end{align*}
which concludes the proof.
\end{proof}

Again, the last case of the display equation in Proposition \ref{prop:shifted-conditional-weight-distribution-tuned} is basically void since whp\ all weight are much smaller than $n$. Moreover, the second case only applies to exceptionally large weights. For the classical choice $\sigma=1$, the first case gives $f_{W_v}(w \mid w_u, uv\in\mathcal{E})= \Theta(w^{1-\tau})$ and in particular this conditional distribution is actually independent of $w_u$, which shows the neutral assortativity of Chung-Lu graphs and GIRGs. Compared to this baseline, we see that by choosing $\sigma>1$ the distribution gets a boost of order $(w\wedge w_u)^{\sigma-1}$. In particular, this boosting factor is increased for larger $w$ and $w_u$, which translates to positive assortativity. On the other hand, choosing $\sigma<1$ makes the exponent $\sigma-1$ negative, and thus yields an opposite effect, which shows the disassortative behavior of the corresponding graph models.

\section{Degree Assortativity In RGGs, Chung-Lu Graphs And GIRGs}\label{sec:positive}

So far, we have analyzed assortativity with respect to the vertex weights of our random graph models. However, our ulterior motive is to understand assortativity with respect to vertex \emph{degrees}. 
In the graph models of interest here, the weight of a vertex corresponds to its expected degree, and hence Propositions \ref{prop:shifted-weight-density-tuned}-\ref{prop:shifted-conditional-weight-distribution-tuned} are also informative about the degree assortativity in typical instances of Chung-Lu graphs and GIRGs. Moreover, the degrees of vertices of sufficiently large (polylogarithmic) weights are whp\ of the same order as their weight (see e.g.\ Lemma \ref{lem:degree-concentration}), making these vertices completely assortativity-neutral in terms of degree as well.

In GIRGs however, there is an additional effect caused by random vertex-density fluctuations in certain regions, i.e., by chance, some regions in the ground space of the GIRG may contain more nodes. These effects are noticeable for low-degree nodes. In order to explain this effect, we consider the simpler model of Random Geometric Graphs. We note here that the same analysis applies to GIRGs but comes at the price of much higher technicality. We omit the details but note qualitatively that the larger the weights of the involved nodes, the smaller the effect.

\subsection{Degree Assortativity in Random Geometric Graphs}

In the following, we derive the conditional degree distribution for vertices given the degree of one of their neighbors in Random Geometric Graphs, demonstrating that degrees of neighboring vertices are positively correlated. In this section, we will denote by $B_r(x)$ the ball of radius $r$ centered at $x \in [0,1]^d$. By $\Vol{(A)}$ we denote the Euclidean volume of a set $A \subseteq [0,1]^d$ and by $Po(\mu)$, following the usual convention, a Poisson random variable with mean $\mu$. Recall from our definition of RGGs that the latent space we are using is equipped with the torus topology and distances are measured using the max-norm $\|\cdot\|_{\infty}$.
\begin{proposition}
\label{prop:rggassortativity}
Let $\mathcal{G}=(\mathcal{V}, \mathcal{E})$ be a Random Geometric Graph on $n$ vertices in the $d$-dimensional hypercube, let $uv\in \mathcal{E}$ be an edge selected uniformly at random from $\mathcal{E}$, and let the degree of $u$ be $\deg(u)=k$. Then $\deg(v)$ is distributed as
\[
\deg(v) \sim 1+ Po\Big((k-1)\frac{\mathrm{Vol}{(B_r(x_u)\cap B_r(x_v))}}{\mathrm{Vol}{(B_r(x_u))}}\Big)+ Po\Big((n-1-k) \frac{\mathrm{Vol}{(B_r(x_u) \setminus B_r(x_v))}}{\mathrm{Vol}{([0,1]^d)}}\Big),
\]
where 
\[
\mathbb{E}\Big[\frac{\mathrm{Vol}(B_r(x_u)\cap B_r(x_v))}{\mathrm{Vol}(B_r(x_u))}\Big]=\Big(\frac{3}{4}\Big)^d.
\]
In particular, $\deg(v)$ can be stochastically lower-bounded by a random variable $X$ distributed as
\[
X \sim 1+ Po(\Theta(k)).
\]
\end{proposition}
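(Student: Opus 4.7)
The plan is to condition on the positions $x_u$ and $x_v$ of the endpoints of the randomly chosen edge together with $\deg(u)=k$, and to decompose $\deg(v)$ according to which of three disjoint groups of vertices its neighbors come from: $u$ itself, the $k-1$ other neighbors of $u$, and the $n-1-k$ vertices that are neither $u$ nor neighbors of $u$. By exchangeability and the uniform placement of vertex positions, conditional on $\deg(u)=k$ the $k-1$ other neighbors of $u$ are iid uniform on $B_r(x_u)$, the remaining $n-1-k$ vertices are iid uniform on $[0,1]^d \setminus B_r(x_u)$, and the two groups are independent.

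Writing $N_1$ for the number of the $k-1$ other $u$-neighbors that fall into $B_r(x_v)$ and $N_2$ for the number of the non-$u$-neighbors that fall into $B_r(x_v)$, this decomposition gives $\deg(v) = 1 + N_1 + N_2$, where $N_1 \sim \mathrm{Bin}(k-1,q_1)$ and $N_2 \sim \mathrm{Bin}(n-1-k,q_2)$ are independent, with $q_1 = \mathrm{Vol}(B_r(x_u)\cap B_r(x_v))/\mathrm{Vol}(B_r(x_u))$ and $q_2 = \mathrm{Vol}(B_r(x_v)\setminus B_r(x_u))/(1-\mathrm{Vol}(B_r(x_u)))$; the identity $\mathrm{Vol}(B_r(x_v)\setminus B_r(x_u)) = \mathrm{Vol}(B_r(x_u)\setminus B_r(x_v))$ (equal ball volumes) matches the display in the statement. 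The Poisson form of $N_2$ follows from the standard Poisson approximation since $r = \Theta(n^{-1/d})$ forces $q_2 = O(1/n)$; the Poisson form of $N_1$ becomes exact in the Poisson point process version of the RGG by Palm calculus on disjoint regions, and otherwise should be read as the large-$n$ limit.

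The expected volume ratio is then a direct computation. In the max-norm on the torus both balls are axis-aligned cubes, so the intersection factorises as $\prod_{i=1}^d (2r - |x_{u,i}-x_{v,i}|)$ whenever $\|x_u-x_v\|_\infty \le r$. Since $x_v$ is uniform on $B_r(x_u)$, each $|x_{u,i}-x_{v,i}|$ is uniform on $[0,r]$ with mean $r/2$, so the expectation factorises coordinate-wise and yields $\prod_{i=1}^d (3r/2)/(2r) = (3/4)^d$. For the stochastic lower bound, I drop the non-negative term $N_2$ and use the pointwise bound $q_1 \ge (1/2)^d$ (since the coordinate-wise overlap is always at least $r$), giving $\deg(v) \ge 1 + \mathrm{Bin}(k-1,(1/2)^d)$, a binomial of mean $\Theta(k)$; in the Poisson-approximation regime used in the statement this translates to $\deg(v) \ge 1 + \mathrm{Po}(\Theta(k))$. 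The main subtlety is justifying the Poisson form of $N_1$, whose success probability $q_1$ does not vanish: the cleanest fix is to state everything in the Poisson point process variant, where the counts in disjoint regions are independently Poisson to begin with, and to pass to the fixed-$n$ version via a standard total-variation bound.
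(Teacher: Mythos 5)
Your proof follows essentially the same route as the paper's: the same three-way decomposition of $\deg(v)$ into $u$ itself, the $k-1$ other neighbors of $u$ (uniform on $B_r(x_u)$, hence landing in $B_r(x_u)\cap B_r(x_v)$ with probability equal to the volume ratio), and the $n-1-k$ non-neighbors, together with the same coordinate-wise factorization yielding $(3/4)^d$ and the same deterministic bound $\mathrm{Vol}(B_r(x_u)\cap B_r(x_v)) \ge 2^{-d}\,\mathrm{Vol}(B_r(x_u))$ for the stochastic lower bound. If anything, you are more careful than the paper on the one delicate point: the paper asserts the Poisson form of both counts directly, whereas you correctly observe that they are Binomials and that the first one, whose success probability is $\Theta(1)$, is exactly Poisson only in the Poissonized version of the model (or as a limit), which is a legitimate refinement rather than a deviation.
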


\begin{proof}
Take an edge $uv \in \mathcal{E}$ drawn uniformly at random, let $\deg(u)=k$ and consider $\deg(v)$. Let $Y$ be a random variable distributed as 
\begin{align*}
  Y \sim 1+ Po\Big((k-1)\frac{\mathrm{Vol}(B_r(x_u)\cap B_r(x_v))}{\mathrm{Vol}(B_r(x_u))}\Big)+ Po\Big((n-1-k) \frac{\mathrm{Vol}(B_r(x_u) \setminus B_r(x_v))}{\mathrm{Vol}([0,1]^d)}\Big).  
\end{align*}
We claim that $Y$ is indeed distributed as $\deg(v)$.
Observe that conditioning on $uv \in \mathcal{E}$ yields an additive $1$ term to $\deg(v)$, and $v$ might connect to the $n-2$ vertices in the set $\mathcal{V}\setminus\{u,v\}$. Consider now the intersection of the boxes of influence of $u$ and $v$ (see Fig~\ref{fig:geometric-assortativity} for reference). We know that $u$ has precisely $\deg(u)-1$ many neighbors besides $v$, and  $Po((k-1) \cdot \mathrm{Vol}(\frac{B_r(x_u)\cap B_r(x_v)}{B_r(x_u)})$ many of them lie in $B_r(x_u)\cap B_r(x_v)$ and are therefore also neighbors of $v$. 
Finally, there are exactly $n-1-k$ many non-neighbors of $u$, which lie outside of $u$'s box of influence $B_r(x_u)$. Among these vertices, all which additionally lie in $B_r(x_v)$ are neighbors of $v$, hence contributing a term of $Po((n-1-k)\cdot \mathrm{Vol}(B_r(x_u) \setminus B_r(x_v)))$. 
This concludes the proof of the first statement since $\mathrm{Vol}([0,1]^d)=1$.

We proceed to compute the expected volume of $B_r(x_u)\cap B_r(x_v)$, conditioned on $u$ and $v$ being neighbors. First note that
by definition
both vertices have boxes of influence of the same volume. Furthermore, $u$ and $v$ are connected if and only if they both lie in each other's box of influence, or equivalently, $x_u, x_v\in B_r(x_u)\cap B_r(x_v)$. Observe now, that, conditional on $uv \in \mathcal{E}$, the position $x_v$ of $v$ is uniformly random in $B_r(x_u)$. We first compute the volume of the expected intersection in dimension $d=1$ and then extend the computation to arbitrary $d$.

Let now $d=1$, and consider hence $B_r(x_u)$, an interval of length $r=\Theta(1/n^{1/d})=\Theta(1/n)$ in $[0,1]$. Then the volume (length) of the intersection of the boxes (intervals) of influence is given by
\begin{align*}
    \mathrm{Vol}(B_r(x_u) \cap B_r(x_v))=r-|x_u-x_v|,
\end{align*}
and note that, conditioned on $uv\in\mathcal{E}$, the distance $|x_u-x_v|$ is distributed uniformly in $[0,r/2]$. We can thus compute the expected volume as follows
\begin{align*}
   \E[\mathrm{Vol}(B_r(x_u) \cap B_r(x_v)) \mid uv\in\mathcal{E}]
   = r -\E[|x_u-x_v| \mid uv\in\mathcal{E}]
   = r-\frac{r}{4} = \frac{3r}{4}.
\end{align*}

We now generalize this to higher dimensions $d$. Denote the coordinates of $x_v$ by $(x_v^1, \ldots , x_v^d)$. Recall that, for $i\neq j\in \{1,\dots, d\}$ the random variables $x_v^i, x_v^j$ are independent. Observe furthermore that the total volume $\mathrm{Vol}(B_r(x_u) \cap B_r(x_v))$ is just the product of the length of the coordinate-wise intersections. Hence, due to the independence of the coordinates, we can compute the expected volume by taking the $d$-fold product of the expectation in the display above. More formally, denote by $B_{x_i}$ the intersection of the boxes of influence of $u$ and $v$ in the $i$-th coordinate, for $i=1,\dots, d$. Then we have
\[
\E[\mathrm{Vol}(B_r(x_u) \cap B_r(x_v)) \mid uv\in\mathcal{E}]=\E\Big[\prod_{i=1}^d B_{x_i} \mid uv\in\mathcal{E}\Big]= \prod_{i=1}^d \E[B_{x_i} \mid uv\in\mathcal{E}]= \frac{3^dr^d}{4^d}.
\]
Since $\mathrm{Vol}(B_r(x_u)) = r^d$, this concludes the proof of the second statement.

Finally, the last statement is a simple corollary of the first statement. It follows by noticing that
\begin{align*}
     Y &\sim  1+ Po\Big((k-1)\frac{\mathrm{Vol}(B_r(x_u)\cap B_r(x_v))}{\mathrm{Vol}(B_r(x_u))}\Big)+ Po\Big((n-1-k) \frac{\mathrm{Vol}(B_r(x_u) \setminus B_r(x_v))}{\mathrm{Vol}([0,1]^d)}\Big) \\
     &\ge 1+ Po\Big((k-1)\frac{\mathrm{Vol}(B_r(x_u)\cap B_r(x_v))}{\mathrm{Vol}(B_r(x_u))}\Big) = 1+ Po(\Theta(k)),
\end{align*}
where the last step holds because $\mathrm{Vol}(B_r(x_v) \cap B_r(x_u)) \ge 2^{-d}\mathrm{Vol}(B_r(x_u))$ deterministically.
\end{proof}

\begin{figure}
\centering

\begin{tikzpicture}[scale=0.2]

\fill[pattern={Lines[angle=-45, distance=10pt]}, pattern color=orange] (-14,-14) rectangle (24,14);
\filldraw[fill=white] (-10,-10) rectangle (10,10);

\filldraw[black] (0,0) circle (5pt) node[anchor=east]{$u$};
\filldraw[black] (7,2) circle (5pt) node[anchor=west]{$v$};
\draw (-10,-10) rectangle (10,10);
\draw (-3,-8) rectangle (17,12);
\draw[very thick] (0,0) -- (7,2);
\node[anchor=north] (edge) at (3.5,1){$e$};

\fill[pattern={Lines[angle=45, distance=10pt]}, pattern color=violet] (-10,-10) rectangle (10,10);

\begin{scope}
  \clip (-10,-10) rectangle (10,10);
  \draw[blue, very thick] (-3,-8) rectangle (17,12);
\end{scope}

\begin{scope}
  \clip (-3,-8) rectangle (17,12);
  \draw[blue, very thick] (-10,-10) rectangle (10,10);
\end{scope}

\end{tikzpicture}

\caption{Illustration of the intersection of the boxes of influence $B_r(x_u) \cap B_r(x_v)$ of two vertices $u,v$. If we know that $\deg(u) = k$, then the density of vertices inside $B_r(x_u)$ (area shaded in purple) is proportional to $k$, while the density of vertices outside $B_r(x_u)$ (shaded in orange) decreases (slightly) with $k$. Note that the intersection $B_r(x_u) \cap B_r(x_v)$ (delimited in blue) makes up a constant fraction of the area of $B_r(x_v)$, which leads to positive assortativity in RGGs.
}

\label{fig:geometric-assortativity}
\end{figure}
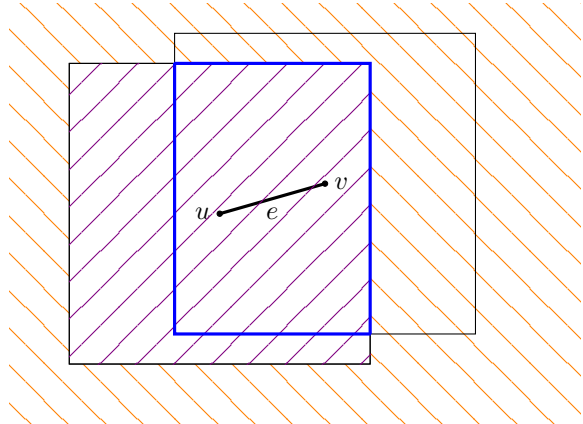

Proposition \ref{prop:rggassortativity} tells us that RGGs have positive degree assortativity in the following sense. The degree of a random neighbor $v$ of some vertex $u$ can be decomposed into two independent Poisson random variables. The expectation of the first one does not change with $\deg(u)$, while the expectation of the second one scales linearly with $\deg(u)$.

\subsection{Degree Assortativity in GIRGs} \label{subsec:degreeassortativity}

We are now in a position to summarize our findings. We can distinguish the following cases. Pick a vertex $u$ of weight $w_u$ and one of its neighbors $v$ of weight $w_v$. 
\begin{itemize}
    \item Whenever $w_u\cdot w_v \le n$, the conditional weight distribution of $w_v$ is independent of $w_u$ (Proposition~\ref{prop:shifted-conditional-weight-distribution-tuned})
    . Hence the only assortative effect on the degree of $v$ comes from the geometric (positive) assortativity: The degree of a vertex conditional on the degree of its neighbor depends positively on the intersection of their balls of influence - whereby here the ball of influence is the geometric region (which is a function of $w_v$) in which $v$ connects to every vertex with at least constant probability.\footnote{Note that whenever $w_v \gg w_u$, then the ball of influence of $u$ is typically contained in the ball of influence of $v$.} For constant-weight vertices, this effect of the geometry indeed plays a prominent role. This influence decays as $w_v$ grows. Since whp\ no weight is of order $\Omega(n)$, this scenario occurs in particular whenever $w_v=O(1)$.
    \item If $w_u\cdot w_v > n$, the conditional density resp.\ tail of the distribution of $w_v$ decays (proportionally) as $w_u$ increases. Since, as mentioned, degrees are concentrated for large weights, our theoretical results suggest that the negatively assortative effect of the weights exceed the positive (but vanishing) assortative effect of the geometry.
    \item If either $w_v$ or $w_u$ is larger than $n$, the weights have no assortative effect anymore (cf.\ Proposition~\ref{prop:shifted-conditional-weight-distribution-tuned}). 
    However, whp\ this case does not occur across the whole graph.
\end{itemize}

\section{Experiments}\label{sec:experiments}

Here we complement our theoretical considerations with an empirical evaluation.
Besides other interesting observations, our experiments in particular support our claims that (i) our tunable models indeed control the assortativity in the desired way and that (ii) using just a single number to describe how vertices of different degrees tend to connect does not do the rich underlying structure justice.

We start in Section~\ref{sec:eval-assort-coeff} by studying assortativity coefficients based on the correlation measures between the two endpoints of an edge.
As correlation measures, we consider Pearson's, Spearman's, and variants of Kendall's coefficients.
Our analysis confirms previous findings \cite{litvak2013uncovering} that Spearman's rank correlation coefficient (and similarly Kendall's) provides a more appropriate measure of assortativity than Pearson's correlation coefficient, although we will see in Section ~\ref{sec:eval-joint-distribution} that reducing the measure assortativity  to a single number nonetheless has serious limitations. Additionally, we show how the $\sigma$ parameter in our tunable models can be leveraged to regulate these assortativity coefficients.

In Section~\ref{sec:eval-joint-distribution}, we take a more fine-grained look at the joint distribution of vertex degrees.
This provides deeper insights into how vertices of different degrees tend to connect that go beyond what can be expressed with just a single number.

\paragraph{Overview Of Real-World Networks.}
Before presenting the experimental results, we provide brief background information on the real-world network dataset examined here; also see Table~\ref{tab:real-world-graphs}. More detailed descriptions of the individual networks can be found in the Supporting Information. In order to allow for meaningful results, we restrict our analysis to networks which are verifiably power-law. The rigorous detection of power laws in empirical degree distributions in real-world networks is a challenging task -- hence we rely on the evaluations of~\cite{voitalov2019scale}, who analyzed 35 undirected real-world networks without multi-edges or loops collected from the KONECT database~\cite{kunegis2013konect}. The methodology of~\cite{voitalov2019scale} ensures in particular statistically consistent and robust estimations of the tail exponent $\tau$ of the power law from the measured degree sequence.
We further required that the networks were non-bipartite, and had estimated power-law parameters within or close to the range $2 < \tau < 3$, for all three employed estimators: Hill, Moments and Kernel estimator.
For the experiments comparing assortativity coefficient values, we additionally include networks which are of strong power-law type but whose exponent $\tau$ is larger than $3$.

\begin{table}[ht]
  \centering
  \caption{The real-world networks we use in our experiments.
    We report the number of vertices, the number of edges, the average degree, the Spearman assortativity coefficient and the power-law exponent $\tau$ estimated by the Hill estimator~\cite{Simpl_Gener_Approac_Infer_About_Tail_Distr-Hill75} as reported in \cite{voitalov2019scale}.}
  \label{tab:real-world-graphs}
  \vspace{1ex}
  \begin{tabular}{lrrrrr}
    \toprule
    graph           & vertices      & edges           & avg degree  & $\tau$     & assortativity \\ 
    \midrule
    LiveJournal     & \num{5204176} & \num{49174464}  & \num{18.90}  & \num{3.86} & \num{0.4}     \\ 
    Orkut           & \num{3072441} & \num{117184899} & \num{76.28} & \num{3.58} & \num{0.36}    \\ 
    Gowalla         & \num{196591}  & \num{950327}    & \num{9.67}  & \num{2.80}  & \num{0.25}    \\ 
    Brightkite      & \num{58228}   & \num{214078}    & \num{7.35}  & \num{3.51} & \num{0.22}    \\ 
    Bible names     & \num{1773}    & \num{9131}      & \num{10.30}  & \num{3.09} & \num{0.08}    \\ 
    WordNet         & \num{146005}  & \num{656999}    & \num{9.00}     & \num{2.86} & \num{0.03}    \\ 
    Human PPI       & \num{3133}    & \num{6726}      & \num{4.29}  & \num{3.04} & \num{-0.03}   \\ 
    Youtube         & \num{1134890} & \num{2987624}   & \num{5.27}  & \num{2.48} & \num{-0.08}   \\ 
    Skitter         & \num{1696415} & \num{11095298}  & \num{13.08} & \num{2.38} & \num{-0.14}   \\ 
    Proteins        & \num{1870}    & \num{2277}      & \num{2.44}  & \num{3.09} & \num{-0.16}   \\ 
    Amazon          & \num{334863}  & \num{925872}    & \num{5.53}  & \num{3.99} & \num{-0.19}   \\ 
    Catster/Dogster & \num{623766}  & \num{15699276}  & \num{50.34} & \num{2.10}  & \num{-0.23}   \\ 
    Dogster         & \num{426820}  & \num{8546581}   & \num{40.05} & \num{2.15} & \num{-0.3}    \\ 
    Route views     & \num{6474}    & \num{13895}     & \num{4.29}  & \num{2.13} & \num{-0.5}    \\ 
    CAIDA           & \num{26475}   & \num{53381}     & \num{4.03}  & \num{2.10}  & \num{-0.53}   \\ 
    Hyves           & \num{1402673} & \num{2777419}   & \num{3.96}  & \num{2.98} & \num{-0.55}   \\ 
    Catster         & \num{149700}  & \num{5449275}   & \num{72.80}  & \num{2.09} & \num{-0.56}   \\ 
    \bottomrule
  \end{tabular}
\end{table}

\paragraph{Code And Data.}
The code used for these experiments is freely available on GitHub\footnote{\url{https://github.com/thobl/assortativity}} and Zenodo\footnote{\url{https://zenodo.org/records/16746826}}.
All used and produced data is available on Zenodo\footnote{\url{https://zenodo.org/records/16745980}}.

\subsection{Assortativity Coefficients}
\label{sec:eval-assort-coeff}

An assortativity coefficient measures the correlation between the degrees of the vertices of an edge.
To make this more precise, consider an undirected graph $\mathcal{G} = (\mathcal{V}, \mathcal{E})$.
For each edge $\{u, v\} \in \mathcal{E}$ we consider the pairs $(\deg(u), \deg(v))$ and $(\deg(v), \deg(u))$.
For this set of pairs, we consider three types of correlation measures: Pearson's coefficient, Spearman's coefficient, and Kendall's coefficient.

\paragraph{Definition of Different Correlation Measures.}

The \emph{Pearson} correlation is a normalized measure of covariance and essentially measures how linear the correlation between the two dimensions is; also see the definition in Section~\ref{sec:pearson}.
It yields values in $[-1, 1]$, where the extreme values of $-1$ and $1$ are obtained if all data points lie on a line with negative and positive slope, respectively.

The \emph{Spearman} correlation is defined as the Pearson correlation but on the ranks.
Thus a value of $1$ is attained if sorting the points by the first or second coordinate yields the same order, and $-1$ if it yields the reversed order.

While Spearman correlation also takes into account the magnitude of the deviation in ranks, Kendall's correlation only records the sign of the deviation.
To make this more precise, we first require the concepts of concordant and discordant pairs.
Consider two observations $(X_i, Y_i)$ and $(X_j, Y_j)$ (each representing the vertex degrees of an edge).
They are called \emph{concordant} if the line through the two points has positive slope, i.e., if either $X_i < X_j$ and $Y_i < Y_j$ or $X_i > X_j$ and $Y_i > Y_j$.
They are \emph{discordant} if the line has negative slope.
If they are neither concordant nor discordant, they are \emph{tied}.
The \emph{Kendall} correlation is the normalized difference between the number of concordant and discordant pairs, where there are different variants for the normalization depending on how to deal with ties. 
We simply use $\frac{C - D}{C + D}$ where $C$ and $D$ are the number of concordant and discordant pairs respectively.
In preliminary experiments, we also used other variants of normalization (in particular Kendall $\tau_a$ and $\tau_b$) but the differences are negligible.
We note that when determining $C$ and $D$, we only consider pairs of points that come from different edges.
The reason for this is that $(\deg(u), \deg(v))$ and $(\deg(v), \deg(u))$ would always count as discordant, unless $\deg(u) = \deg(v)$, in which case it is a tie.

\subsubsection{Assortativity Coefficients of Real-World Networks}
\label{sec:assort-coeff-real}

\begin{figure}[t]
  \centering
  \includegraphics{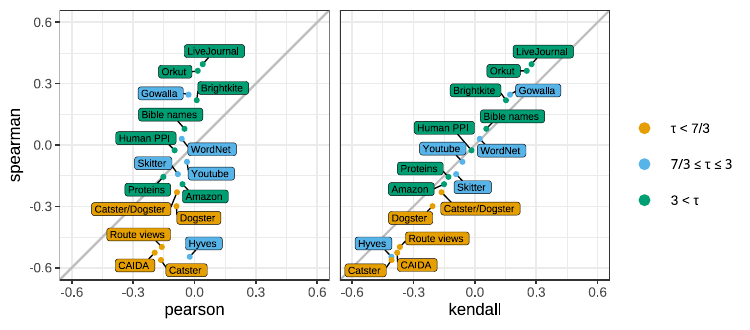}
  \caption{Comparison of assortativity coefficients (Pearson, Kendall, Spearman) for real networks.}
  \label{fig:coeffs_real}
\end{figure}

Fig~\ref{fig:coeffs_real} shows and compares the different assortativity coefficients for the surveyed networks.
Consistent with our theoretical result (Theorem~\ref{thm:neg_pearson}), the sign of the Pearson correlation coefficient is always negative for networks with $\tau<\frac{7}{3}$ and essentially non-positive for all surveyed networks including those with $\tau >3$.
For Spearman and Kendall, we obtain a much wider range of different assortativity values.
For all considered networks, the sign of the assortativity is the same for Spearman and Kendall -- and their magnitudes track each other closely.
This strongly supports the previous suggestion of Litvak and van der Hofstad~\cite{litvak2013uncovering} to use Spearman and not Pearson to measure assortativity.
In the following, unless explicitly mentioned otherwise, the term assortativity coefficient always refers to the Spearman correlation.

Overall we can observe that most networks have negative assortativity.
In particular for networks with power-law exponent $\tau <3$, the three networks with largest assortativity coefficients are Gowalla (\num{0.25}), WordNet (\num{0.03}) and Youtube (\num{-0.08}).

\subsubsection{Our Tunable Models}
\label{sec:our-tunable-models}

\begin{figure}[t]
  \centering
  \includegraphics{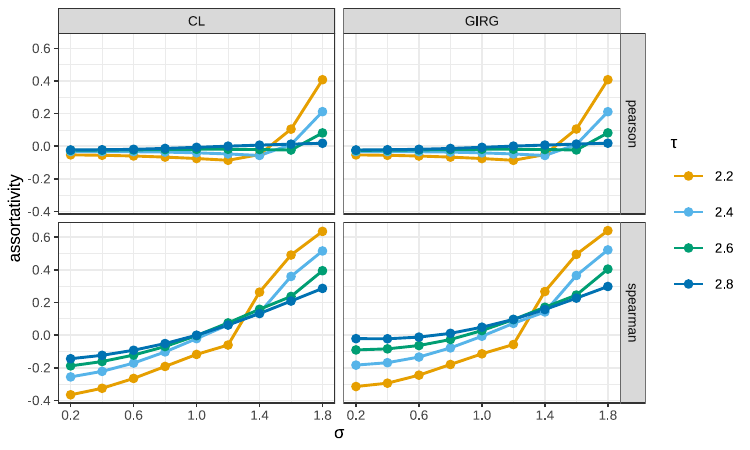}
  \caption{Assortativity coefficients of Tunable Chung-Lu graphs and TGIRGs as a function of $\sigma$ for different power-law exponents $\tau$.
    Each point is the mean of five generated networks.
    Each generated network has \num{200000} vertices and expected average degree \num{15}.
    For the TGIRGs we use dimension \num{2} and temperature \num{0}.}
  \label{fig:assort-scale-sigma}
\end{figure}

The goal with our tunable models was to have a parameter $\sigma$ that controls the assortativity coefficient.
Before analyzing the plots in detail, let us briefly recall the role of $\sigma$ in the connection probability, which we restate here for the case of Tunable Chung-Lu graphs -- the role of $\sigma$ in the connection probability of TGIRGs is analogous: 
\begin{align}
 \Theta\Big(\min\Big\{\frac{(w_u \wedge w_v)^{\sigma}(w_u \vee w_v)}{n}, 1\Big\}\Big).    
\end{align}
Fig~\ref{fig:assort-scale-sigma} shows the assortativity depending on $\sigma$.
Focusing on the Spearman coefficient (bottom row) for now, one can clearly see that changing $\sigma$ has the desired effect that the assortativity changes monotonically depending on $\sigma$.
Note that from our theoretical considerations, we know that we only get a power-law degree distribution for $\tau \ge \sigma + 1$, which corresponds to the change of behavior with sudden upticks for $\sigma \in \{1.2, 1.4, 1.6\}$ for the curves representing $\tau \in \{2.2, 2.4, 2.6\}$, respectively. We note that for $\tau =2.2$, in the range $\sigma \le \tau -1 $, also the value of Spearman, despite increasing as $\sigma$ increases, never becomes positive. Our theoretical results (Proposition~\ref{prop:joint-weight-density-tuned}), which hold for all $2<\tau<3$ indicate that this is another artifact of the use of coefficients and not a sign that there is no assortative wiring behavior occurring in the graph.

Comparing Chung-Lu graphs with GIRGs, we can see that the geometry facilitates a larger assortativity coefficient.
While the effect is not very strong for most parameter combinations, it is particularly pronounced for small values of $\sigma$ and large values of $\tau$. The latter is predicted by our theoretical results: we know that the positive effect on assortativity from the latent geometry is restricted to low-degree vertices (Section~\ref{sec:positive}), on those are more dominant for larger $\tau$.

We note that the assortativity coefficient is rather stable with respect to scaling the graph size, see Fig.\ref{fig:assort-synthetic-sizes}.
We also refer to Section~\ref{sec:eval-joint-distribution} for degree distributions showing that we indeed still get a power-law distribution if $\tau \ge \sigma + 1$.

The top row of Fig~\ref{fig:assort-scale-sigma} is only there to again show the shortcomings of Pearson correlation in the context of assortativity.
The coefficient only slightly changes and actually decreases for increased $\sigma$, except for the regimes where $\sigma$ is too large to actually yield a power-law distribution.

\subsection{Joint Distribution}
\label{sec:eval-joint-distribution}

Here we take a more detailed look at the joint distribution of edge degrees, i.e., when drawing a random edge $\{u, v\}$, $X = \deg(u)$ and $Y = \deg(v)$ are random variables and we are interested in their joint distribution.
We propose three different ways of looking at this joint distribution, resulting in three different types of plots, which we introduce in the following, using a Chung-Lu graph with different values of $\sigma$ as an example; see Fig~\ref{fig:plot_intro_cl}.

\begin{figure}[p]
  \centering
  \begin{subfigure}[t]{\linewidth}
    \centering
    \includegraphics{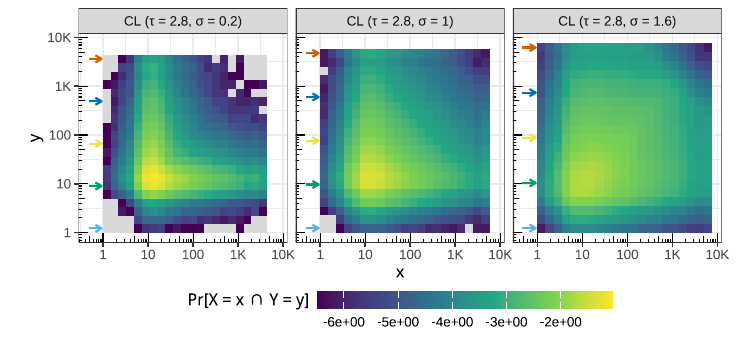}
    \vspace{-1ex}
    \caption{Joint distribution of vertex degrees of a random edge.
      The colors use a logarithmic scale.}
    \label{fig:plot_intro_cl_joint}
    \vspace{2ex}
  \end{subfigure}
  
  \begin{subfigure}[t]{\linewidth}
    \centering
    \includegraphics{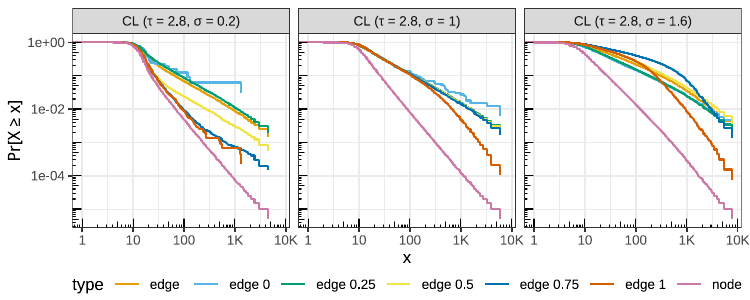}
    \vspace{-1ex}
    \caption{Complementary cumulative distribution of the degree of a random node (\texttt{node}) and of a random endpoint of a random edge (\texttt{edge}).
      The numbers indicate conditioning on the degree of the other endpoint.}
    \label{fig:plot_intro_cl_lines}
    \vspace{2ex}
  \end{subfigure}
  
  \begin{subfigure}[t]{\linewidth}
    \centering
    \includegraphics{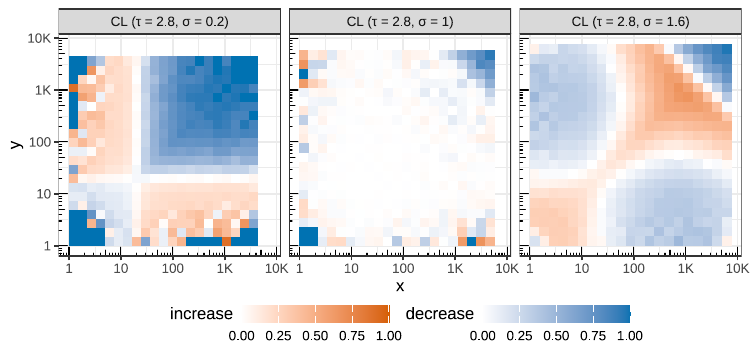}
    \vspace{-1ex}
    \caption{For a random edge with degrees $(X, Y)$, the color indicates how $\Pr[X = x]$ changes when conditioning on $Y = y$.}
    \label{fig:plot_intro_cl_change}
    \vspace{2ex}
  \end{subfigure}
  
  \caption{Three different types of plots illustrating the joint distribution.
    Each column shows a Chung-Lu graph with \num{200000} vertices, average degree \num{15}, power-law exponent $\tau = 2.8$, and varying $\sigma \in \{0.2, 1, 1.6\}$, controlling the assortativity.}
  \label{fig:plot_intro_cl}
\end{figure}

\paragraph{Joint Distribution Heatmap.}

Fig~\ref{fig:plot_intro_cl_joint} directly shows the joint distribution of $X$ and $Y$, i.e., the color at coordinate $(x, y)$ indicates the probability that $X = x$ and $Y = y$. Note that the plot uses logarithmic axes and that the degrees are grouped into buckets. To be more precise, we use \num{21} buckets and for $i \in [0, 20]$, the $i$th bucket $B_i$ is the range $B_i = [b^i, b^{i + 1})$. The base $b$ is chosen such that the maximum degree $d_{max}$ is in the last bucket, i.e., $b = \sqrt[21]{d_{max} + 1}$.

Comparing the three plots in Fig~\ref{fig:plot_intro_cl_joint}, we can clearly see the effect of $\sigma$ on the joint distribution, with $\sigma < 1$ (disassortative) facilitating edges between vertices of different degrees while $\sigma > 1$ leads to more edges with endpoints of similar degree.
However, we note that this trend becomes only apparent due to the comparison of the plots for different values of $\sigma$.
When just looking at the plot for $\sigma = 1$, it is not obvious that the corresponding graph has neutral assortativity.
Indeed, the slightly brighter \scalebox{1.8}{$\llcorner$}-shape might be perceived as negative assortativity, which would be a misinterpretation; also see Fig~\ref{fig:assort-scale-sigma}.
Thus, although this is certainly the simplest representation of the joint distribution, we recommend to use one of the other representations introduced below.

\paragraph{Complementary Cumulative Distributions.}

Fig~\ref{fig:plot_intro_cl_lines} shows the complementary cumulative distribution of the random variable $X = \deg(u)$ for two different random trials.
The line labeled \texttt{node} shows $\Pr[X \ge x]$ if $u$ is a random node.
The line labeled \texttt{edge} shows $\Pr[X \ge x]$ if $u$ is the random endpoint of a random edge $\{u, v\}$.
For the latter, we are also interested in conditioning on the degree $Y = \deg(v)$ of the other endpoint $v$, i.e., $\Pr[X \ge x \mid Y]$.
For this, we consider five different values of $Y$.
More precisely, we condition on $Y$ being in one of the five evenly distributed buckets $B_0$, $B_5$, $B_{10}$, $B_{15}$, or $B_{20}$, as defined above.
In Fig~\ref{fig:plot_intro_cl_lines}, this is labeled as \texttt{edge $\mathtt{c}$} where $\mathtt{c} \in \{0, 0.25, 0.5, 0.75, 1\}$ refers to the bucket $B_{20c}$.
Also note the colored arrows on the left side of the plots in Fig~\ref{fig:plot_intro_cl_joint}.
They indicate on which value of $Y$ the corresponding line in Fig~\ref{fig:plot_intro_cl_lines} conditions.

Before we discuss the line plots, we briefly comment on the scaling of the plots and the interpretation of the curves.
As is common for networks whose degree sequence tail follows a power law, we use a logarithmic scale on both axes.
In this scaling, conforming to a power law then means that the curve is a straight line and its gradient corresponds to $-\tau+1$, where $\tau$ is the exponent of the power law.

Considering the \texttt{node} line, one can see that changing $\sigma$ only slightly changes the curve, i.e., we get the desired power-law distribution consistent with our theoretical results that the (expected) degree of a node is of the same order as its weight and hence does not depend on $\sigma$ (Lemma~\ref{lem:exp_deg_sigma}).
Concerning assortativity considerations, note that complete independence of the two endpoint-degrees of an edge would mean that conditioning on the degree $Y$ should not change the distribution of $X$.
In terms of the curves, this would mean that all curves \texttt{edge $\mathtt{c}$} should coincide with \texttt{edge}.
Note that in the central plot of Fig~\ref{fig:plot_intro_cl_lines} with $\sigma = 1$ (neutral assortativity), this is indeed the case for $\mathtt{c} \in \{0.25, 0.5, 0.75\}$.
Moreover, for $c \in \{0, 1\}$, it is also the case for small values of $x$. For larger values of $x$, however, conditioning on $Y$ being very large ($\mathtt{c} = 1$) makes it less likely that $X \ge x$.
This makes sense as in a Chung-Lu graph, there are not enough high-degree vertices such that many high-degree vertices can connect to other high-degree vertices, see Proposition~\ref{prop:shifted-conditional-weight-distribution-tuned} and also the discussion after the re-statement of Theorem~\ref{thm:neg_pearson} in Section~\ref{subsec:Pearson-proof}. 
Conversely, the probability $X \ge x$ is slightly increased for large $x$ when conditioning on $Y$ being very small ($\mathtt{c} = 0$).

Varying $\sigma$ clearly changes the picture.
For $\sigma = 0.2$ (negative assortativity), one can see in the left plot of Fig~\ref{fig:plot_intro_cl_lines} that the lines corresponding to a large value of $Y$ ($c \in \{0.5, 0.75, 1\}$) lie significantly below the \texttt{edge} line, i.e., conditioning on $Y$ being large decreases the probability that $X \ge x$.
This matches the intuition of what negative assortativity is supposed to mean.
For $\sigma = 1.8$ (positive assortativity) in the right plot, one can see that for small values of $x$, the same lines ($c \in \{0.5, 0.75, 1\}$) lie above the \texttt{edge} line, which again makes sense for positive assortativity.
For larger values of $x$, however, the lines for $c = 1$ and later also for $c = 0.75$ fall below the \texttt{edge} line, which again comes from the fact that there are just not enough high-degree vertices to support enough edges between vertices of very high degree. This agrees with the derivation in Proposition~\ref{prop:shifted-conditional-weight-distribution-tuned}, where we prove that the tail of the curves is composed of two different linear pieces of different slopes. As predicted in Proposition~\ref{prop:shifted-conditional-weight-distribution-tuned}, the second slopes starts very late for small values of $c$ and is thus not visible, but it becomes observable for $c\in\{0.75,1\}$.

While these plots in Fig~\ref{fig:plot_intro_cl_lines} are somewhat difficult to read, we would argue that they are more informative than the heatmaps shown in Fig~\ref{fig:plot_intro_cl_joint}.  

\paragraph{Conditional Heatmaps.}

Finally, the plots in Fig~\ref{fig:plot_intro_cl_change} show how $\pr[X = x]$ changes when conditioning on $Y = y$ (again using the same buckets as before).
We normalize this change as follows.
If $\pr[X = x \mid Y = y] \ge \pr[X = x]$, we say that we have an \emph{increase} of $1 - \frac{\pr[X = x]}{\pr[X = x \mid Y = y]}$.
Otherwise, we have a \emph{decrease} of $1 - \frac{\pr[X = x \mid Y = y]}{\pr[X = x]}$.
Note that this normalizes the values to lie between $0$ and $1$ with values close to $1$ indicating a large change due to conditioning on $Y = y$.
Moreover, $0$ increase is equivalent to $0$ decrease, indicating independence.
We note that the plots are symmetric as $\frac{\pr[X = x]}{\pr[X = x \mid Y = y]} = \frac{\pr[Y = y]}{\pr[Y = y \mid X = x]}$.
For brevity, we call these heatmaps \emph{conditional heatmaps}.

We believe these conditional heatmaps to be easy to read and insightful.
In Fig~\ref{fig:plot_intro_cl_change}, one can nicely see that for neutral assortativity ($\sigma = 1$), the plot is mostly white, except for the blue triangle in the top-right corner (coming from the effect that there are not enough high-degree vertices described above) and some noise for low-degree vertices.
Moreover, the effect of varying $\sigma$ is very apparent.
We discuss these effects in more detail in the following section.

\subsubsection{Discussion -- Generated Networks}
\label{sec:joint-distr-gener}

\begin{figure}[ht!]
  \centering
  \includegraphics{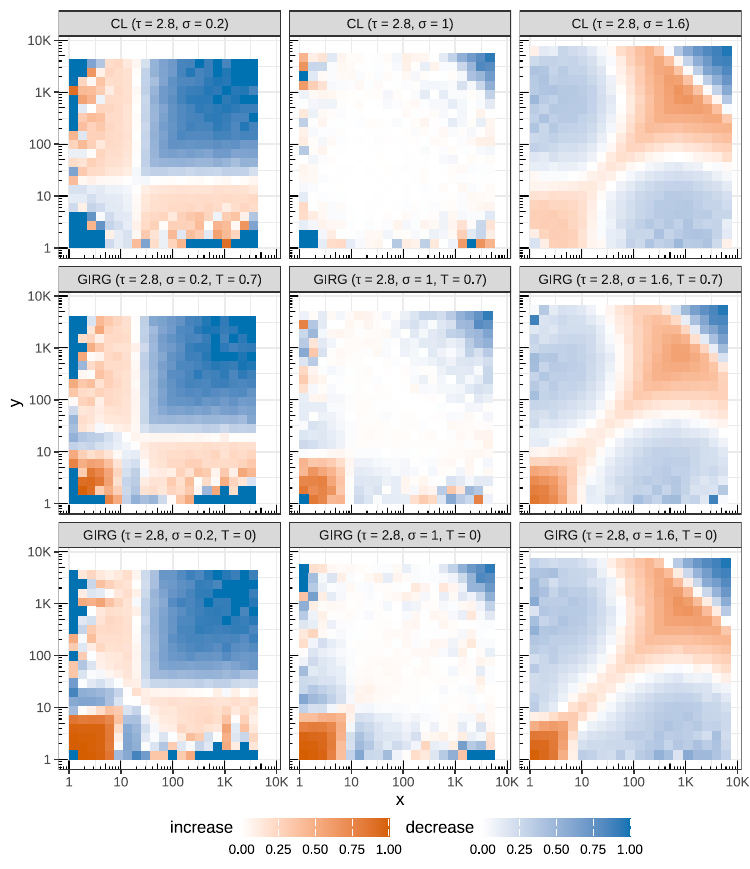}
  \caption{Conditional heatmaps for Tunable Chung-Lu (top), TGIRGs with temperature $T = 0.7$ (middle row), and $T = 0$ (bottom) for $\sigma = 0.2$ (left), $\sigma = 1$ (middle column), and $\sigma = 1.6$ (right).
    Each graph has \num{200000} vertices, average degree \num{15}, and power-law exponent $\tau = 2.8$.
    The dimension for the TGIRGs is \num{2}.}
  \label{fig:conditional-heatmaps-generated}
\end{figure}

The three rows of Fig~\ref{fig:conditional-heatmaps-generated} show the conditional heatmaps for tunable Chung-Lu as well as TGIRGs with temperature $T = 0$ and $T = 0.7$, corresponding to $\alpha=\infty$ and $\alpha\approx 1.4$ respectively.
The columns show different values from $\sigma \in \{0.2, 1, 1.6\}$.
Thus, considering the different columns aids our understanding of $\sigma$'s influence on the assortativity of the underlying graphs.
Moreover, the different rows let us study the impact of the geometry.

For the central column ($\sigma = 1$), we can see that the conditional heatmap is mostly white, indicating neutral assortativity.
The exceptions to this are worth discussing.
First, one can observe the blue triangle in the top-right corner.
This is not surprising, as we have discussed before that there are not enough high-degree vertices to get many edges where both vertices have high degree, as was made precise in Proposition~\ref{prop:shifted-conditional-weight-distribution-tuned}.
Secondly, notice that for Chung-Lu (first row, central column) there is some variation for the low-degree vertices (in particular for degrees $1$ and $2$).
This is most likely just noise: As there are in general only very few vertices of such low degree, a slight absolute change results in a big relative change.
The completely blue square in the bottom-left corner for example tells us that there are no edges where both endpoints have degree $1$ or $2$, which is not surprising if there are in general few vertices of such low degree.
For GIRGs the picture is similar, except that we see an increase in edges where both endpoints have low degree (red bottom-left corner).
This can be explained by random fluctuation in the density of nodes in the underlying geometry, as derived in Section~\ref{sec:positive}. 
One can also nicely see how TGIRGs with higher temperature ($T = 0.7$) lie between Chung-Lu and TGIRG with temperature zero.

For varying $\sigma$, we observe two extremal patterns.
As one extreme, we see an \scalebox{1.8}{$\llcorner$}-shape, with a vertical red stripe on the left hand of the heatmap and horizontal red stripe on the bottom of the heatmap ($\sigma = 0.2$).
This reflects a strongly {disassortative} wiring pattern where {high-degree vertices prefer to connect to low-degree vertices} and vice versa.
It indicates that low values of $\sigma$ successfully produce disassortative networks (while preserving key properties of classical Chung-Lu graphs and GIRGs respectively).
Again, for TGIRGs, the geometry makes connections between low-degree vertices more assortative, resulting in the red lower left corner of the conditional heatmap.

On the other extreme, in highly assortative networks one would expect heatmaps to display a distinguished red diagonal (connecting the lower left corner of the heatmap to its upper right corner).
Indeed, this would indicate that vertices in the corresponding network of degree, say $k$, preferably connect to vertices of degree $k$, whereas connecting to vertices of any other degree is less likely.
And indeed, such assortative behavior can be observed qualitatively for a large range of degrees for both Tunable Chung-Lu graphs and TGIRGs when $\sigma=1.6$; see the right column in Fig~\ref{fig:conditional-heatmaps-generated}.
Notably, this effect now also reaches low-degree nodes in Tunable Chung-Lu graphs.
Note that for very high-degree nodes, this idealized assortative wiring indicated by the red diagonal can not be maintained.
Instead, the conditional heatmaps exhibit a blue triangle in the upper right corner, which is always present in power-law networks with exponent $\tau>2$ -- in this parameter regime there are too few vertices in the entire graph with degrees in this high range.
As a consequence, the probability mass which can not complete the diagonal is spread out laterally to the nodes with highest degrees that are available.
We remark again that the Pearson correlation coefficient puts special emphasis on these very-high-degree vertices which are forced to connect to lower-degree vertices and thus contribute negatively to the overall picture, which is highly problematic for scale-free networks in general and renders it useless for small values of $\tau$. This is made rigorous by our results in Section~\ref{sec:pearson}. 

\subsubsection{Discussion -- Real-World Networks}
\label{sec:joint-distr-real}

\begin{figure}[ht!]
  \centering
  \includegraphics{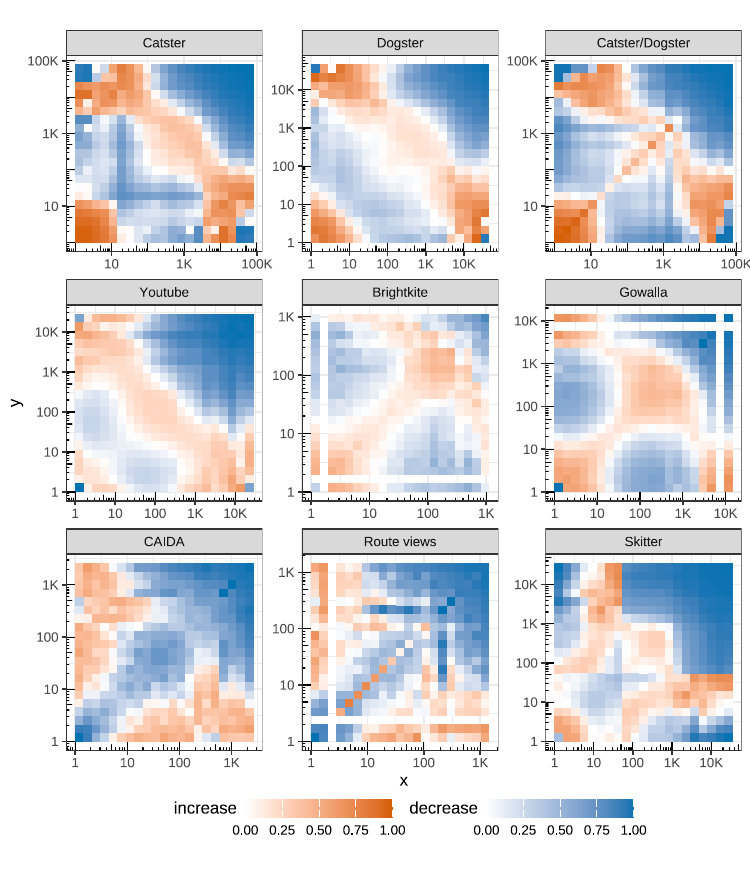}
  \caption{Conditional heatmaps of nine real-world networks.
    The first six are social networks, the bottom three are technical networks representing connections between autonomous systems.
    The networks are ordered by their Spearman assortativity coefficient: \num{-0.56} (\texttt{Catster}), \num{-0.30} (\texttt{Dogster}), \num{-0.23} (\texttt{Catster/Dogster}), \num{-0.08} (\texttt{Youtube}), \num{0.22} (\texttt{Brightkite}), \num{0.25} (\texttt{Gowalla}) for the social networks and \num{-0.53} (\texttt{CAIDA}), \num{-0.50} (\texttt{Route views}), \num{-0.14} (\texttt{Skitter}) for the autonomous systems.}
  \label{fig:conditional-heatmaps-real}
\end{figure}

Changing focus now to the real-world networks, the wiring patterns are much more nuanced.
Nonetheless, some large-scale phenomena can be made out in the conditional heatmaps shown in Fig~\ref{fig:conditional-heatmaps-real}.
Similar to the generated graphs, all plots show a blue triangle in the top-right corner.
As discussed before (Section~\ref{sec:intro-theoretical-results}), this is to be expected for power-law distributions, coming from the fact that there are not enough high-degree vertices to facilitate many edges where both endpoints have high degree. Due to normalization, this also means that the remaining regions in the same rows and columns appear more red. To compensate for this effect we need to compare with an assortative-neutral baseline, for which we can use Chung-Lu networks with matching power-law exponent $\tau$. Fig~\ref{fig:Chung-Lu-baseline} shows such Chung-Lu networks for various values of $\tau$. 
The size of the blue region is larger for smaller $\tau$. For large $\tau$ the resulting effect is negligible since then the blue region is too small to have much impact, but it is notable for small $\tau$. 

\begin{figure}[ht!]
  \centering
  \includegraphics{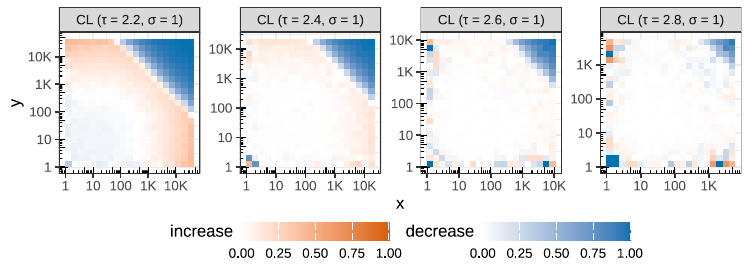}
  \caption{Conditional heatmaps for two Chung-Lu networks with different power law exponents $\tau$. We use those as assortative-neutral baselines.}
  \label{fig:Chung-Lu-baseline}
\end{figure}

\paragraph{Social Networks.}

Considering the social networks in the first two rows of Fig~\ref{fig:conditional-heatmaps-real}, we want to point out that their Spearman assortativity coefficients cover a large range.
In fact, from all networks in our data set with power-law exponent $\tau \in (2, 3]$, the \texttt{Catster} and \texttt{Gowalla} networks have the smallest (\num{-0.56}) and largest (\num{0.25}) assortativity coefficients, respectively; see Fig~\ref{fig:coeffs_real} and Table~\ref{tab:real-world-graphs}.
This contradicts the common belief that social networks tend to have positive assortativity~\cite{newman2002assortative}.
Nonetheless, the conditional heatmaps in Fig~\ref{fig:conditional-heatmaps-real} reveal a pattern that is common among the social networks in our data set: Low degree vertices in the range $[1, 10]$ preferably connect to vertices of similar degree and to vertices with much higher degree but not so much to vertices in the intermediate range. Note that the colors especially in the three petster networks (\texttt{Catster}, \texttt{Dogster}, \texttt{Catster/Dogster}) and in the \texttt{Gowalla} network are substantially more pronounced than in the baseline in Fig~\ref{fig:Chung-Lu-baseline}. Thus the pattern of preferences is not just a normalization artifact.

This pattern can at least partially explain the wide range in assortativity values.
Connections between pairs of low-degree vertices contribute to positive assortativity while edges with one low-degree and one high-degree endpoint yield negative assortativity.
These are conflicting effects and depending on which is stronger, one might obtain positive, negative, or neutral assortativity. 

Beyond these similarities between the networks in the wiring patterns of low-degree vertices, there are also some differences, in particular for the medium degree vertices.
In the \texttt{Catster} network, we can observe that vertices in the range $[10, 100]$ preferably connect to vertices of much higher degree.
In contrast to that, in the \texttt{Gowalla} network, we also get an increased number of connections between medium-degree vertices, resulting in a slightly red diagonal.
We note that this fits to the fact that \texttt{Catster} has negative and \texttt{Gowalla} has positive assortativity.
The \texttt{Youtube} network, shows a similar behavior as discussed before, with low-degree vertices (range $[1, 7]$) connecting to other low-degree vertices and to high-degree vertices (degree $> 400$).
In this network, vertices of intermediate degree in the range $[20, 400]$ predominantly connect to vertices in the same range.
However, we do not see a red diagonal as there is a gap where vertices in the range $[7, 20]$ are slightly less likely to connect with vertices in the same range.
We note that, while the assortativity coefficient of the \texttt{Youtube} network is close to $0$, we observe a highly interesting wiring pattern that is far from being neutral, i.e., there is a dependence between the vertex degrees of an edge that it is too intricate to be well described by a single number.

We want to end this section by discussing two patterns that we would view as artifacts in the data.
First, the \texttt{Youtube} and the \texttt{Gowalla} network both have a bright blue box in the bottom-left corner, coming from the fact that these networks do not contain any edges where both endpoints have degree~$1$.
Note that such edges would form their own connected components.
Thus, this blue box just tells us that the graph contains no connected component consisting of two vertices.
Secondly, the \texttt{Catster/Dogster} network (and also the \texttt{Route views} network we discuss in the next paragraph) show a sharp increase along the diagonal.
This comes from the fact that the data for these networks contain self loops.
Thus, we believe that these sharp diagonals are an artifact of the data set and do not reflect an interesting property of the network itself.

\paragraph{Autonomous Systems.}

The three networks in the bottom row of Fig~\ref{fig:conditional-heatmaps-real} show computer networks where each vertex represents an autonomous systems and edges indicate direct connections.
Interestingly, we see a very different wiring pattern compared to the social networks.
In the first two networks, \texttt{CAIDA} and \texttt{Route views}, we get a very similar picture.
The main difference is the sharp increase on the diagonal for \texttt{Route views}, which is an artifact from self-loops in the network data.
Moreover, \texttt{Route views} is more noisy than \texttt{CAIDA}, which is likely due to the fact that the network is substantially smaller (\num{6.5}{k} and \num{26}{k} vertices, respectively).
We thus focus on \texttt{CAIDA} in the following, although similar observations are true for \texttt{Route views}.

The overall pattern for \texttt{CAIDA} is that the diagonal is strikingly sparse and generally most connections are between vertices of low degree ($<10$) and vertices of higher degree ($> 10$).
Examining the wiring preferences more in detail, the only exception to this pattern are vertices with degree in the range $[200, 600]$ that also have an increased connection probability to vertices in the range $[10, 30]$, i.e., slightly above $10$.
This overall negative assortativity of the \texttt{CAIDA} network, which is also reflected by the Spearman assortativity coefficient of \num{-0.53}, is consistent with the general belief that technological networks tend to be disassortative.

For the \texttt{Skitter} network, the picture is more nuanced.
There is an increased number of edges where both endpoints have a similar degree for the ranges $[1, 7]$ and $[100, 600]$.
Beyond this assortative wiring, we mostly get disassortative patterns: Vertices with degree in the range $[1, 7]$ have an increased number of connections to vertices in the range $[50, 200]$.
Vertices of slightly higher degree in range $[7, 50]$ tend to connect to vertices of even higher degree ($> 100$).
Although less prominent than for \texttt{CAIDA}, this overall yields mostly disassortative wiring, which is also reflected in the assortativity coefficient of \num{-0.14}.

We believe that the differences between \texttt{CAIDA} and \texttt{Skitter} are quite interesting and we can offer some potential explanations (which should be viewed as educated guesses rather than definite truth).
These networks of autonomous systems form strong hierarchical structures with connections between different levels of the hierarchy.
Moreover, vertices higher up in the hierarchy tend to be central hubs with high degree while vertices lower down in the hierarchy have fewer connections.
It thus makes sense that we get disassortative patterns if most connections are between different levels of the hierarchy. 
The core difference between \texttt{CAIDA} and \texttt{Skitter} is their size, with \num{26}{k} vertices for \texttt{CAIDA} and \num{1.7}{M} vertices for \texttt{Skitter}.
In a smaller network like \texttt{CAIDA}, it is conceivable to have a flat hierarchy with vertices of very low degree directly connecting to vertices of very high degree.
However, this likely becomes infeasible in the larger \texttt{Skitter} network.
This is consistent with the rather extreme power-law exponents of $\tau = 2.1$ for \texttt{CAIDA} and $\tau = 2.13$ for \texttt{Route views}, while \texttt{Skitter} has $\tau = 2.38$.
With this, the patterns we observe for the \texttt{Skitter} network between vertices of different degrees (top-left and bottom-right region of the plot) could come from a hierarchy with multiple levels, where connections are predominantly between adjacent levels containing vertices with different (but not extremely different) degrees.
Moreover, the two red regions on the diagonal could come from connections on the same level of the hierarchy or from adjacent levels of the hierarchy containing vertices of similar degree (like, e.g., in a regular tree).

Although \texttt{Skitter} is less disassortative than the other two, we would say that our results are in agreement with the general belief that technological networks tend to be disassortative.
However, the detailed picture is again more nuanced than can easily be described with a single number.

\paragraph{Comparison With the Models.}

We have already seen in the previous paragraphs that real-world networks in Fig~\ref{fig:conditional-heatmaps-real} can exhibit a variety of assortativity patterns, and that it makes sense to distinguish between social networks (first two rows in Fig~\ref{fig:conditional-heatmaps-real}) and the technical networks stemming from autonomous systems (last row in Fig~\ref{fig:conditional-heatmaps-real}). 

For social networks, we universally observe that low-degree vertices show an increased preference to connect to other low-degree vertices, which contributes positively to assortativity. On the other hand, assortativity between other pairs of vertices may vary widely. In Section~\ref{subsec:degreeassortativity}, we showed for our generative models that the effect of the latent geometric space on assortativity is to increase assortativity between pairs of vertices of low degree, while not affecting assortativity between other pairs of vertices. The fact that all studied social networks show such an increased assortativity between low-degree vertices suggests that the changes induced by a latent space are compatible with the observed assortativity in social networks. 

In general the real-world networks show a more complex structure than the network models. When we compare individual social networks with generated networks of the same parameters, the fit is not tight. Fig~\ref{fig:comparison-social-networks} shows a comparison for several social networks. The \texttt{Gowalla} network shows a red {$\rotatebox[origin=c]{-45}{$\top$}$-shape} in its conditional heatmap in Fig~\ref{fig:conditional-heatmaps-real} which has some resemblance to the {$\rotatebox[origin=c]{-45}{$\top$}$-shape} of the TGIRG with the same power-law exponent $\tau = 2.8$ and with $\sigma = 1.6$ (left bottom panel). However, apart from the more nuanced structure in the real network we want to highlight one important difference: for the generated network the blue triangle is substantially smaller and the upper diagonal of the red {$\rotatebox[origin=c]{-45}{$\top$}$-shape} does not extend to the corners, while it does so for the \texttt{Gowalla} network. This means that vertices of low degree (which are most vertices) have an increased probability to form an edge with a vertex of very high degree in the \texttt{Gowalla} network, but not in the corresponding TGIRG. This effect might partially stem from an imperfect power-law in the \texttt{Gowalla} network, see Fig~\ref{fig:gowalla} in the Supporting Information. But even for generated models with a smaller value of $\tau$, the probability is not as much increased as in the \texttt{Gowalla} network, cf. Fig~\ref{fig:Chung-Lu-baseline}. This indicates that the wiring pattern of vertices of very high degree is not adequately  captured by the models. There is only a small number of such vertices, but due to their high degrees they contribute a substantial fraction of all edges. 

The situation for the \texttt{Youtube} network is similar to \texttt{Gowalla}. Fitting the parameters gives a loose fit (middle row, bottom panel), but it does not capture the finer structures and is not a good quantitative fit in all regions. Note that in the plot we chose $\tau=2.2$ because that gives a better fit better the size of the blue triangle in the upper right, but it does not correspond to the exponent $\tau=2.48$ of the \texttt{Youtube} network.

Among social networks, we get the worst fit for the petster networks. Since those have negative assortativity coefficient, this suggest using a tuning parameter $\sigma <1$, but this does not result in a similar conditional heatmap, see the right column in Fig~\ref{fig:comparison-social-networks}. Arguably, the gestalt shape for neutral $\sigma =1$ (Fig~\ref{fig:comparison-social-networks}, middle row bottom panel) or even $\sigma = 1.2$ (Fig~\ref{fig:models-matching-real-world} left) fits better, but even then the quantitative match is not good. In particular, it underestimates the number of connections between vertices of low and of very high degree. This suggests that the assortative structure of the petster networks is richer and can not be captured by the uniform application of a single tuning parameter as in the TGIRG model.

\begin{figure}[t!]
  \centering

  \includegraphics{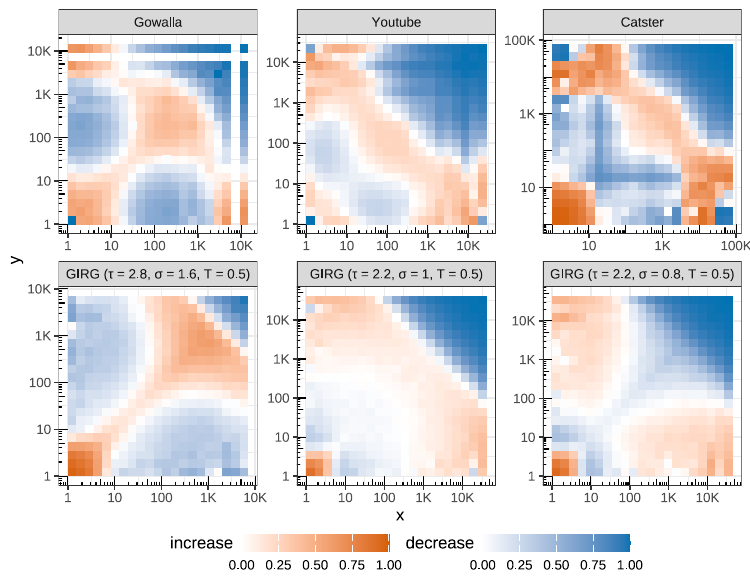}
  \caption{Direct comparison of social networks with generated networks of corresponding parameters.}
  \label{fig:comparison-social-networks}
\end{figure}

\begin{figure}[t!]
  \centering
  \includegraphics{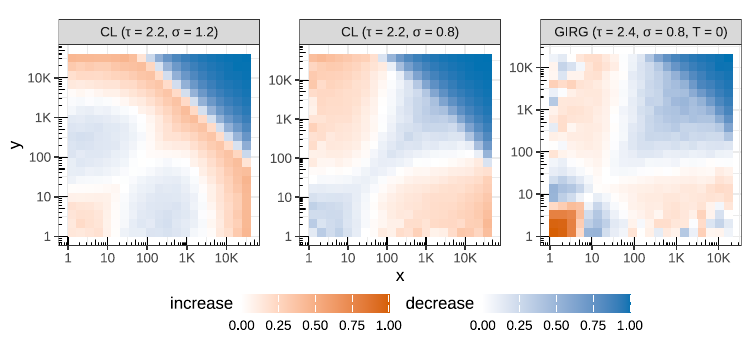}
  \caption{Conditional heatmaps for three models that, to some degree, show similar patterns as observed in the real-world networks.}
  \label{fig:models-matching-real-world}
\end{figure}

For the autonomous systems networks (bottom row in Fig~\ref{fig:conditional-heatmaps-real}), we get somewhat similar wiring patterns when using $\sigma = 0.8$ (Fig~\ref{fig:models-matching-real-world} middle and right).
Due to the fact that an underlying geometry increases the probability for edges between low-degree vertices, the Tunable Chung-Lu seems to be a better fit for \texttt{CAIDA} and \texttt{Route views}, while TGIRG seems to be a better fit for \texttt{Skitter}. A notable difference to the artificial networks is that \texttt{Skitter} has an increased probability for edges where both vertices are in the range $[100,600]$, which is not present in the models.

\begin{figure}[ht!]
    \centering
    \includegraphics[width=0.5\linewidth]{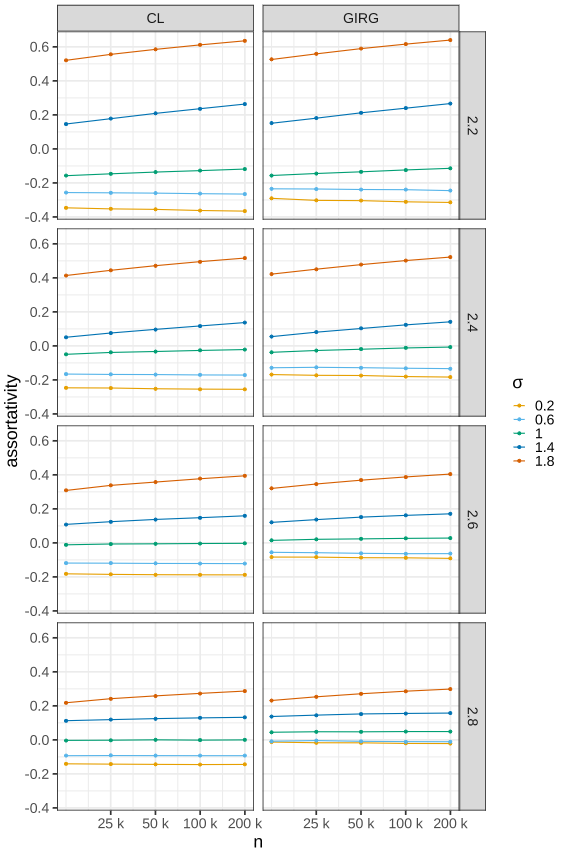}
    \caption{Spearman Assortativity coefficient values in Tunable Chung-Lu Graphs and TGIRGs for varying network sizes and levels of $\sigma$.}
    \label{fig:assort-synthetic-sizes}
\end{figure}


\section{Scale-Free Networks And Assortativity Coefficients}\label{sec:pearson}

This section is dedicated to the analysis of scale-free networks using assortativity coefficients. In the first subsection, we show that for a substantial range of $\tau$, all networks whose degree distribution follow a power-law with parameter $\tau$ have negative Pearson correlation coefficient. It has been argued previously that the Pearson coefficient is ill-suited to model degree-degree correlations in scale-free networks. Here, we provide a much stronger result which supports this claim. Intuitively, our main theorem of this section means that the Pearson correlation coefficient can never give meaningful results in any network with with power-law degree distribution and exponent close to 2, independent of any homophily or heterophily in the network. In the second subsection, we provide a discussion of experimental results as well as alternative coefficients - and why the analysis of conditional and joint distributions is better suited to understand assortativity than coefficients.

\subsection{Severe Shortcoming of Pearson Correlation Coefficient for the Analysis of Scale-Free Networks}\label{subsec:Pearson-proof}

We next provide the formal definition of the Pearson correlation coefficient, as given by Newman in~\cite{newman2002assortative}.

Consider a network $\mathcal{G}=(\mathcal{V}, \mathcal{E})$ and define a pair $(X_0,X_1)$ of random variables as follows.  We sample an edge $e=uv\in\mathcal{E}$ uniformly at random, and then set $(X_0,X_1) = (\deg(u)-1, \deg(v)-1)$ with probability $\tfrac{1}{2}$ and $(X_0,X_1) = (\deg(v)-1, \deg(u)-1)$ with probability $\tfrac{1}{2}$. These $\deg(\cdot)-1$ terms correspond to the \emph{remaining degree} of the vertex, i.e.\ the number of additional edges that are incident to the vertex, discounting the edge $e$ that led there in the first place. Note that $X_0$ and $X_1$ are identically distributed but not necessarily independent. One way to quantify the dependence between the endpoint degrees of a random edge is to compute the Pearson correlation coefficient of $X_0$ and $X_1$. The \emph{Pearson assortativity coefficient} of the graph $\mathcal{G}$ is defined as such:
\begin{align}\label{eq:pearson-assortativity-def}
    r(\mathcal{G}):= \frac{\cov(X_0,X_1)}{\sqrt{\var(X_0)}\sqrt{\var(X_1)}} = \frac{\E[X_0X_1]-\E[X_0]\E[X_1]}{\var(X_0)},
\end{align}
where the equality above holds because $X_0$ and $X_1$ have the same distribution. Note in particular that the denominator is always positive by definition. For convenience, we restate the main theorem of this section.

\Negativepearson*

Intuitively, the negativity of $r(\mathcal{G})$ comes from the contribution of high-degree vertices. More precisely, because of the power-law degree distribution, any vertex of degree $\omega(n^{1/\tau})$ has to connect to vertices of smaller degree, because there are $o(n^{1/\tau})$ vertices of degree $\omega(n^{1/\tau})$. This effect is even stronger for vertices of maximum degree $\Theta(n^{1/(\tau-1)})$, because almost all of their neighbors have much smaller degree. If the tail of the degree distribution is heavy enough (which translates to the $\tau<7/3$ condition), the contribution of these vertices dominates the numerator in \eqref{eq:pearson-assortativity-def}.

Before proving Theorem \ref{thm:neg_pearson}, we need to introduce the following lemma, which will be used to perform summation by parts.

\begin{lemma}[Abel's summation formula \cite{abel1826untersuchung, tenenbaum2015introduction}]\label{lem:abel}
    Let $(s_k)_{k\in\N}, (t_k)_{k\in\N}$ be two sequences of real numbers, and let $a<b$ be integers. Then we have
    \begin{align*}
        \sum_{k=a}^b (s_{k+1}-s_k)t_k = (s_{b+1}t_{b+1}-s_at_a) - \sum_{k=a}^b s_{k+1}(t_{k+1}-t_k).
    \end{align*}
\end{lemma}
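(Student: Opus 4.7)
The plan is to prove this identity by a direct algebraic manipulation that produces a telescoping sum. First I would rewrite each summand on the left-hand side using the identity
\begin{align*}
(s_{k+1}-s_k)t_k = \bigl(s_{k+1}t_{k+1}-s_k t_k\bigr) - s_{k+1}(t_{k+1}-t_k),
\end{align*}
which can be checked by expanding the right-hand side. The decomposition is chosen so that the first grouped term telescopes when summed, while the second term matches the shape of the remainder appearing in the statement.

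Next, I would sum both sides from $k=a$ to $k=b$. The telescoping part $\sum_{k=a}^b (s_{k+1}t_{k+1}-s_k t_k)$ collapses cleanly to $s_{b+1}t_{b+1}-s_a t_a$, producing exactly the boundary term on the right-hand side of the claim. The remaining contribution is $-\sum_{k=a}^b s_{k+1}(t_{k+1}-t_k)$, which matches the second sum in the statement verbatim. Combining these two observations yields the identity.

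There is no real obstacle; the lemma is purely an algebraic identity over the reals, and the only creative step—the rewriting that exposes the telescoping structure—is the standard discrete analogue of integration by parts. One could alternatively prove it by induction on $b-a$, but the telescoping argument is more transparent and avoids a separate base-case verification.
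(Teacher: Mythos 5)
Your proof is correct: the pointwise identity $(s_{k+1}-s_k)t_k = (s_{k+1}t_{k+1}-s_kt_k) - s_{k+1}(t_{k+1}-t_k)$ checks out by direct expansion, and summing it telescopes exactly to the claimed formula. The paper itself gives no proof of this lemma — it is cited as a classical result from the references — and your telescoping argument is the standard derivation, so there is nothing to compare against and no gap to report.
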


\begin{proof}[Proof of Theorem \ref{thm:neg_pearson}]
We start by noting that $|\mathcal{E}|= \Theta(n)$ since, taking $v$ as a uniformly random vertex, the average degree can be computed as 
\begin{align*}
    \sum_{k=1}^{\Delta} \pr(\deg(v) \ge k) = \Theta\left(\sum_{k=1}^{\Delta} k^{-(\tau-1)}\right) = \Theta(1)
\end{align*}
since $\tau-1 > 1$.

The probability that a vertex chosen uniformly at random has degree $\ge k$ is of order $k^{1-\tau}$, however when we pick a uniformly random edge (and then the endpoint of such an edge), the probability to find a vertex of degree $l$ is multiplied by $l$. In other words, (recalling that the $X_i$ are defined as the \emph{remaining} degrees of $u,v$),
\begin{align}\label{eq:shifted-distribution}
    \pr(X_i \ge k-1)= \Theta(k^{2-\tau}).
\end{align}

By the assumption on the degree distribution in $\mathcal{G}$, there are $O(n^{2} \cdot k^{1-\tau} \cdot l^{1-\tau})$ pairs of vertices such that one of them has degree $\ge k$ and the other has degree $\ge l$. In particular, there are at most  $O(n^{2} \cdot k^{1-\tau} \cdot l^{1-\tau})$ edges connecting a vertex of degree $\ge k$ to a vertex of degree $\ge l$. 
Since $|\mathcal{E}|=\Theta(n)$, we deduce that 
\begin{align}\label{eq:joint_estimate}
    \pr(X_0 \ge k-1, X_1 \ge l-1) \le O(n \cdot k^{1-\tau} \cdot l^{1-\tau}).
\end{align}

For the sake of readability, we introduce the following notation:
\begin{align*}
    e_{kl} := \pr(X_0=k,X_1=l), \qquad q_k := \pr(X_0=k) = \pr(X_1=k).
\end{align*}
We start by computing the numerator in \eqref{eq:pearson-assortativity-def}, which determines the sign of $r(\mathcal{G})$ since the denominator is always positive. We expand the covariance expression from \eqref{eq:pearson-assortativity-def}:
\begin{align}
\begin{split}\label{eq:covariance-pearson}  
    \cov(X_0, X_1) = \E[X_0X_1]-\E[X_0]\E[X_1] = \sum_{k,l=0}^{\Delta}kle_{kl} - \sum_{k,l=0}^{\Delta}klq_kq_l.
\end{split}
\end{align}
We will show that, when $2<\tau<7/3$, the first sum is of order $O(n)$, while the second sum is of order $\omega(n)$, which yields negativity of $r(\mathcal{G})$.

Let $i\in\{0,1\}$ and $A\subset \N$ be a (non-empty) subset of the integers. Using Abel's summation formula (Lemma \ref{lem:abel}, with the sequence $(s_k)_{k\in \N}$ defined as either $s_k:=\pr(X_i \ge k)$ or $s_k:=\pr(X_i \ge k, X_{1-i}\in A)$ and the sequence $(t_k)_{k\in \N}$ defined as $t_k:=k$), we get the following identities, which will be used throughout the proof:
\begin{align}
    \sum_{k=0}^\Delta k\pr(X_i=k) &= \sum_{k=0}^\Delta \pr(X_i>k), \label{eq:abel-simple} \\
    \sum_{k=0}^\Delta k\pr(X_i=k, X_{1-i} \in A) &= \sum_{k=0}^\Delta \pr(X_i>k, X_{1-i}\in A), \label{eq:abel-double}
\end{align}
where in particular we used the fact that $X_i\ge \Delta+1$ has probability $0$ since $\Delta$ is the maximum degree in $\mathcal{G}$.

We start by bounding the first term in \eqref{eq:covariance-pearson}. By definition of $e_{kl}$, we have
\begin{align*}
    \sum_{k=0}^{\Delta} \sum_{l=0}^{\Delta} kl e_{kl}
    = \sum_{k=0}^{\Delta} \sum_{l=0}^{\Delta} kl \pr(X_0=k, X_1=l).
\end{align*}
Using identity \eqref{eq:abel-double} with $i=1$ and $A=\{k\}$ we get
\begin{align*}
    \sum_{k=0}^{\Delta} \sum_{l=0}^{\Delta} kl e_{kl}
    = \sum_{k=0}^{\Delta}k \sum_{l=0}^{\Delta} \pr(X_0=k, X_1>l).
\end{align*}
Using the same identity but now with with $i=0$ and $A=\{l+1, l+2, \ldots\}$ yields
\begin{align*}
    \sum_{k=0}^{\Delta} \sum_{l=0}^{\Delta} kl e_{kl}
    &= \sum_{l=0}^{\Delta} \sum_{k=0}^{\Delta} \pr(X_0>k, X_1>l).
\end{align*}
Now, using \eqref{eq:joint_estimate}, we finally get
\begin{align*}
    \sum_{k=0}^{\Delta} \sum_{l=0}^{\Delta} kl e_{kl}
    \le \sum_{l=1}^{\Delta+1}\sum_{k=1}^{\Delta+1} O(n \cdot k^{1-\tau} \cdot l^{1-\tau}) = O(n),
\end{align*}
where we used $\tau>2$.

It remains to bound the first term in \eqref{eq:covariance-pearson}, i.e.\ to show that
\begin{align*}
    \sum_{k,l=0}^{\Delta} kl q_kq_l = \omega(n).
\end{align*}
Note that 
\begin{align*}
     \sum_{k,l=0}^{\Delta} kl q_kq_l = \Big(\sum_{k=0}^{\Delta} k q_k\Big)^2 = \Big(\sum_{k=0}^{\Delta} k \pr(X_0=k)\Big)^2,
\end{align*}
where we used the definition of $q_k$ for the second equality. Using identity \eqref{eq:abel-simple} with $i=0$ yields
\begin{align*}
     \sum_{k,l=0}^{\Delta} kl q_kq_l = \Big(\sum_{k=0}^{\Delta} \pr(X_0>k)\Big)^2.
\end{align*}
Recall that $\pr(X_0>k)=\Theta(k^{2-\tau})$ for all $k\le\Delta$ by Eq. \eqref{eq:shifted-distribution}. Hence, by approximating the sum with the corresponding integral we get
\begin{align*}
     \sum_{k,l=0}^{\Delta} kl q_kq_l = \Big(\sum_{k=0}^{\Delta} \Theta(k^{2-\tau})\Big)^2 = \Theta((\Delta^{3-\tau})^2).
\end{align*}
We now use that $\Delta = \Theta(n^{1/(\tau-1)})$ to deduce
\begin{align*}
     \sum_{k,l=0}^{\Delta} kl q_kq_l = \Theta\big(n^{\tfrac{6-2\tau}{\tau-1}}\big) = \Theta\big(n^{\tfrac{4}{\tau-1}-2}\big).
\end{align*}
The exponent $\tfrac{4}{\tau-1}-2$ satisfies $\tfrac{4}{\tau-1}-2>1$ if and only if $\tau<7/3$. In this case, we have shown that $\cov(X_0, X_1) = \Theta(-\Delta^{6-2\tau}) = \Theta(-n^{\tfrac{4}{\tau-1}-2})$.

We now turn to the computation of the denominator $\var(X_0)$ in \eqref{eq:pearson-assortativity-def}. Note that $\var(X_0) = \E[X_0^2]-\E[X_0]^2$, and since $X_0$ and $X_1$ follow the same distribution $\E[X_0]^2 = \E[X_0]\E[X_1] = \Theta(\Delta^{6-2\tau})$ as we have just computed. Therefore we just need to compute $\E[X_0^2] = \sum_{k=0}^{\Delta}k^2\pr(X_0=k)$. Using Abel's summation formula (Lemma \ref{lem:abel}) with $s_k := \pr(X_0 \ge k)$ and $t_k:=k^2$, we get
\begin{align*}
    \E[X_0^2] &= \sum_{k=0}^{\Delta}k^2\pr(X_0=k)
    = \sum_{k=0}^\Delta ((k+1)^2-k^2)\pr(X_0>k) \\
    &= \sum_{k=0}^\Delta (2k+1)\Theta(k^{2-\tau})
    = \sum_{k=0}^\Delta \Theta(k^{3-\tau}) 
    = \Theta(\Delta^{4-\tau}).
\end{align*}
Since $\tau>2$, we have $4-\tau > 6-2\tau$, and hence 
\begin{align*}
    \var(X_0) = \E[X_0^2]-\E[X_0]^2 = \Theta(\Delta^{4-\tau})-\Theta(\Delta^{6-2\tau}) = \Theta(\Delta^{4-\tau}).    
\end{align*}
Therefore, using that $\Delta=\Theta(n^{1/(\tau-1)})$, we conclude that
\begin{align*}
    r(\mathcal{G}) = \Theta\left(\frac{-\Delta^{6-2\tau}}{\Delta^{4-\tau}}\right) = \Theta(-\Delta^{2-\tau}) = \Theta\big(-n^{-\tfrac{\tau-2}{\tau-1}}\big).
\end{align*}

\end{proof}

\section*{Supporting Information}

\subsection*{Proof of Lemmas \ref{lem:tgirg-marginal}-\ref{lem:exp_deg_sigma}}

\begin{proof}[Proof of Lemma \ref{lem:tgirg-marginal}]
Assume first that $\alpha <\infty$. Note that
\begin{align*}
    \Pr(uv\in\mathcal{E} \mid x_u, w_u, w_v) = \int_0^1 r^{d-1} \Theta\Big(\min\Big\{\frac{(w_u \wedge w_v)^{\sigma}(w_u \vee w_v)}{n r^d}, 1\Big\}^{\alpha}\Big) dr,
\end{align*}
and therefore it suffices to compute the above integral.
If $(w_u \wedge w_v)^{\sigma}(w_u \vee w_v) \ge n$, the minimum in the probability is attained by $1$ for all $r\in[0,1]$ and the integral becomes
\begin{align*}
    \int_0^1 r^{d-1} \Theta(1) dr = \Theta(1).
\end{align*}
Otherwise, we split the integral in two terms depending on where which expression attains the minimum, which yields
\begin{align*}
    \int_0^{(\frac{(w_u \wedge w_v)^{\sigma}(w_u \vee w_v) }{n})^{\frac{1}{d}}} r^{d-1} \cdot  \Theta(1) dr + \int_{(\frac{(w_u \wedge w_v)^{\sigma}(w_u \vee w_v) }{n})^{\frac{1}{d}}}^1 r^{d-1}\cdot \Big(\frac{(w_u \wedge w_v)^{\sigma}(w_u \vee w_v) }{nr^d} \Big)^{\alpha} dr \\
    =\Theta\Big(\frac{(w_u \wedge w_v)^{\sigma}(w_u \vee w_v) }{n}\Big) + \Theta\Big(\frac{(w_u \wedge w_v)^{\sigma}(w_u \vee w_v) }{n}\Big)^\alpha \cdot \Big[-r^{d(1-\alpha)}\Big]^1_{(\frac{(w_u \wedge w_v)^{\sigma}(w_u \vee w_v) }{n})^{\frac{1}{d}}},
\end{align*}
where the minus sign for the second integral comes because $\alpha >1$ and hence $d(1-\alpha) < 0$. Hence this second integral is also of order $\Theta(\frac{(w_u \wedge w_v)^{\sigma}(w_u \vee w_v) }{n})$, which concludes the proof for $\alpha <\infty$. If $\alpha=\infty$, we may proceed analogously, with the only difference that the only nonzero contribution to the integral stems from the term
\begin{align*}
    \int_0^{(\frac{(w_u \wedge w_v)^{\sigma}(w_u \vee w_v) }{n})^{\frac{1}{d}}} r^{d-1} \cdot  \Theta(1) dr = \Theta\Big(\frac{(w_u \wedge w_v)^{\sigma}(w_u \vee w_v) }{n}\Big).
\end{align*}
Taken together, this concludes the proof.
\end{proof}

\begin{proof}[Proof of Lemma \ref{lem:exp_deg_sigma}]
By Lemma \ref{lem:tgirg-marginal}, we have
\begin{align*}
    \mathbb{E}[\deg(v) \mid w_v] &= \sum_{u \in \mathcal{V} \setminus \{v\}} \Theta\Big(1 \wedge \frac{(w_u\wedge w_v)^\sigma (w_u \vee w_v)}{n}\Big) \\
    &=\Theta\Big(\int_1^\infty ((w_u\wedge w_v)^\sigma (w_u \vee w_v) \wedge n) w_u^{-\tau} dw_u\big) \\
    &=\Theta\Big(\int_1^{w_v} (w_v \cdot w_u^\sigma \wedge n) w_u^{-\tau} dw_u + \int_{w_v}^{\infty} (w_v^\sigma \cdot w_u \wedge n) w_u^{-\tau} dw_u\Big)
\end{align*}
We now make a case distinction depending on whether or not $w_v \ge n^{\frac{1}{\sigma +1}}$. Assume first that this holds. Then
\begin{align*}
    \mathbb{E}[\deg(v) \mid w_v] 
    &=\Theta\Big(w_v \int_1^{w_v} w_u^{\sigma-\tau} dw_u + w_v^\sigma \int_{w_v}^\frac{n}{w_v^\sigma} w_u^{1-\tau} dw_u+ n \int_{\frac{n}{w_v^\sigma}}^{\infty} w_u^{-\tau} dw_u\Big) \\ 
    &= \Theta\Big(w_v + w_v^{\sigma+2-\tau} + n^{2-\tau} w_v^{\sigma(\tau-1)}\Big),
\end{align*}
where for the first integral we used that $\sigma < \tau-1$ while for the second and third integral we used that $\tau>2$. Now observe that the last summand in the expression above can be upper-bounded by
\begin{align*}
     n^{2-\tau} w_v^{\sigma(\tau-1)} \le w_v^{(2-\tau)(\sigma +1)+\sigma(\tau-1)}=w_v^{2+\sigma -\tau}.
\end{align*}
This concludes the proof when $w_v \ge n^{\frac{1}{\sigma +1}}$ since $\sigma < \tau-1$ and hence $2+\sigma -\tau<1$. In the case $w_v < n^{\frac{1}{\sigma +1}}$, the expected degree becomes 
\begin{align*}
    \mathbb{E}[\deg(v) \mid w_v] =\Theta\Big(w_v \int_1^{(\frac{n}{w_v})^{\frac{1}{\sigma}}} w_u^{\sigma-\tau} dw_u + n \int_{(\frac{n}{w_v})^{\frac{1}{\sigma}}}^{\infty} w_u^{-\tau} dw_u\Big) = \Theta\Big(w_v +  n^{1 +\frac{1-\tau}{\sigma}} w_v^{\frac{\tau-1}{\sigma}}\Big),
\end{align*}
where again we used that $\sigma < \tau-1$ for the first integral and $\tau>2$ for the second integral.
Note that $1 +\frac{1-\tau}{\sigma}<0$, so we can upper bound the second summand as
\begin{align*}
n^{1 +\frac{1-\tau}{\sigma}} w_v^{\frac{\tau-1}{\sigma}} \le w_v^{(\sigma+1)(1+\frac{1-\tau}{\sigma})+\frac{\tau-1}{\sigma}}= w_v^{\sigma -\tau + 2 }.
\end{align*}
Since $2+\sigma -\tau<1$, this concludes the proof in this second case.
\end{proof}

\subsection*{Descriptions of Real-World Networks}\label{sec:network-descriptions}
In this subsection, we give some background information on the real-world networks analyzed. The information is gathered from the KONECT database itself~\cite{kunegis2013konect}, from~\cite{voitalov2019scale} and the Stanford Network Analysis Project (SNAP)~\cite{snapnets}. The node counts and average degree values are taken from the KONECT databaset. The estimates for the power-law exponents come from \cite{voitalov2019scale}.

\paragraph{CAIDA (as-caida20071105)}

This dataset models the network of autonomous systems of the internet, collected through the CAIDA project in 2007. Nodes represent autonomous systems, edges denote communication. 
The network consists of \num{26475} nodes with average degree \num{4.03}. The estimated range for the power-law exponent is $[2.10,2.11]$.

\paragraph{Skitter (as-skitter)} This is an internet topology graph, i.e. modeling connections of autonomous systems collected with the internet probing tool skitter, which used traceroute data from the IPv4 address space. It consists of \num{1696415} nodes, with maximum degree \num{35455} and average degree \num{13.08}. The estimated range for the power-law exponent is $[2.36,2.43]$.

\paragraph{Pet relationship datasets}  There are a variety of datasets derived from the - now defunct - social networking websites dogster.com and catster.com,  which were platforms for pet owners to create profiles for their pets, interact, and form connections. These datasets capture relationships such as friendships and family links between users representing pets on these sites.

\begin{itemize}
    \item Catster (petster-friendships-cat). 
    The network consists of \num{149700} nodes with average degree \num{72.80}. The estimated range for the power-law exponent is $[1.98,2.09]$.
    \item Catster/Dogster (petster-carnivore). The network consists of \num{623766} nodes with average degree \num{50.34}. The estimated range for the power-law exponent is is $[2.04,2.11]$.
    \item Dogster (petster-friendships-dog). The network consists of \num{426 816} nodes with average degree \num{40.05}. The estimated range for the power-law exponent is $[2.12,2.15]$.
\end{itemize}

\paragraph{Gowalla (loc-gowalla-edges)}
This undirected network contains user–user friendship relations from Gowalla, a location-based social network where users share their locations. Gowalla was founded in 2007 and gained popularity for its gamified check-in system. 
After acquisition by Facebook in 2011, the platform ceased operation on March 11, 2012. It was relaunched on March 10, 2023 with updated features focusing on social mapping and real-world interactions.

It consists of \num{196591} nodes with average degree \num{9.67}. The estimated range for the power-law exponent is $[2.80,2.86]$.
\paragraph{Hyves (hyves)} This dataset is taken from the Dutch online social network Hyves. Hyves was launched in 2004, primarily catering to users in the Netherlands. It became the country's largest social network during its peak, with over 10 million accounts by 2010. The nodes encode users, the edges encode friendships. It consists of \num{1402673} nodes and average degree \num{3.96}. The estimated range for the power-law exponent is  $[1.99, 2.98]$.

\paragraph{Autonomous systems (as20000102, Route Views)}
The as20000102 dataset represents the graph of Autonomous Systems (AS) in the Internet, as observed on January 2, 2000. Autonomous Systems are sub-networks of routers that exchange traffic flows via the Border Gateway Protocol (BGP). This dataset is part of a broader collection of AS graphs derived from the University of Oregon Route Views Project. It contains \num{6474} nodes with average degree \num{4.29}. The estimated range for the power-law exponent is $[2.13,2.16]$.

\paragraph{Wordnet (wordnet-words)}
This is the lexical network of words from the WordNet dataset. Nodes in the network are English words, and edges are relationships between them, such as synonymy, antonymy, meronymy, etc.  The network consists of \num{146005} nodes with average degree \num{9.00}. The estimated range for the power-law exponent is is $[2.61,2.86]$.

\paragraph{Youtube (com-youtube)} This is the social network of YouTube users and their friendship connections, consisting of 1134890 nodes and average degree 5.3. The estimated range for the power-law exponent is is $[2.17,2.58]$.

\paragraph{Orkut (livejournal-links)} Livejournal is a free membership-based online community 
which allows members to maintain journals, individual and group blogs, and it allows people to declare which other members are their friends. The dataset was assembled through a crawl of the website. It consists of \num{5204176 } nodes with average degree \num{18.90}. The estimated range for its power-law exponent is $[3.15, 4.04]$.

\paragraph{Orkut (orkut-links)} Orkut is a free online social network where users form friendships with each other. Orkut also allows users form a group which other members can then join. The dataset was gathered through a crawl of the website. It consists of \num{3072441} nodes with average degree \num{76.28}. The estimated range for its power-law exponent is $[2.65, 3.58]$.

\paragraph{Brightkite (loc-brightkite\_edges)} Brightkite was a location-based social networking service provider where users shared their locations by checking in. The friendship network was collected using their public API, and consists of \num{58228} nodes with average degree \num{7.35}.  The estimated range for its power-law exponent is $[2.96, 3.8]$.

\paragraph{Human proteins (maayan-vidal, Human PPI)} This network represens an initial version of a proteome-scale map of Human binary protein–protein interactions. It consists of \num{3133} nodes  and average degree \num{4.29}. The estimated range for its power-law exponent is $[3.09, 3.87]$.

\paragraph{Yeast (moreno\_propro, Proteins)} This undirected network contains protein interactions contained in yeast. 
A node represents a protein and an edge represents a metabolic interaction between two proteins.  The network consists of \num{1870} nodes with average degree \num{2.44}. The estimated range for its power-law exponent is $[3.09, 3.87]$.

\paragraph{Bible (moreno\_names, Bible names)} This undirected network contains nouns (places and names) of the King James Version of the Bible and information about their co-occurrences. A node represents one of the above noun types and an edge indicates that two nouns appeared together in the same Bible verse. It contains \num{1773} nodes with average degree \num{10.3}. The estimated range for its power-law exponent is $[2.88,3.09]$.

\paragraph{Amazon (com-amazon)} This is the co-purchase network of Amazon based on the "customers who bought this also bought" feature. Nodes are products and an undirected edge between two nodes shows that the corresponding products have been frequently bought together. It consists of \num{334863} nodes with average degree \num{5.53}. The estimated range for its power-law exponent is $[3.44, 3.99]$.

\subsection*{Additional Figures}
In this section we present some additional figures. In the first two figures, we show how the (conditional) degree distribution evolves as $\sigma$ varies for Tunables Chung-Lu graphs and TGIRGs (with all other parameters being fixed). Fig. \ref{fig:caida}--\ref{fig:youtube} display these degree distributions as well as the heatmap for the joint degree distribution for all real-world networks which are power-law with an exponent between $2$ and $3$ according to the evaluation of~\cite{voitalov2019scale}.

\begin{figure}[ht!]
    \centering
    \includegraphics[width=0.45\linewidth]{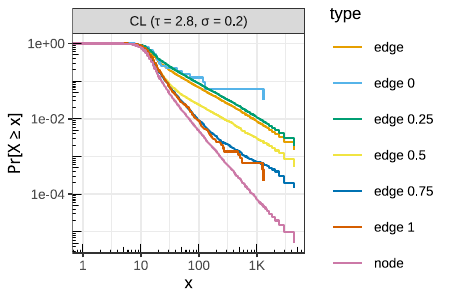}
    \includegraphics[width=0.45\linewidth]{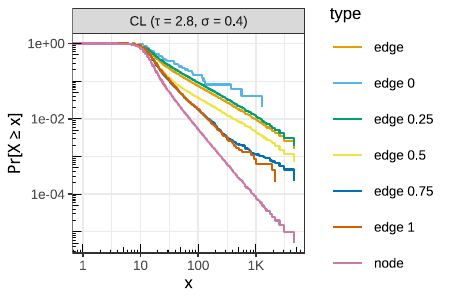}
    \\
    \includegraphics[width=0.45\linewidth]{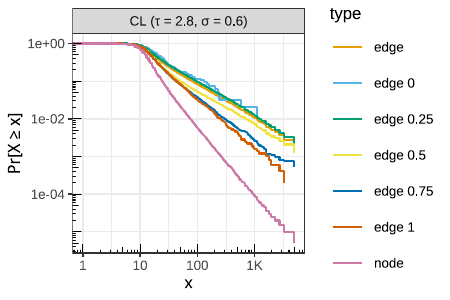}
     \includegraphics[width=0.45\linewidth]{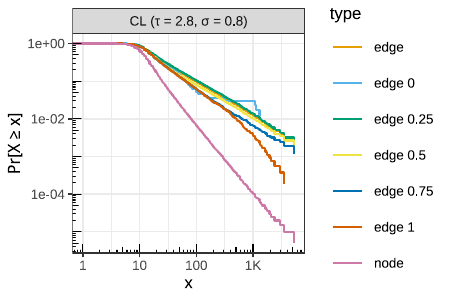}
     \\
    \includegraphics[width=0.45\linewidth]{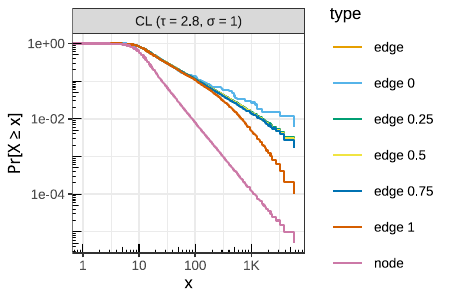}
    \includegraphics[width=0.45\linewidth]{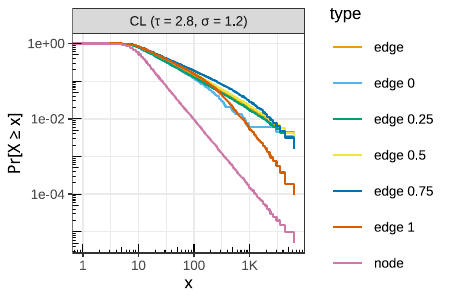}
    \\
     \includegraphics[width=0.45\linewidth]{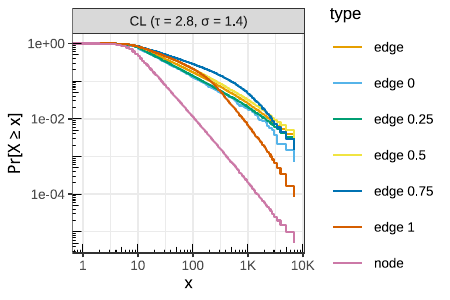}
    \includegraphics[width=0.45\linewidth]{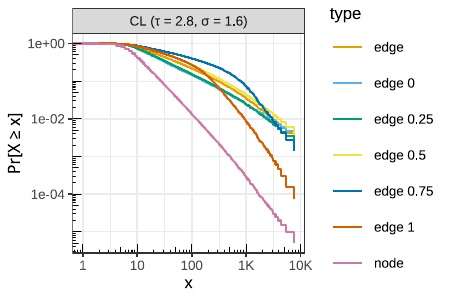}  
    
    \caption{Degree Distribution of Tunable Chung-Lu Graphs for $\sigma \in [0.2, 1.6]$. For all graphs, the number of nodes is \num{200000}, the average degree is \num{15} and the parameter $\tau$ is set to \num{2.8}.}
    \label{fig:degrdistr-cl}
\end{figure}

\begin{figure}[ht!]
    \centering
    \includegraphics[width=0.45\linewidth]{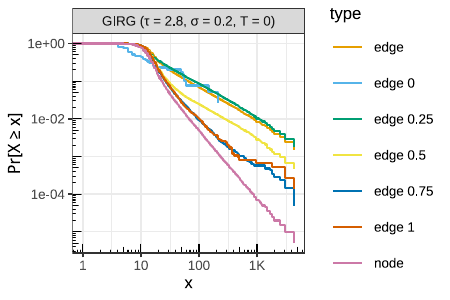}
    \includegraphics[width=0.45\linewidth]{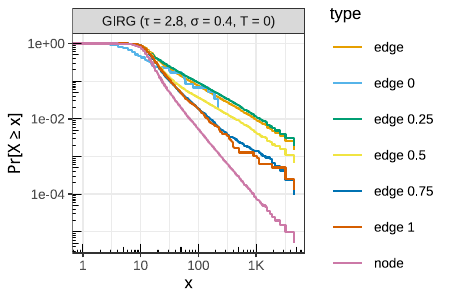}
    \\
    \includegraphics[width=0.45\linewidth]{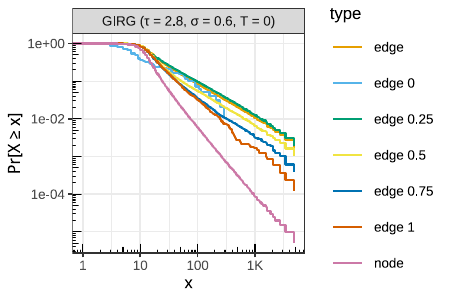}
     \includegraphics[width=0.45\linewidth]{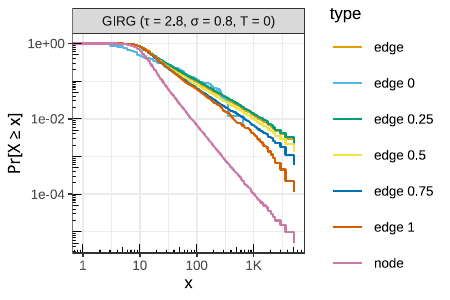}
     \\
    \includegraphics[width=0.45\linewidth]{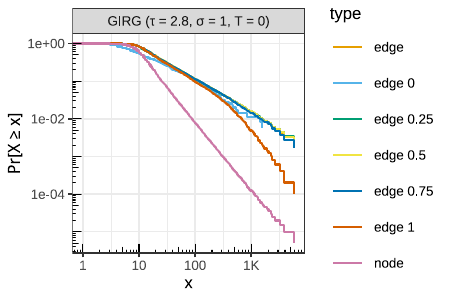}
    \includegraphics[width=0.45\linewidth]{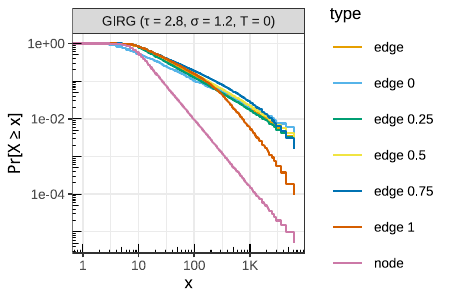}
    \\
     \includegraphics[width=0.45\linewidth]{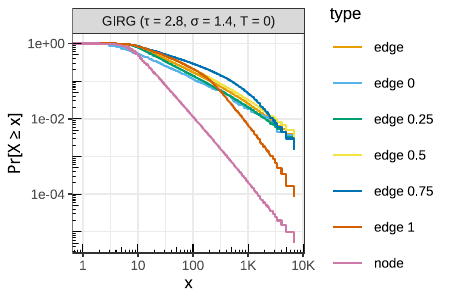}
    \includegraphics[width=0.45\linewidth]{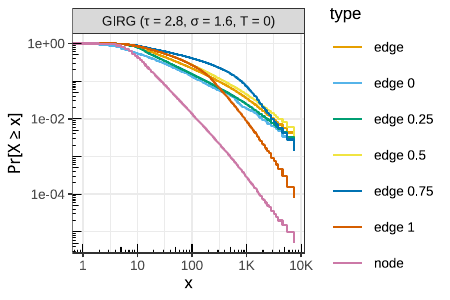}

    \caption{Degree Distribution of TGIRGs for $\sigma \in [0.2, 1.6]$. For all graphs, the number of nodes is \num{200000}, the average degree is \num{15} and the TGIRG parameters are set to $\tau=2.8$, $T=0$ and $d=2$.}
    \label{fig:degrdistr-girg}
\end{figure}

\begin{figure}[ht!]
    \centering
     \includegraphics[scale=0.6]{figures/plots-individual-graphs/as-caida20071105-line.pdf}%
  \hfill%
  \includegraphics[scale=0.6]{figures/plots-individual-graphs/as-caida20071105-heatmap.pdf}%
  \hfill%
  \includegraphics[scale=0.6]{figures/plots-individual-graphs/as-caida20071105-cond-heatmap.pdf}
    \caption{Conditional and Joint Degree Distributions of the \texttt{CAIDA} network.}
    \label{fig:caida}
\end{figure}

\begin{figure}[ht!]
    \centering
     \includegraphics[scale=0.6]{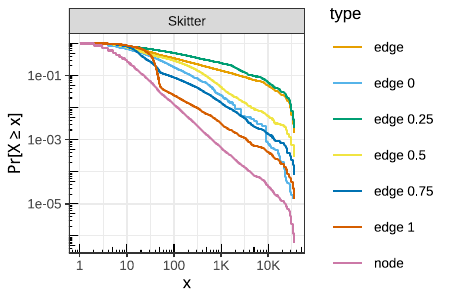}%
  \hfill%
  \includegraphics[scale=0.6]{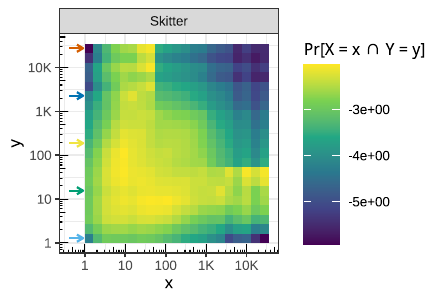}%
  \hfill%
  \includegraphics[scale=0.6]{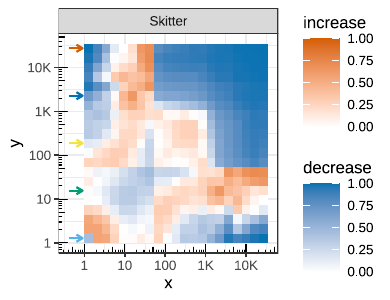}
    \caption{Conditional and Joint Degree Distributions of the \texttt{Skitter} network.}
    \label{fig:skitter}
\end{figure}

\begin{figure}[ht!]
    \centering
    \includegraphics[scale=0.6]{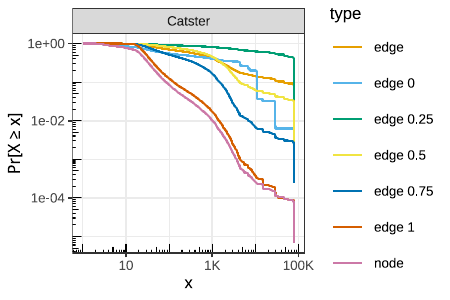}%
  \hfill%
  \includegraphics[scale=0.6]{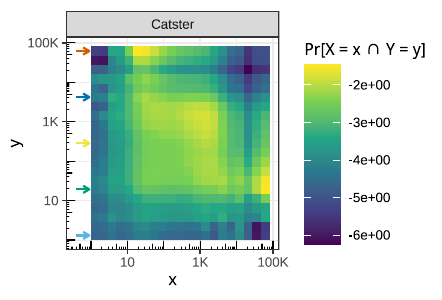}%
  \hfill%
  \includegraphics[scale=0.6]{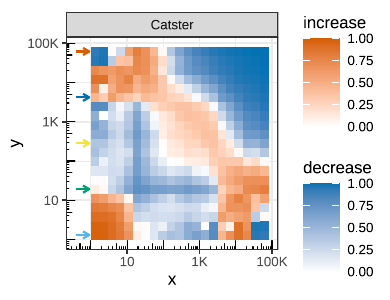}
    \caption{Conditional and Joint Degree Distributions of the \texttt{Catster} network.}
    \label{fig:cat}
\end{figure}

\begin{figure}[ht!]
    \centering
    \includegraphics[scale=0.6]{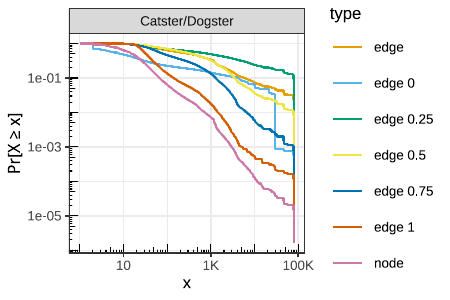}%
  \hfill%
  \includegraphics[scale=0.6]{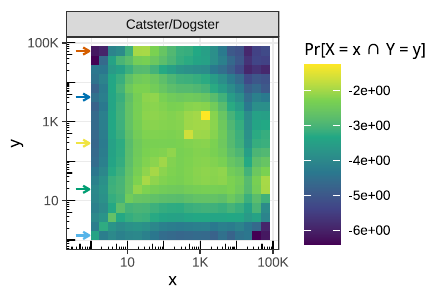}%
  \hfill%
  \includegraphics[scale=0.6]{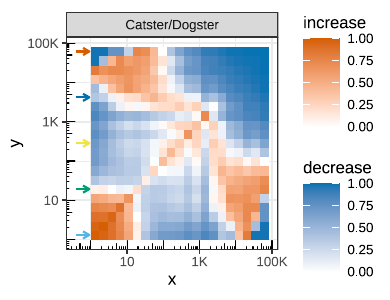}
    \caption{Conditional and Joint Degree Distributions of the \texttt{Catster/Dogster} network.}
    \label{fig:carnivore}
\end{figure}

\begin{figure}[ht!]
    \centering
     \includegraphics[scale=0.6]{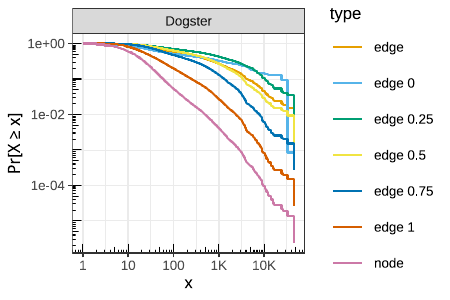}%
  \hfill%
  \includegraphics[scale=0.6]{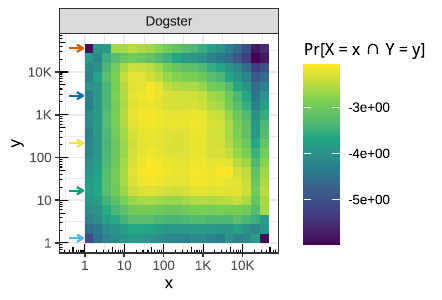}%
  \hfill%
  \includegraphics[scale=0.6]{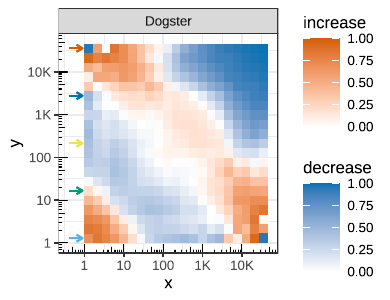}
    \caption{Conditional and Joint Degree Distributions of the \texttt{Dogster} network.}
    \label{fig:dog}
\end{figure}

\begin{figure}[ht!]
    \centering
    \includegraphics[scale=0.6]{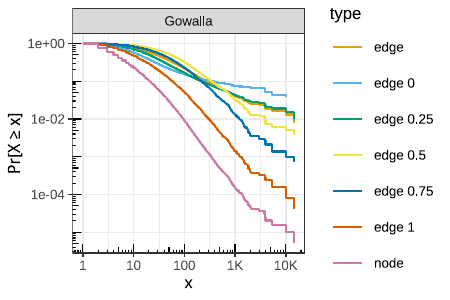}%
  \hfill%
  \includegraphics[scale=0.6]{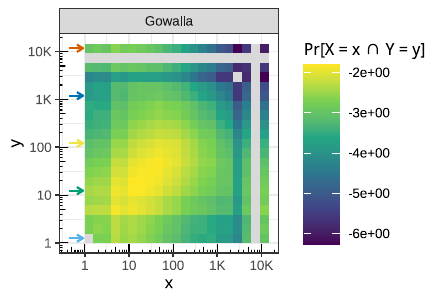}%
  \hfill%
  \includegraphics[scale=0.6]{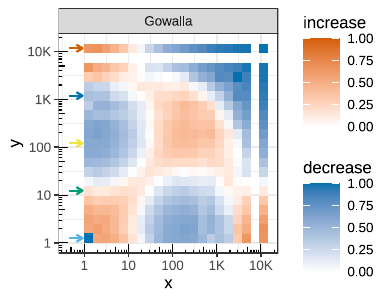}
    \caption{Conditional and Joint Degree Distributions of the \texttt{Gowalla} network.}
    \label{fig:gowalla}
\end{figure}

\begin{figure}[ht!]
    \centering
    \includegraphics[scale=0.6]{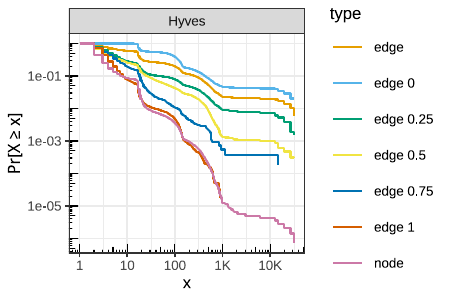}%
  \hfill%
  \includegraphics[scale=0.6]{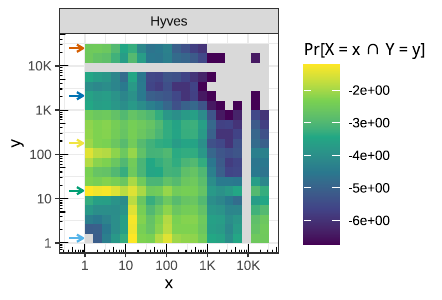}%
  \hfill%
  \includegraphics[scale=0.6]{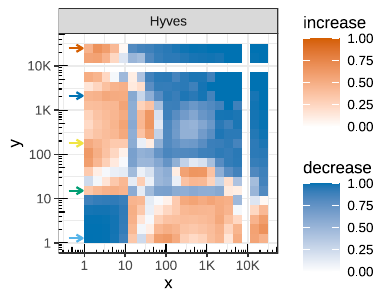}
    \caption{Conditional and Joint Degree Distributions of the \texttt{Hyves} network.}
    \label{fig:hyves}
\end{figure}

\begin{figure}[ht!]
    \centering
    \includegraphics[scale=0.6]{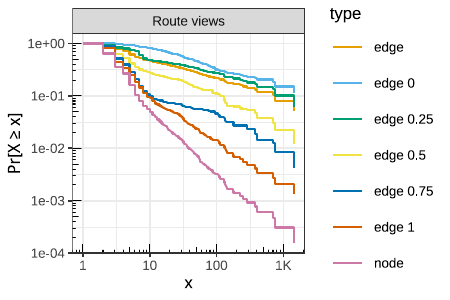}%
  \hfill%
  \includegraphics[scale=0.6]{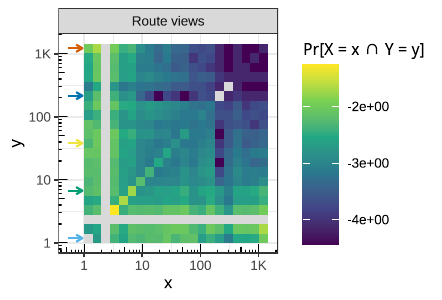}%
  \hfill%
  \includegraphics[scale=0.6]{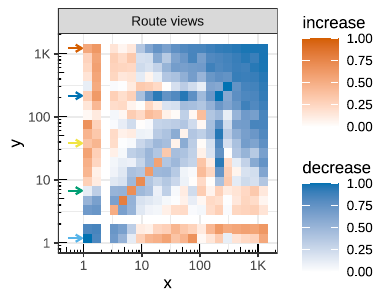}
    \caption{Conditional and Joint Degree Distributions of the \texttt{Route views} network.}
    \label{fig:route-views}
\end{figure}

\begin{figure}[ht!]
    \centering
    \includegraphics[scale=0.6]{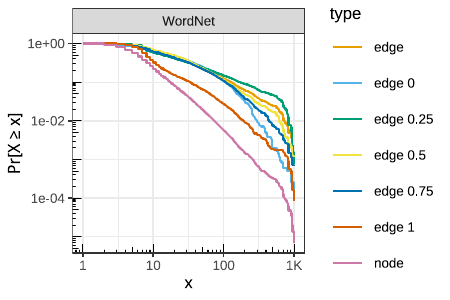}%
  \hfill%
  \includegraphics[scale=0.6]{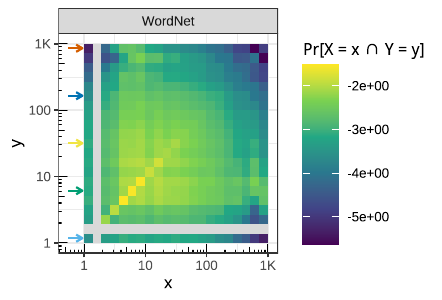}%
  \hfill%
  \includegraphics[scale=0.6]{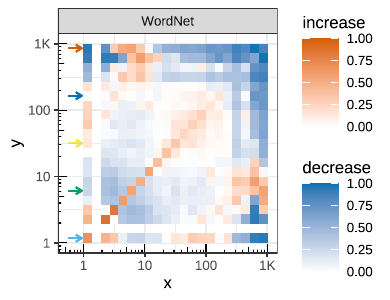}
    \caption{Conditional and Joint Degree Distributions of the \texttt{Wordnet} network.}
    \label{fig:wordnet}
\end{figure}

\begin{figure}[ht!]
    \centering
    \includegraphics[scale=0.6]{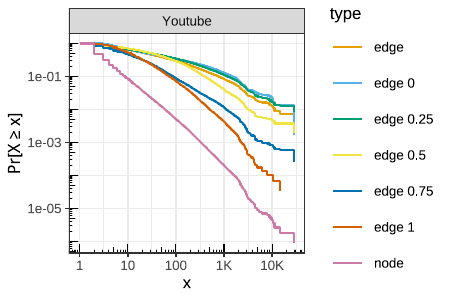}%
  \hfill%
  \includegraphics[scale=0.6]{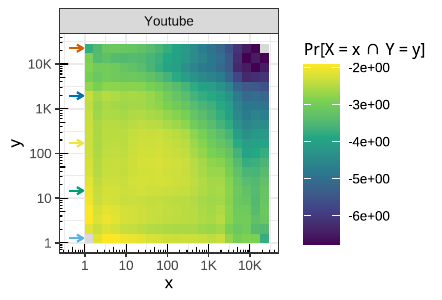}%
  \hfill%
  \includegraphics[scale=0.6]{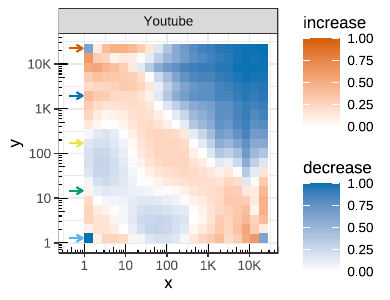}
    \caption{Conditional and Joint Degree Distributions of the \texttt{Youtube} network.}
    \label{fig:youtube}
\end{figure}

\subsection*{Generating TGIRGs}\label{sec:generating-tgirgs}
In this subsection, we describe how we modify the edge generation process in order to allow for the application of the GIRG generation algorithm (implemented by \cite{blasius2022efficiently}) to TGIRGs.
Our aim is to use this linear-time sampling algorithm for classical GIRGs as a black-box in order to generate TGIRGs in linear time. For this, we have to choose vertex weights such that the resulting GIRG is a supergraph of the desired TGIRG (i.e.\ the two graphs share the same vertex set, and the edge set of the TGIRG should be a subset of the edge set of the GIRG), and then perform an additional coin flip for each edge in order to correct the probability.

To achieve this, for all $v \in V$ denote by $w_v$  the weight of vertex $v$ in the TGIRG, and let $W := \sum_{v \in V} w_v$ be the sum of all vertex weights. Assuming $w_u \ge w_v$, in the TGIRG we have the connection probability
%
%
\begin{equation*}
  p_{uv} := \min \left\{ 1, \left( \frac{w_u
        \cdot w_v^\sigma}{W\cdot \rVert x_u - x_v\rVert^d}
    \right)^\alpha \right\}. 
\end{equation*}
For the GIRG we are going to generate, we use weights $w_v'$ with a total weight $W' := \sum_{v \in V} w_v'$ and we have connection probability
\begin{equation*}
  p_{uv}' := \min \left\{ 1, \left( \frac{w_u' \cdot w_v'}{W'\cdot
        \rVert x_u - x_v\rVert^d} \right)^\alpha \right\}.
\end{equation*}
Our goal is to choose weights $w_v'$ such that $p_{uv}' \ge p_{uv}$ for every pair of vertices $u, v \in V$. 
Thus, to show $p_{uv}' \ge p_{uv}$, it suffices to show that $p_{uv}' = 1$ or that
\begin{equation*}
  \frac{w_u' \cdot w_v'}{W'} \ge \frac{w_u \cdot w_v^\sigma}{W}.
\end{equation*}
Without loss of generality, we assume that $w_v \ge 1$ for all $v \in V$. If this is not the case, we can simply scale them, which yields a supergraph and then subsample with the correct weights.


\subsubsection*{Case $\sigma < 1$}

The case $\sigma < 1$ is easy. Let $w_{\min}$ be the minimum weight in the graph and set $w_v' := w_v \cdot w_{\min}^{\sigma - 1}$. As $\sigma < 1$, it holds that $w_{\min}^{\sigma - 1} \ge w_v^{\sigma - 1}$. Thus, we get
\begin{equation*}
  \frac{w_u' \cdot w_v'}{W'} = 
  \frac{w_u \cdot w_v \cdot w_{\min}^{\sigma - 1}}{W} \ge 
  \frac{w_u \cdot w_v \cdot w_{v}^{\sigma - 1}}{W} =
  \frac{w_u \cdot w_v^\sigma}{W}
\end{equation*}
as desired.

\subsubsection*{Case $\sigma > 1$}

The naive way to handle this case would be to scale all weights by the same factor, e.g.\ by $w_{\max}^{\sigma - 1}$. However, this would result in weights that are substantially larger than necessary, which yields a superlinear number of edges in the graph, for the two following reasons. First, if $w_u' \cdot w_v' \ge W'$ we have $p_{uv}' = 1$ and thus even larger weights do not have to be scaled. Secondly (and probably more importantly), the smaller weights are scaled more than necessary.

We deal with this as follows. We define auxiliary weights $w_v''$ as
\begin{equation*}
  w_v'' := w_v \cdot \min\{W^{1 / (\sigma + 1)}, w_v\}^{(\sigma - 1) / 2}.
\end{equation*}
With this, let $W'' := \sum_{v \in V} w_v''$. We claim that $w_v' := w_v'' \cdot \frac{W''}{W}$ yields weights such that $p_{uv}' \ge p_{uv}$.

To see this, first note that we get from $w''$ to $w'$ by scaling every weight with the same factor $\frac{W''}{W}$. Thus, $W' = W'' \cdot \frac{W''}{W}$. With this, we get
\begin{equation*}
  \frac{w_u' \cdot w_v'}{W'} =
  \frac{w_u'' \cdot \frac{W''}{W}  \cdot w_v'' \cdot
    \frac{W''}{W}}{W'' \cdot \frac{W''}{W}}
  = \frac{w_u'' \cdot  w_v''}{W}.
\end{equation*}
It remains to plug in the definition of $w_v''$, making a case distinction for the minimum. If $w_v > W^{1 / (\sigma + 1)}$, we get
\begin{equation*}
  \frac{w_u' \cdot w_v'}{W'} =
  \frac{w_u'' \cdot  w_v''}{W} \ge
  \frac{w_v''^2}{W} = 
  \frac{\left(w_v \cdot W^{\frac{\sigma - 1}{2(\sigma +
          1)}}\right)^2}{W} >
  \frac{\left(W^{\frac{1}{\sigma + 1}} \cdot W^{\frac{\sigma - 1}{2(\sigma +
          1)}}\right)^2}{W} = 1.
\end{equation*}
Thus, if $w_v > W^{1 / (\sigma + 1)}$ we get $p_{uv}' = 1 \ge p_{uv}$ and it remains to consider the case $w_v \le W^{1 / (\sigma + 1)}$. For this, we get $w_v'' = w_v \cdot w_v^{(\sigma - 1) / 2}$. Moreover, as $w_u \ge w_v$, we also have $w_u'' \ge w_u \cdot w_v^{(\sigma - 1) / 2}$ (note the $w_v$ instead of $w_u$ in the last factor). With this, we obtain 
\begin{equation*}
  \frac{w_u' \cdot w_v'}{W'} =
  \frac{w_u'' \cdot  w_v''}{W} \ge
  \frac{w_u \cdot w_v^{(\sigma - 1) / 2} \cdot w_v \cdot w_v^{(\sigma
      - 1) / 2}}{W} =
  \frac{w_u \cdot w_v \cdot w_v^{\sigma - 1}}{W} =
  \frac{w_u \cdot w_v^\sigma}{W},
\end{equation*}
which shows that $p_{uv}' \ge p_{uv}$ also holds in this case.

We still need to check that the number of edges in the supergraph (the GIRG with weights $(w_v')_{v\in V}$) is $\Theta(n)$, which is equivalent to checking that $\E[w_v']=\Theta(1)$. We start by computing 
\begin{align*}
    \E[w_v''] &= \E\left[w_v \cdot \min\{W^{\tfrac{1}{\sigma + 1}}, w_v\}^{\tfrac{\sigma-1}{2}}\right] \\
    &= \Theta\left( \int_1^{W^{\tfrac{1}{\sigma + 1}}} w_v \cdot w_v^{\tfrac{\sigma-1}{2}} \cdot w_v^{-\tau} dw_v + W^{\tfrac{\sigma-1}{2(\sigma+1)}} \int_{W^{\tfrac{1}{\sigma + 1}}}^{\infty} w_v^{1-\tau} dw_v \right) \\
    &= \Theta\left( \int_1^{W^{\tfrac{1}{\sigma + 1}}} w_v^{\tfrac{\sigma+1-2\tau}{2}} dw_v + W^{\tfrac{\sigma-1}{2(\sigma+1)}} \cdot W^{\tfrac{2-\tau}{\sigma+1}} \right) \\
    &= \Theta\left( \int_1^{W^{\tfrac{1}{\sigma + 1}}} w_v^{\tfrac{\sigma+1-2\tau}{2}} dw_v + W^{\tfrac{\sigma+3-2\tau}{2(\sigma+1)}} \right),
\end{align*}
where for the third equality we used that $\tau>2$. Now, note that since $\sigma<\tau-1$ (and $\tau>2$), we have $\sigma+3-2\tau<2-\tau<0$ (and equivalently $\tfrac{\sigma+1-2\tau}{2}<-1$), and hence
\[
\int_1^{W^{\tfrac{1}{\sigma + 1}}} w_v^{\tfrac{\sigma+1-2\tau}{2}} dw_v = \Theta(1)
\]
and 
\[
W^{\tfrac{\sigma+3-2\tau}{2(\sigma+1)}} = o(1).
\]
Therefore we get $\E[w_v'']=\Theta(1)$, which in particular implies that $\tfrac{W''}{W}=\Theta(1)$ and hence we also have $\E[w_v']=\Theta(1)$ as desired.

\subsubsection*{Average Degree}\label{sec:average-degree}

In order to approximate the desired average degree heuristically, for each parameter configuration, using a binary search approach, we iteratively generate four instances of TGIRGs and select the last generated instance.

\paragraph*{Acknowledgments}
M.K., J.L.\ and U.S.\ gratefully acknowledge support by the Swiss National Science Foundation [grant number
200021 192079].

T.B.\ gratefully acknowledges support by the Deutsche Forschungsgemeinschaft (DFG, German Research
Foundation) -- grant number 524989715.

\subsection*{Author contributions}
\textbf{Conceptualization:} Johannes Lengler.
\\
\textbf{Data Curation:} Thomas Bläsius.
\\
\textbf{Formal Analysis:} Marc Kaufmann, Ulysse Schaller, Thomas Bläsius and Johannes Lengler.
\\
\textbf{Funding Acquisition:} Johannes Lengler and Thomas Bläsius.
\\
\textbf{Investigation:} Marc Kaufmann, Ulysse Schaller, Thomas Bläsius and Johannes Lengler.
\\
\textbf{Methodology:} Marc Kaufmann, Ulysse Schaller, Thomas Bläsius and Johannes Lengler.
\\
\textbf{Project Administration:} Marc Kaufmann and Ulysse Schaller.
\\
\textbf{Software:} Thomas Bläsius.
\\
\textbf{Supervision:} Thomas Bläsius and Johannes Lengler.
\\
\textbf{Validation:} Thomas Bläsius.
\\
\textbf{Visualization:} Thomas Bläsius was the lead and implemented the visualizations, with all authors contributing to the visualization concepts.
\\
\textbf{Writing - Original Draft Preparation:} Marc Kaufmann and Ulysse Schaller.
\\
\textbf{Writing - Review \& Editing:} Marc Kaufmann, Ulysse Schaller, Thomas Bläsius and Johannes Lengler.


\bibliography{references}

\end{document}